\newif\ifprintVersion   
\newif\ifprofessionalPrint 
\newif\iffancyTheorems  
\newif\ifboldNumberSets 
\newif\ifbachelorThesis 
\newcommand*{\printTitle}{}
\newcommand*{\printGermanTitle}{}
\newcommand*{\myTitle}[2]{\renewcommand*{\printTitle}{#1}\renewcommand*{\printGermanTitle}{#2}}
\newcommand*{\printTitleBold}{\textbf{\printTitle}}
\newcommand*{\printAuthor}{}
\newcommand*{\myName}[1]{\renewcommand*{\printAuthor}{#1}}
\newcommand*{\printProgram}{}
\newcommand*{\myProgram}[1]{\renewcommand*{\printProgram}{#1}}
\newcommand*{\printDateReceived}{}
\newcommand*{\dateOfHandingIn}[1]{\renewcommand*{\printDateReceived}{#1}}
\newcommand*{\printSubject}{}
\newcommand*{\mySubject}[1]{\renewcommand*{\printSubject}{#1}}
\newcommand*{\printKeywords}{}
\newcommand*{\myKeywords}[1]{\renewcommand*{\printKeywords}{#1}}
\newcommand*{\printNameOfSupervisor}{}
\newcommand*{\nameOfMySupervisor}[1]{\renewcommand*{\printNameOfSupervisor}{#1}}
\newcommand*{\printAdditionalExaminers}{}
\newcommand*{\additionalExaminers}[1]{\renewcommand*{\printAdditionalExaminers}{#1}}
\newlength{\extraborderlength}
\newcommand*{\extraBorder}[1]{\setlength{\extraborderlength}{#1}}
\newlength{\mybindingcorrection}
\newcommand*{\bindingCorrection}[1]{\setlength{\mybindingcorrection}{#1}} 
\newlength{\myparindent}
\newlength{\myparskip}
\definecolor{stroke1}{HTML}{2574A9} 
\colorlet{captionlabel}{black}
\colorlet{footerpagenr}{black}
\colorlet{footerchapter}{stroke1}
\colorlet{footerchaptername}{black}
\colorlet{footersection}{stroke1}
\colorlet{footersectionname}{black}
\colorlet{chapternumber}{stroke1}
\newlength{\mypaperwidth}
\newlength{\mypaperheight}
\newlength{\mybodywidth}
\newlength{\mybodyheight}
\newlength{\myoutermargin}
\newlength{\mytopmargin}
\newlength{\myinnermargin}
\newlength{\mybottommargin}
\newcommand{\goldenratio}{1.618}
\newlength{\myheadsep} 
\newlength{\myfootskip} 
\newlength{\mymargininnersep} 
\newlength{\mymarginoutersep} 
\newlength{\mymarginwidth} 
\newlength{\mymarginwidthwithinnersep} 
\normalfont\textsf{\textbf{\color{footerchapter}\chaptername\ \thechapter}}
    \Ifstr{\rightmark}{\leftmark}%
    {%
        \begin{minipage}[b]{\mymarginwidth}%
            \small\raggedright\normalfont\textsf{\textbf{\color{footersection}Chapter\ \thechapter}}%
        \end{minipage}%
    }%
    {%
        \begin{minipage}[b]{\mymarginwidth}%
            \small\raggedright\normalfont\textsf{\textbf{\color{footersection}Section\ \thesection}}%
        \end{minipage}%
    }%
\normalfont\color{footerpagenr}\textbf{\thepage}%
\normalfont\color{footerpagenr}\textbf{\thepage}%
\newlength{\mytmpa}
\newlength{\mytmpb}
\renewcommand*{\partlineswithprefixformat}[3]%
{%
    #2
    \thispagestyle{empty}
    \setlength{\mytmpa}{0.618\mypaperwidth}%
    \setlength{\mytmpb}{0.382\mypaperheight}%
    \ifprintVersion
        \ifprofessionalPrint
            \setlength{\mytmpa}{0.618\mypaperwidth + \mybindingcorrection + \extraborderlength}%
            \setlength{\mytmpb}{0.382\mypaperheight + \extraborderlength}%
        \fi
    \fi
    \begin{tikzpicture}[overlay, remember picture]%
        \node [inner sep = 0, outer sep = 0, anchor = north] at (current page.north west)%
        {%
            \begin{tikzpicture}[overlay, remember picture]%
            \draw[color = stroke1, line width = 0.7 mm] (\mytmpa, 0) -- (\mytmpa, -\mytmpb);%
            \end{tikzpicture}%
        };%
        \node (align) [align = right, below = \mytmpb - 2 ex, inner sep = 0, outer sep = 0, anchor = north west] at (current page.north west)%
        {%
            \hspace{\mytmpa}\hspace{0.5 em}\partname\ \thepart\\[1 ex]
            \color{stroke1}#3%
        };%
    \end{tikzpicture}%
}
\renewcommand*{\chapterlinesformat}[3]%
{%
    \Ifnumbered{#1}{\setbool{chapterHasANumber}{true}}{\setbool{chapterHasANumber}{false}}%
    \Ifstr{#2}{}{\setbool{chapterHasAStar}{true}}{\setbool{chapterHasAStar}{false}}%
    \ifboolexpr{bool{chapterHasANumber} and not bool{chapterHasAStar}}%
    {%
        \begin{tikzpicture}[overlay, remember picture]%
            \node [right = \myinnermargin, below = \mytopmargin, inner sep = 0, outer sep = 0, anchor = north west] (numbernode) at (current page.north west)%
            {%
                \hspace{\myinnermargin}%
                \sffamily\fontsize{60}{60}\selectfont%
                \color{chapternumber}%
                \thechapter%
            };%
            \node [inner sep = 0, outer sep = 0, anchor = north west] at (numbernode.south west)%
            {%
                \begin{tikzpicture}[overlay, remember picture]%
                    \draw[color = stroke1, line width = 0.7 mm] (\myinnermargin, -1 ex) -- (\paperwidth, -1 ex);%
                \end{tikzpicture}%
            };%
            \node (align) [text width = \textwidth - 2 cm, align = right, right = \myinnermargin + \mybodywidth, inner sep = 0, outer sep = 0, anchor = east] at (numbernode.west)%
            {%
                #3%
            };%
        \end{tikzpicture}%
    }%
    {%
        \begin{tikzpicture}[overlay, remember picture]%
            \node [right = \myinnermargin, below = \mytopmargin, inner sep = 0, outer sep = 0, anchor = north west] (numbernode) at (current page.north west)%
            {%
                \hspace{\myinnermargin}%
                \sffamily\fontsize{60}{60}\selectfont%
                \color{white}%
                \thechapter%
            };%
            \node [inner sep = 0, outer sep = 0, anchor = north west] at (numbernode.south west)%
            {%
                \begin{tikzpicture}[overlay, remember picture]%
                    \draw[color = stroke1, line width = 0.7 mm] (\myinnermargin, -1 ex) -- (\paperwidth, -1 ex);%
                \end{tikzpicture}%
            };%
            \node (align) [align = left, right = \myinnermargin, inner sep = 0, outer sep = 0, anchor = south west] at (numbernode.south west)%
            {%
                #3%
            };%
        \end{tikzpicture}%
    }%
}
\DeclareCiteCommand{\conline}[\mkbibbrackets]
{\usebibmacro{prenote}}
{\usebibmacro{citeindex}%
  \usebibmacro{citenum}}
{\multicitedelim}
{\usebibmacro{postnote}}       
\newcommand*{\colloquialDegreeName}{Master}
\newcommand*{\colloquialDegreeNameLowercase}{master}
\newcommand*{\degreeAbbreviation}{M.}
    \renewcommand*{\colloquialDegreeName}{Bachelor}
    \renewcommand*{\colloquialDegreeNameLowercase}{bachelor}
    \renewcommand*{\degreeAbbreviation}{B.}
    \def\IfEmptyTF#1%
\relax\detokenize{#1}\relax%
\NewDocumentCommand{\mathOrText}{m}
{%
    \ensuremath{#1}\xspace%
}
\let\originalleft\left
\let\originalright\right
\renewcommand{\left}{\mathopen{}\mathclose\bgroup\originalleft}
\renewcommand{\right}{\aftergroup\egroup\originalright}
    \DeclareRobustCommand{\bfseries}%
    {%
        \not@math@alphabet\bfseries\mathbf%
        \fontseries\bfdefault\selectfont%
        \boldmath%
    }
\crefname{ineq}{inequality}{inequalities}
\crefname{term}{term}{terms}
\let\oldfootnote\footnote
\newlength{\spaceBeforeFootnote} 
\newlength{\spaceAfterFootnote}  
\RenewDocumentCommand{\footnote}{o o o m}%
{%
    \IfNoValueTF{#1}%
    {%
        \oldfootnote{#4}%
    }%
    {%
        \setlength{\spaceBeforeFootnote}{\IfEmptyTF{#1}{0}{#1} em}%
        \IfNoValueTF{#2}%
        {%
            \hspace*{\spaceBeforeFootnote}\oldfootnote{#4}%
        }%
        {%
            \setlength{\spaceAfterFootnote}{\IfEmptyTF{#2}{0}{#2} em}%
            \hspace*{\spaceBeforeFootnote}\IfNoValueTF{#3}{\oldfootnote{#4}}{\oldfootnote[#3]{#4}}\hspace*{\spaceAfterFootnote}%
        }%
    }%
}
    \declaretheoremstyle
    [
        spaceabove = \topsep,
        spacebelow = \topsep,
        headfont = \bfseries,
        headformat = \textcolor{stroke1}{$\blacktriangleright$} \NAME~\NUMBER \NOTE,
        notefont = \bfseries,
        notebraces = {(}{)},
        bodyfont = \normalfont,
        postheadspace = 0.5 em,
        qed = \textcolor{stroke1}{\bfseries$\blacktriangleleft$},
    ]
    {myTheoremStyle}
    \declaretheorem
    [
        style = myTheoremStyle,
        name = Conjecture,
        numberwithin = chapter,
    ]
    {conjecture}
    \declaretheorem
    [
        style = myTheoremStyle,
        name = Proposition,
        sharenumber = conjecture,
    ]
    {proposition}
    \declaretheorem
    [
        style = myTheoremStyle,
        name = Claim,
        sharenumber = conjecture,
    ]
    {claim}
    \declaretheorem
    [
        style = myTheoremStyle,
        name = Lemma,
        sharenumber = conjecture,
    ]
    {lemma}
    \declaretheorem
    [
        style = myTheoremStyle,
        name = Corollary,
        sharenumber = conjecture,
    ]
    {corollary}
    \declaretheorem
    [
        style = myTheoremStyle,
        name = Theorem,
        sharenumber = conjecture,
    ]
    {theorem}
    \declaretheorem
    [
        style = myTheoremStyle,
        name = Definition,
        sharenumber = conjecture,
    ]
    {definition}
    \declaretheorem
    [
        style = myTheoremStyle,
        name = Example,
        sharenumber = conjecture,
    ]
    {example}
    \declaretheorem
    [
        style = myTheoremStyle,
        name = Remark,
        sharenumber = conjecture,
    ]
    {remark}
    \theoremstyle{plain}
\NewDocumentCommand{\functionTemplate}{m m m m o}%
{%
    \IfNoValueTF{#5}%
    {%
        \mathOrText{#1\left#2{#4}\right#3}%
    }%
    {%
        \mathOrText{#1#5#2{#4}#5#3}%
    }%
}
\newcommand*{\leftBracketType}{(}
\newcommand*{\rightBracketType}{)}
\NewDocumentCommand{\createFunction}{m m o o}%
{%
    \renewcommand*{\leftBracketType}{\IfNoValueTF{#3}{(}{#3}}%
    \renewcommand*{\rightBracketType}{\IfNoValueTF{#4}{)}{#4}}%
    \NewDocumentCommand{#1}{o o}%
    {%
        \IfNoValueTF{##1}%
        {%
            \mathOrText{#2}%
        }%
        {%
            \functionTemplate{#2}{\leftBracketType}{\rightBracketType}{##1}[##2]%
        }%
    }%
}
\DeclareDocumentCommand{\probabilisticFunctionTemplate}{m m O{} o}
{%
    \functionTemplate{#1}%
    {\lbrack}%
    {\rbrack}%
    {#2\IfEmptyTF{#3}{}{\ \IfNoValueTF{#4}{\left}{#4}\vert\ \vphantom{#2}#3\IfNoValueTF{#4}{\right.}{}}}%
    [#4]%
}
    \newcommand*{\indicatorFunctionSymbol}{\mathbf{1}}
    \newcommand*{\indicatorFunctionSymbol}{\mathds{1}}
\RenewDocumentCommand{\Pr}{m O{} o}%
{%
    \probabilisticFunctionTemplate{\mathrm{Pr}}{#1}[#2][#3]%
}
\NewDocumentCommand{\E}{m O{} o}%
{%
    \probabilisticFunctionTemplate{\mathrm{E}}{#1}[#2][#3]%
}
\NewDocumentCommand{\Var}{m O{} o}%
{%
    \probabilisticFunctionTemplate{\mathrm{Var}}{#1}[#2][#3]%
}
\DeclareDocumentCommand{\bigO}{m o}%
{%
    \functionTemplate{\mathrm{O}}{(}{)}{#1}[#2]%
}
\DeclareDocumentCommand{\smallO}{m o}%
{%
    \functionTemplate{\mathrm{o}}{(}{)}{#1}[#2]%
}
\DeclareDocumentCommand{\bigTheta}{m o}%
{%
    \functionTemplate{\upTheta}{(}{)}{#1}[#2]%
}
\DeclareDocumentCommand{\bigOmega}{m o}%
{%
    \functionTemplate{\upOmega}{(}{)}{#1}[#2]%
}
\DeclareDocumentCommand{\smallOmega}{m o}%
{%
    \functionTemplate{\upomega}{(}{)}{#1}[#2]%
}
\DeclareDocumentCommand{\eulerE}{o}%
{%
    \mathOrText{\mathrm{e}\IfNoValueTF{#1}{}{^{#1}}}%
}
\DeclareDocumentCommand{\poly}{m o}%
{%
    \functionTemplate{\mathrm{poly}}{(}{)}{#1}[#2]%
}
\createFunction{\id}{\mathrm{id}}
\NewDocumentCommand{\ind}{m o o}%
{%
    \IfNoValueTF{#2}%
    {%
        \mathOrText{\indicatorFunctionSymbol_{#1}}%
    }%
    {%
        \functionTemplate{\indicatorFunctionSymbol_{#1}}{(}{)}{#2}[#3]%
    }%
}
\DeclareDocumentCommand{\dom}{m o}%
{%
    \functionTemplate{\mathrm{dom}}{(}{)}{#1}[#2]%
}
\DeclareDocumentCommand{\rng}{m o}%
{%
    \functionTemplate{\mathrm{rng}}{(}{)}{#1}[#2]%
}
\DeclareDocumentCommand{\d}{o}%
{%
    \mathrm{d}\IfNoValueTF{#1}{}{^{#1}}%
}
\DeclareDocumentCommand{\set}{m m o}%
{
    \mathOrText{\IfNoValueTF{#3}{\left}{#3}\{#1\ \IfNoValueTF{#3}{\left}{#3}\vert\
    \vphantom{#1}#2\IfNoValueTF{#3}{\right.}{}\IfNoValueTF{#3}{\right}{#3}\}}
}      
\begin{document}

    \frontmatter

\ifprintVersion
    \ifprofessionalPrint
        \newgeometry
        {
            textwidth = 134 mm,
            textheight = 220 mm,
            top = 38 mm + \extraborderlength,
            inner = 38 mm + \mybindingcorrection + \extraborderlength,
        }
    \else
        \newgeometry
        {
            textwidth = 134 mm,
            textheight = 220 mm,
            top = 38 mm,
            inner = 38 mm + \mybindingcorrection,
        }
    \fi
\else
    \newgeometry
    {
        textwidth = 134 mm,
        textheight = 220 mm,
        top = 38 mm,
        inner = 38 mm,
    }
\fi

\begin{titlepage}
    \sffamily
    \begin{center}
        \includegraphics[height = 3.2 cm]{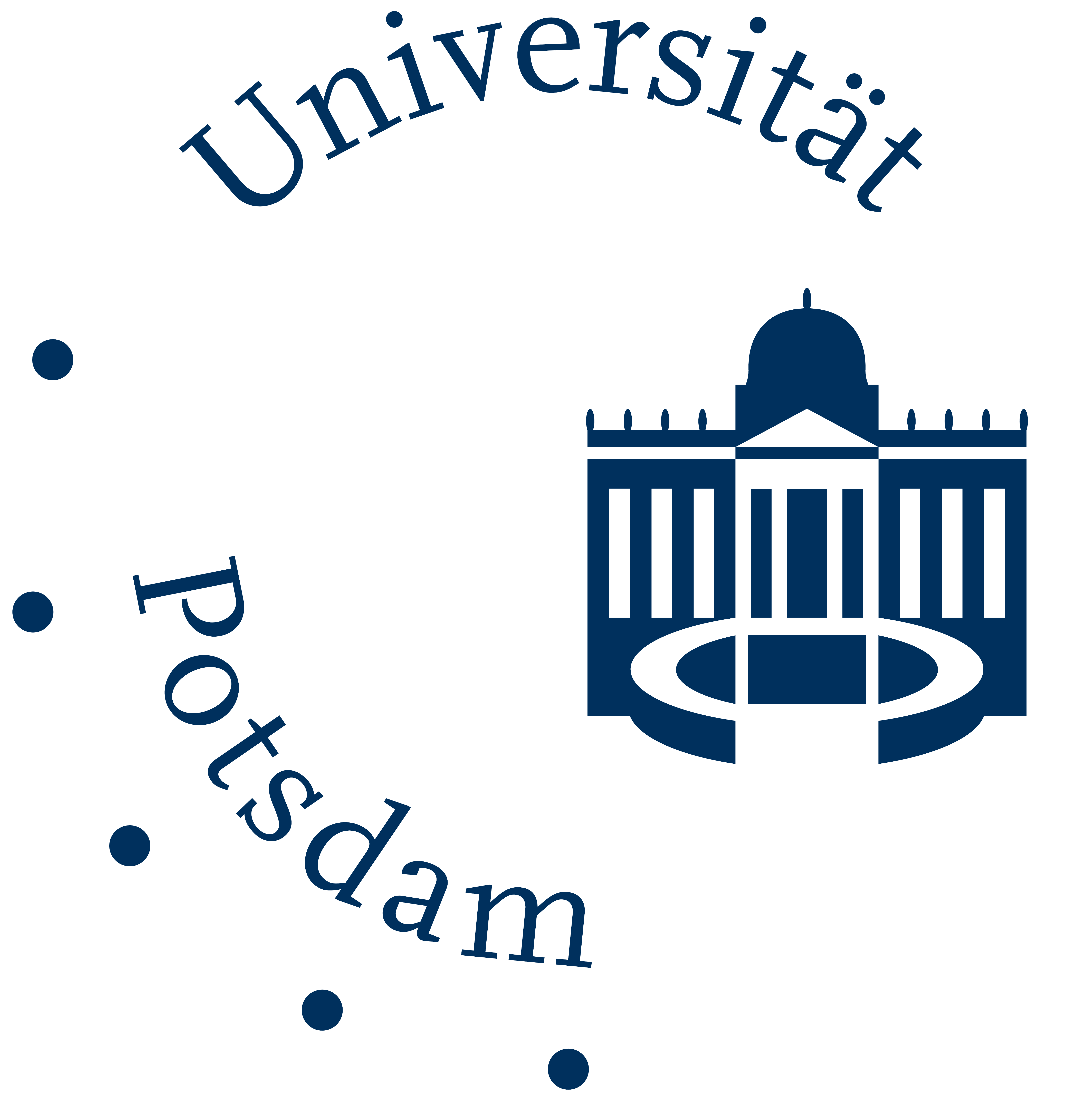} \hfill \includegraphics[height = 3 cm]{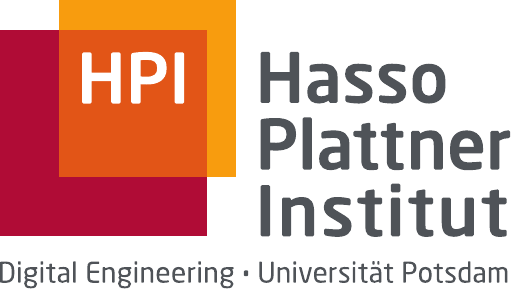}\\
        \vfil
        {\LARGE
            \rule[1 ex]{\textwidth}{1.5 pt}
            \onehalfspacing\printTitleBold\\[1 ex]
            {\vspace*{-1 ex}\Large \printGermanTitle}\\
            \rule[-1 ex]{\textwidth}{1.5 pt}
        }
        \vfil
        {\Large\textbf{\printAuthor}}
        \vfil
        {\large Universitäts\colloquialDegreeNameLowercase arbeit\\[0.25 ex]
        zur Erlangung des akademischen Grades}\\[0.25 ex]
        \bigskip
        {\Large \colloquialDegreeName{} of Science}\\[0.5 ex]
        {\large\emph{(\degreeAbbreviation\,Sc.)}}\\
        \bigskip
        {\large im Studiengang\\[0.25 ex]
        \printProgram}
        \vfil
        {\large eingereicht am \printDateReceived{} am\\[0.25 ex]
        Fachgebiet Algorithm Engineering der\\[0.25 ex]
        Digital-Engineering-Fakultät\\[0.25 ex]
        der Universität Potsdam}
    \end{center}
    
    \vfil
    \begin{table}[h]
        \centering
        \large
        \sffamily 
        {\def\arraystretch{1.2}
            \begin{tabular}{>{\bfseries}p{3.8 cm}p{5.3 cm}}
                Gutachter               & \printNameOfSupervisor\\
                Betreuer                & \printAdditionalExaminers
            \end{tabular}
        }
    \end{table}
\end{titlepage}

\restoregeometry

    \pagestyle{plain}

    \addchap{Abstract}
    The strategic selection of resources by selfish agents has long been a key area of research, with Resource Selection Games and Congestion Games serving as prominent examples. In these traditional frameworks, agents choose from a set of resources, and their utility depends solely on the number of other agents utilizing the same respective resource, treating all agents as indistinguishable or anonymous. Only recently, the study of the Resource Selection Game with heterogeneous agents has begun, meaning agents have a type and the fraction of agents of their type at their resource is the basis of their decision-making. \\
In this work, we initiate the study of the Resource Selection Game with heterogeneous agents in combination with single-peaked utility functions, as some research suggests that this may represent human decision-making in certain cases.\\
We conduct a comprehensive analysis of the game's stability within this framework.
We provide tight bounds that specify for which peak values equilibria exist across different dynamics on cycles and binary trees. 
On arbitrary graphs, in a setting where agents lack information about the selection of other agents, we provide tight bounds for the existence of equilibria, given that the utility function is linear on both sides of the peak. Agents possessing this information on arbitrary graphs creates the sole case where our bounds are not tight, instead, we narrow down the cases in which the game may admit equilibria and present how several conventional approaches fall short in proving stability.

    \selectlanguage{ngerman}
    \addchap{Zusammenfassung}
    Die strategische Auswahl von Ressourcen durch Agenten, die versuchen, ihre individuelle Zufriedenheit zu maximieren, ist seit langem ein zentrales Forschungsthema, beispielsweise in Form von ‘Resource Selection Games’ und ‘Congestion Games’. In diesen traditionellen Modellen wählen Agenten aus einer Menge von Ressourcen, wobei ihre Zufriedenheit an einer Ressource ausschließlich von der Anzahl der Agenten abhängt, die dieselbe Ressource nutzen. Dabei werden alle Agenten als ununterscheidbar oder anonym betrachtet. Erst kürzlich begann die Untersuchung des ‘Resource Selection Games’ mit heterogenen Agenten, was bedeutet, dass Agenten einen Typ haben und der Anteil der Agenten ihres Typs an einer Ressource die Grundlage ihrer Evaluation jener Ressource bildet. \\
In dieser Arbeit initiieren wir die Untersuchung des ‘Resource Selection Games’ mit heterogenen Agenten in Kombination mit single-peaked Präferenzen, da einige Forschungsergebnisse darauf hindeuten, dass dies in bestimmten Szenarien den menschlichen Entscheidungsprozess akkurater modelliert. \\
Wir führen eine umfassende Analyse der Stabilität des Spiels in diesem Rahmen durch. Dabei geben wir bezogen auf $\Lambda$ exakte Schranken für die Stabilität des Spiels auf Zyklen und Binärbäumen in verschiedenen Dynamiken an.  Für beliebige Graphen beweisen wir exakte Schranken für die Existenz von Equilibria, wenn Agenten nicht über die genaue Wahl der anderen Agenten Bescheid wissen und die utility-Funktion auf beiden Seiten des Maximums linear ist. Im Fall, dass Agenten über die genaue Wahl der anderen Agenten Bescheid wissen, demonstrieren wir, wie mehrere herkömmliche Ansätze daran scheitern, die Stabilität zu beweisen, und schränken den Bereich ein, in dem Equilibria existieren könnten.
    \selectlanguage{american}

    \addchap{Acknowledgments}
    I would like to express my sincere gratitude to all those who supported and accompanied me throughout the writing of this thesis. \\
First, working alongside Jannes, Tim, Tobi, and Helena made the experience not only more productive but also more enjoyable. Special thanks to Jannes for helping me discuss some trickier parts. I also extend my appreciation to my friend Jonas and my parents for proofreading parts of this thesis. \\
I am especially grateful to my supervisors, Pascal and Simon, whose invaluable feedback not only shaped this thesis but also helped me keep my findings in perspective. Their constant support was crucial in bringing together my research into a cohesive whole.

    \setuptoc{toc}{totoc}
    \tableofcontents

    \pagestyle{headings}
    \mainmatter

    \chapter{Introduction}
    \newcommand{\G}{\mathbf{G}}
\newcommand{\res}{\mathbf{Q}}
\newcommand{\act}{\mathbf{A}}
\newcommand{\edg}{\mathbf{E}}
\newcommand{\red}{\mathbf{R}}
\newcommand{\blue}{\mathbf{B}}
\newcommand{\strat}{\mathbf{s}}

The study of resource selection in multi-agent environments has a long history across disciplines such as Artificial Intelligence, Operations Research, and Theoretical Computer Science. As such, resources may encompass a broad range of entities, e.g., facilities like schools or hospitals, printers or compute servers. 
The Resource Selection Game serves as a game theoretic adaptation of the above, providing a framework for the analysis of many real-world scenarios. Depending on the specific scenario we aim to model, this framework can be modified accordingly. One notable variation is when the society of agents is heterogeneous. In this variation, similar to the Schelling Game based on Schelling's model for residential segregation \cite{schelling_dynamic_1971}, agents are assigned types. This type then plays a pivotal role in the resource selection process, opposed to other models in which agents only care about the total number of agents at a certain resource. \\
One example for a process that abides to the rules of the Resource Selection Game with a heterogeneous society, would be families choosing a school in their neighborhood for their child. In doing so, they may consider specific attributes of the schools, such as the representation of their ethnic group to a certain extent \cite{burgess_parallel_2005, hailey_racial_2022}. In this case, the schools are the resources from which the families (agents) make their selection. Two key observations in this scenario, which the Resource Selection Game framework with heterogeneous types captures, are that 
\begin{enumerate}
    \item families evaluate a school based on the other families, who have chosen it (specifically, based on their type) and
    \item not every family has access to every school (e.g., due to the distance being too large).
\end{enumerate}
Moreover, we assume that all agents share the same preference for a specific fraction of their type being represented at their chosen resource. By analyzing the Resource Selection Game as an abstraction of this process, we can derive insights into the stability and the potential segregation which occurs in scenarios like the one described above.\\
Additionally, there are different paradigms by which agents may decide to choose a certain resource based on agents' types. One common assumption has been that agents try to maximize the number of agents of their type at their resource. However, Social Science survey studies suggest that this assumption should be challenged \cite{smith_general_2019}. As such, game theoretic models have been studied in combination with several models of the agents' preference. In line with that, we expand the research on the Resource Selection Game with a heterogeneous society by studying it in combination with single-peaked utility functions. Concretely, we base our analysis on the general model where agents have an ideal fraction of same-type agents at their resource that is  in the interval $(0,1)$.

\section{Related Work}
    The problem of selecting resources is a classical topic in combinatorial optimization, with many variations exemplified by problems such as scheduling, packing, and covering \cite{papadimitriou_combinatorial_2013}. The study of strategic resource selection was pioneered with the study of Congestion Games \cite{rosenthal_class_1973}, where agents choose from a predefined set of resources, and their associated cost depends exclusively on the number of agents sharing the selected resources. Subsequent developments have introduced weighted variants and agent-specific cost functions, allowing for more complex scenarios \cite{milchtaich_congestion_1996}. Notable applications include strategic path selection in networks \cite{roughgarden_how_2002, anshelevich_price_2004} and the problem of selfish job scheduling on servers \cite{tardos_selfish_2007}. Additionally, competitive facility location models, where either facilities vie for clients \cite{vetta_nash_2002} or clients compete for access to facilities \cite{kohlberg_equilibrium_1983, peters_hotellings_2018, krogmann_two-stage_2021, krogmann_strategic_2023}, can be viewed as another form of strategic resource selection. Group activity selection provides another example, where agents choose activities based on their preferences and the participation of others \cite{darmann_group_2012, igarashi_group_2017}. Across all these frameworks, an agent’s utility or cost is typically influenced by how many other agents are utilizing the same resources. \\
    In contrast to this, our work bears more resemblance to models that involve heterogeneous agents. Recently, game-theoretic approaches have explored the formation of networks by homophilic agents \cite{bullinger_network_2022}. One of the most pertinent models to our study is Schelling’s model of residential segregation \cite{schelling_dynamic_1971}, where agents of different types strategically choose locations in a residential area. These agents follow a threshold-based utility function, achieving maximum utility when at least a certain fraction of neighbors shares their type. Game-theoretic extensions of Schelling’s model, known as Schelling Games, have been the subject of recent investigations \cite{chauhan_schelling_2018, echzell_convergence_2019}. Moreover, variations of these games, where agents aim to maximize the proportion of same-type neighbors, have gained prominence \cite{agarwal_schelling_2021, bullinger_welfare_2021, kanellopoulos_modified_2021, kanellopoulos_not_2022, bilo_topological_2022}.
    \\
    However, a key distinction remains: in Schelling Games, each resource (i.e., location) can be chosen by at most one agent, meaning that agents’ neighborhoods only partially overlap. Furthermore, the size of these neighborhoods is constrained by the structure of the underlying graph representing the residential area. In contrast, our model allows resources to be shared by an arbitrary number of agents without such fixed neighborhood boundaries, which introduces different dynamics in the strategic selection process.
    \\
    Closely related to our work are Hedonic Diversity Games \cite{bredereck_hedonic_2019, boehmer_individual-based_2020, darmann_hedonic_2021, ganian_hedonic_2022}, where agents of varying types strategically form coalitions and their utility is determined by the proportion of same-type agents within their chosen coalition. While Hedonic Diversity Games allow for individual preferences of agents, our model extends the case in which all agent preferences are identical. In our framework, access to resources can be restricted, effectively generalizing these special cases of Hedonic Diversity Games. Additionally, Hedonic Expertise Games \cite{caskurlu_hedonic_2021}, where agents’ utility increases with the diversity of types in their coalition, have been explored.
    \\
    In our work, we focus on single-peaked utility functions, a concept rooted in single-peaked preferences, originally introduced by \textcite{black_rationale_1948}. These preferences are well-established in the Economics and Game Theory literature. Notably, single-peaked preferences lead to desirable outcomes in contexts such as Hedonic Diversity Games, as well as in voting and social choice theory \cite{walsh_uncertainty_nodate,yu_multiwinner_nodate,betzler_computation_2013, elkind_characterization_2020, brandt_bypassing_2015}. The above-mentioned Schelling Game has also been investigated in the context of single-peaked utility functions \cite{bilo_tolerance_2022, friedrich_single-peaked_2023}.
    \\
    The Resource Selection Game was formally introduced recently by \textcite{harder_strategic_2023}. To date, research on the Resource Selection Game has exclusively focused on models incorporating a monotonous $\tau$-threshold utility function, similar to those found in Schelling’s model for residential segregation \cite{schelling_dynamic_1971}. We note that for $\tau=1$ the threshold function is equivalent to a single-peaked utility function with a peak at $1$ in our model.

\section{Model}
We consider a strategic game, called the \emph{Resource Selection Game} played on a bipartite \emph{accessibility graph} $\G = (\res \cup \act, \edg)$, where $\res \cap \act = \emptyset$, with $\res$ being the set of \emph{resources} and $\act$ the set of \emph{agents}. A resource $q \in \res$ can be used by an agent $a \in \act$ if and only if $\{q,a\} \in \edg$ and may be used by any number of agents at the same time. We use the short hands $|\act|=n$ and $|\res|=k$.\\

Every agent in the Resource Selection Game is either \emph{red} or \emph{blue}. For the set of red agents $\red$ and the set of blue agents $\blue$ it holds that $\red \cup \blue = \act$ and $\red \cap \blue = \emptyset$. We denote $|\red|=r$ and $|\blue|=b$.\\

Formally, for an agent $a$ we denote their \emph{accessible resources} with $Q(a)\subseteq \res$. Similarly, $A(q)$ defines the agents, who have access to the resource $q$. We call $deg_G(a)=|Q(a)|$ the \emph{degree} of $a$ and $deg_G(q)=|A(q)|$ the \emph{degree} of $q$. As a shorthand for the \emph{maximum degree of any resource in $G$} we use $\Delta_G(\res)=max_{q\in\res}(deg_G(q))$. The selected resource of any agent $a$ in a given step is denoted by $\strat(a)$ and is called $a$'s \emph{strategy}. The vector of all agents' strategies $\strat=(\strat(a_1),\strat(a_2),...,\strat(a_n))$, with $a_i\in\act$ and $s(a_i)\in Q(\strat(a_i))$ for $i\in[1,n]$ is called \emph{strategy profile}. The number of agents at a given resource in a given strategy profile is denoted by $\#(q,\strat)=|\{a\in \act|\strat(a) = q\}|$. Similarly, we define $\#_R(q,\strat)=|\{a\in \red|\strat(a) = q\}|$ and $\#_B(q,\strat)=|\{a\in\blue|\strat(a) = q\}|$. To denote the set of agents at a specific resource $q\in\res$ in a strategy profile $\strat$, we write $A(q,\strat)=\{a\in\act\mid\strat(a)=q\}$. We call the 2-tuple $(\#_R(q,\strat),\#_B(q,\strat))$ the \emph{state} of $q$. We present resource \emph{state transitions} with an arrow from the initial state towards the new state in the color of the agent, who joined or left the resource to cause the state transition (e.g., $\color{red} 1 \color{blue} 1 \color{red} \rightarrow 2 \color{blue} 1$ \color{black} visualizes the state transition of a resource with one blue and one red agent, when a red agent joins). As a shorthand for a resource changing its state and then changing again to return to its initial state we use a double-headed arrow (e.g., $\color{red} 1 \color{blue} 1 \color{red} \leftrightarrow 2 \color{blue} 1$ \color{black} to visualize a resource oscillating between a state with one or two red agents and one blue agent because of alternating joins and leaves by red agents). \\

Additionally, let $\rho_R(q,\strat) = \frac{\#_R(q,\strat)}{\#(q,\strat)}$ and
$\rho_B(q,\strat) = \frac{\#_B(q,\strat)}{\#(q,\strat)}$ denote the fraction of red or blue agents at $q$ in $\strat$. We note that these functions are undefined for empty resources (where $\#(q,\strat)=0$). If either $\#_R(q,\strat)=0$ or $\#_B(q,\strat)=0$ holds, we call $q$ \emph{monochromatic}. In particular, if $q$ is monochromatic in $\strat$ and $\#(q,\strat)=m$, we call q \emph{m-monochromatic}.
\\
 An agent's \emph{utility} is its satisfaction at a given resource in a given strategy profile. We diverge from the Schelling Resource Selection Game in that agents are not necessarily homophilic. Instead, agents prefer a fraction $\Lambda \in (0,1)$ of agents of their color at their resource. The utility of an agent then decreases with the fraction of agents of the same color moving further away from $\Lambda$ in either direction. Because of these properties, we call the function to determine the agents' utility in our model \emph{single-peaked} (around $\Lambda$). To formally define this function, we first define the following function.
\begin{restatable}[The function $p$]{definition}{definition_p}
    The function $p$ has domain $[0,1]$ and exactly one peak at some fraction $\Lambda \in (0,1)$. Additionally, every valid instantiation of $p$ has to fulfill the following properties
    \begin{itemize}
        \item $p$ is monotonically increasing in $[0,\Lambda]$ with $p(0)=0$
        \item for all $x \in (\Lambda,1]$ $p(x) = p(\frac{\Lambda(1-x)}{1-\Lambda})$.
    \end{itemize}
    Note that for all $x \in (\Lambda,1]$, it holds that $\frac{\Lambda(1-x)}{1-\Lambda}\in[0,\Lambda]$ \footnote[-0,2][0]{The statement follows directly from transforming the assumption:
       \begin{equation*} 
           x>\Lambda \iff \frac{1}{1-x}>\frac{1}{1-\Lambda} \iff \frac{\Lambda}{1-x}>\frac{\Lambda}{1-\Lambda} \iff \Lambda>\frac{\Lambda(1-x)}{1-\Lambda}.
       \end{equation*}} 
   and thus $p(x)\in[0,1]$. \
\end{restatable}
\begin{figure}
    \centering
    \makebox[0.5\textwidth][c]{\includegraphics{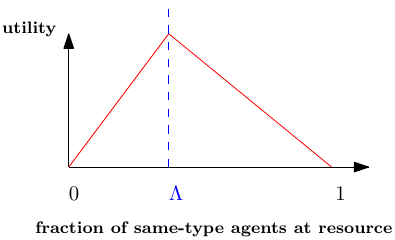}}
    \caption{Example of a linear and single-peaked function $p$.}
    \label{fig:p_demo}
\end{figure}
A visualization of a $p$ function can be seen in \autoref{fig:p_demo}. One specific type of $p$-function, which is a necessary property for our impact-blind main result, is a \emph{linear $p$-function}. A $p$-function is called linear if and only if there exists $m_p\in\mathbb{R}_+$ such that for all $x\in(0,\Lambda]$ it holds that $p(x)=m_p\cdot x$. \\
Using the function $p$ we next define the \emph{utility function} $u$. 
\begin{restatable}[Utility Function]{definition}{definition_utility_function}
    Let $a\in\act$ and $\strat$ be a strategy profile. Then
        \[ 
    u(a,\strat)= \left\{
    \begin{array}{ll} 
        \label{def:p}
          p(\rho_R(\strat(a))) & \text{if } a \in \red \\
          p(\rho_B(\strat(a))) & \text{if } a \in \blue\\
    \end{array} 
    \right. 
    \]
    is $a$'s utility in $\strat$.
\end{restatable}
For a strategy profile $\strat$ with $\strat(a_i) = q$, we use the shorthand $\strat = (q,\bold{s_{-i}})$, where $\bold{s_{-i}} = (\strat(a_1),…\strat(a_{i-1}),\strat(a_{i+1}),…,\strat(a_n))$ is identical to $\strat$ without the $i$-th entry. The \emph{social welfare} of a strategy profile $\strat$ is the sum of all agents' utilities in $\strat$, formally defined as $W(\strat)=\sum_{a\in\act}u(a,\strat)$. Moreover, to denote the social welfare at a specific resource $q\in\res$ within the strategy profile $\strat$, we write $W_q(\strat)=\sum_{a\in A(q,\strat)}u(a,\strat)$.

Additionally, we differentiate between \emph{impact-blind} and \emph{impact-aware} agents. Let $\bold{s'}=(q',\strat_{-i})$ be the result of the jump of agent $i$ from resource $q$ to $q'$, with everything else staying the same, i.e., $\strat(a_j)=\bold{s'}(a_j)$ if and only if $j\neq i$. We call the strategy change from $\strat$ to $\bold{s'}$ an impact-aware improving move for an agent $a_i$ if $u(a_i,(q',\bold{\strat'})) > u(a_i,(q,\strat))$. Meanwhile, an impact-blind improving move for a w.l.o.g. red agent $a_i$ is a strategy change, where $p(\rho_R(q',\strat))>u(a_i,\strat)$. The impact-blindness models a lack of knowledge of agents about the exact number of agents on the other resources, meaning that while agents know the fraction of their type at other resources, they are unaware of how their jump to these resources will alter that resources' fraction. We note that in our model in the impact-blind setting agents never jump to empty resources as for $q\in\res$ with $A(q)=\emptyset$ and a strategy profile $\strat$ the utility they see $p(\rho_R(q,\strat))$ would be undefined. This is in accordance with \autoref{def:p} as $p(0)=0$. We see that a similar statement holds for the impact-aware setting.
\begin{restatable}[]{lemma}{IA_no_jumps_to_empty}
    In the impact-aware setting, agents do not jump to empty resources.
\end{restatable}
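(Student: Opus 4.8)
The plan is to exploit the defining properties of the function $p$, in particular its values at the two endpoints of $[0,1]$. By symmetry it suffices to treat a red agent $a_i$; the blue case is identical with the roles of $\rho_R$ and $\rho_B$ exchanged. So suppose $a_i$ currently occupies a resource $q$ in the strategy profile $\strat$ and consider a candidate jump to an empty resource $q'$, producing the new profile $\strat'$. I want to show this can never be an impact-aware improving move.

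First I would compute the utility $a_i$ actually obtains on the empty resource after the jump. Since $q'$ was empty, after the move it is occupied by $a_i$ alone, so $\#_R(q', \strat') = \#(q', \strat') = 1$ and hence $\rho_R(q', \strat') = 1$. The resulting utility is therefore $p(1)$, and the key observation is that $p(1) = 0$: applying the second property of $p$ with $x = 1 \in (\Lambda, 1]$ gives $p(1) = p\left(\frac{\Lambda(1-1)}{1-\Lambda}\right) = p(0) = 0$, using $p(0) = 0$. This is precisely where the impact-aware case differs from the impact-blind one: an impact-aware agent correctly evaluates the fraction it would create (namely $1$), landing on the decreasing endpoint of $p$ rather than on an undefined value.

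Next I would establish that $p$ is non-negative on its entire domain. On $[0,\Lambda]$ this is immediate from monotonicity together with $p(0)=0$; on $(\Lambda,1]$ it follows because $p(x)$ equals $p$ evaluated at a point of $[0,\Lambda]$, as noted in the footnote to the definition of $p$. Consequently the current utility satisfies $u(a_i, \strat) = p(\rho_R(q, \strat)) \ge 0$, where $\rho_R(q, \strat)$ is well defined because $q$ contains at least $a_i$ and is therefore non-empty.

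Combining the two steps, the utility after the jump equals $p(1) = 0 \le u(a_i, \strat)$, so the jump cannot be strictly improving and is therefore not an impact-aware improving move. I do not expect a genuine obstacle here; the only points that need care are confirming that occupying a previously empty resource forces the joining agent's own-color fraction to be exactly $1$ (so the relevant value is the endpoint $p(1)=0$ and not something near the peak), and that $p$ never takes negative values, so that $0$ can never strictly exceed the current utility.
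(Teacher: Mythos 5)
Your proposal is correct and follows essentially the same route as the paper's proof: the jump to an empty resource yields own-color fraction $1$, hence utility $p(1)=0$, which cannot strictly exceed the current (non-negative) utility. You merely make explicit two facts the paper takes for granted, namely that $p(1)=0$ follows from the symmetry property of $p$ and that $p$ is non-negative on all of $[0,1]$.
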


\begin{proof}
    Let $q'\in\res$ with $A(q')=\emptyset$. For a jump by an agent $a\in\act$ from a resource $q\in\res$ to $q'$, changing the strategy profile from $\strat$ to $\strat'$, it holds that $u(a,(q',\strat'))=p(1)=0\leq u(a,\strat)$ and thus the move is not impact-aware improving.
\end{proof}

We note that an impact-aware improving move may not be an impact-blind improving move. \footnote[-0,2][0]{For $\Lambda=\frac{1}{2}$ let a red agent $a$ jump from a resource $q$ to $q'$, altering the strategy profile from $\strat$ to $\strat'$, with $u(a,\strat)=p(\frac{2}{5})$, $\#_R(q',\strat)=1$ and $\#_B(q',\strat)=2$. Then it holds that $p(\rho_R(q',\strat))<u(a,\strat)<u(a,\strat')$.}

Furthermore, deviating from the Schelling Resource Selection Game \cite{harder_strategic_2023}, an impact-blind improving move may not be impact-aware improving. \footnote[-0,2][0]{To see this, let $\Lambda=\frac{1}{2}$ and let a red agent $a$ jump from a resource $q$ to $q'$, altering the strategy profile from $\strat$ to $\strat'$, with $u(a,\strat)=p(\frac{2}{5})$, $\#_R(q',\strat)=1$ and $\#_B(q',\strat)=1$. Then it holds that $p(\rho_R(q',\strat))>u(a,\strat)>u(a,\strat')$.}

Based on impact-aware and impact-blind improving moves, we define two states of strategy profiles. We say a strategy profile $\strat$ is in \emph{impact-aware-equilibrium} (\emph{IAE}), if no agent has an impact-aware improving move they wish to make. Similarly, a strategy profile $\strat$ is in \emph{impact-blind-equilibrium} (\emph{IBE}), if no agent has an impact-blind improving move they wish to make.

We say that a game in which agents only make impact-blind improving moves has \emph{impact-blind dynamics} and a game in which agents only perform impact-aware improving moves has \emph{impact-aware dynamics}. In a game with impact-blind dynamics, the \emph{IB-FIP} (\emph{impact-blind finite improvement property}) holds if and only if from every strategy profile an IBE is reached after a finite number of steps. We define the \emph{IA-FIP} analogously.

Note, that we only consider unilateral moves, meaning agents can change their strategy without the consent of other agents being necessary.
\newpage
\section{Our Contribution}

\begin{table}[htbp]
    \centering
    \resizebox{\textwidth}{!}{
    \begin{tabular}{|c|c|c|c|c}
        \hline
        \textbf{Stability Results} &Cycle &Binary Tree &Arbitrary Graphs \\
        \hline
        IB & \checkmark & $\Lambda\in[\frac{3}{5},1)$ (Lemma 3.2) & $\Lambda \geq L_\Delta(\res)$* (Thm. 3.5) \\
        \hline
        IA & \checkmark & $\Lambda\in(0,\frac{1}{2}]$ (Thm. 3.13) & $\Lambda\leq U_\Delta(\res)\textbf{?}$\\
        \hline
    \end{tabular}
    }
    \caption{Comparison of Stability Results: The table presents the intervals of $\Lambda$ for which the FIP holds on the respective graph class in the respective setting. For values of $\Lambda$ outside these intervals, we show that instances exists for which no equilibrium exists. A "\checkmark" indicates that the FIP holds for all $\Lambda \in (0,1)$. A "?" denotes that the FIP \emph{may} hold within this interval, although we still show that for all $\Lambda$ outside the specified range, there exist instances without an equilibrium. A "*" denotes that while the FIP over the specified interval is proven to hold for linear $p$-functions, it remains uncertain whether it holds for non-linear $p$-functions.}
    \label{tab:our_contribution}
\end{table}

In this work, we introduce the study of the Resource Selection Game with single-peaked utility functions, extending recent research on the game with $\tau$-threshold utility functions.

Our findings, with the sole exception being our result for the impact-blind setting on arbitrary graphs, are valid for all functions that satisfy the two simple constraints required by our model in \autoref{def:p}, making the results widely applicable. For the above-mentioned exception, we additionally require $p$ being linear.

We focus on examining the stability of the Resource Selection Game under single-peaked utility functions across three graph classes: cycles, binary trees, and arbitrary graphs. Our results are summarized in \autoref{tab:our_contribution}. We prove that for the $\Lambda$-values listed in the table (except for the one followed by a "?"), the FIP holds in the corresponding setting. It is important to note that from the FIP, it directly follows that an equilibrium exists, which can be computed by following a best response sequence from any strategy profile. However, it is possible for an equilibrium to exist even when the FIP does not hold. Therefore, for all $\Lambda$-values where the FIP does not hold, we additionally prove that there exist instances within the respective graph class, setting and $\Lambda$-range for which no strategy profile in equilibrium exists, therefore making our bounds tight. Specifically, for arbitrary graphs, we distinguish between different $\Lambda$-values based on $\Delta_G(\res)$. This differentiation primarily helps to identify $\Lambda$-values that result in the utility function being either monotonically increasing or decreasing over the range of obtainable agent fractions (except for fractions $0$ and $1$). Concretely, we obtain the lower bound
\begin{equation*}
    L_\Delta(\res)=\frac{\Delta_G(\res)(\Delta_G(\res)-2)}{\Delta_G(\res)^2-\Delta_G(\res)-1}
\end{equation*}
and the upper bound
\begin{equation*}
    U_\Delta(\res)=\frac{\Delta_G(\res)-1}{\Delta_G(\res)^2-\Delta_G(\res)-1}.
\end{equation*}

In the impact-blind setting our main results show that, on binary trees and if $p$ is linear on arbitrary graphs as well, the game possessing the IB-FIP is equivalent to an agent gaining utility if and only if the fraction of agents sharing their color at their resource increases (except for the fraction $1$). In the impact-aware setting, we prove that on binary trees, the game has the IA-FIP if and only if $\Lambda\in(0,\frac{1}{2}]$. On arbitrary graphs, we show that for $\Lambda>U_\Delta(\res)$ instances exist, in which no strategy profile in equilibrium exists. For $\Lambda\leq U_\Delta(\res)$, while we do not provide a conclusion regarding the stability of the game, we share insights into this scenario and show how several conventional approaches to prove stability fail in this case. Besides, we show that on cycle graphs in both settings, impact-blind or impact-aware, the FIP holds for all $\Lambda$.


    \chapter{Preliminaries}
    We start by investigating how $\Lambda$ dictates the relationship of function values of $p$ for different fractions. First, for two fractions, we obtain a lower bound on $\Lambda$ such that a greater fraction is mapped to a greater value by $p$.

\begin{restatable}[]{lemma}{prelim_lemma_1}
\label{lma:lambda_lower_bound_monotonical}
For all $x,y \in (0,1)$ with $x<y$, $p(x)< p(y) \iff \Lambda > \frac{x}{1-y+x}$
\end{restatable}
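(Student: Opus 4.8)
The plan is to convert the two-sided inequality between values of $p$ into a single comparison on the increasing branch $[0,\Lambda]$, exploiting that the threshold $\tfrac{x}{1-y+x}$ is nothing but the folding condition from the definition of $p$ in disguise. First I would clear denominators. Since $x,y\in(0,1)$ we have $1-y+x>0$, and since $\Lambda\in(0,1)$ we have $1-\Lambda>0$; hence
\[
\Lambda > \frac{x}{1-y+x}
\iff \Lambda(1-y+x) > x
\iff \Lambda(1-y) > x(1-\Lambda)
\iff x < \frac{\Lambda(1-y)}{1-\Lambda}.
\]
It therefore suffices to prove the equivalent statement $p(x)<p(y) \iff x < \widetilde y$, where $\widetilde y := \tfrac{\Lambda(1-y)}{1-\Lambda}$ is precisely the reflected argument that the second property in the definition of $p$ assigns to $y$.

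Second, I would recall that $p$ is (strictly) increasing on $[0,\Lambda]$ and that every point of $(0,1)$ has a ``folded'' representative there with the same $p$-value: for $y>\Lambda$ the definition gives $p(y)=p(\widetilde y)$ with $\widetilde y\in[0,\Lambda)$ (this is the content of the footnote computation), while for $y\le\Lambda$ the same formula yields $\widetilde y\ge\Lambda$. With this in hand the claim follows by distinguishing the positions of $x$ and $y$ relative to $\Lambda$. If $y\le\Lambda$, then $x<y\le\Lambda\le\widetilde y$, so both $p(x)<p(y)$ (by monotonicity) and $x<\widetilde y$ hold. If $y>\Lambda$ and $x\le\Lambda$, then $x$ and $\widetilde y$ both lie in $[0,\Lambda]$ and $p(y)=p(\widetilde y)$, so strict monotonicity gives $p(x)<p(y)\iff x<\widetilde y$ directly. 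Finally, if $x>\Lambda$ (so $\Lambda<x<y$), reflection reverses the order, $\widetilde x>\widetilde y$, whence $p(x)=p(\widetilde x)>p(\widetilde y)=p(y)$; simultaneously $x>\Lambda>\widetilde y$, so both $p(x)<p(y)$ and $x<\widetilde y$ are false. In every case the two sides agree.

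I expect the only real work to be the first step---spotting that $\tfrac{x}{1-y+x}$ is exactly the reflection threshold, i.e.\ the algebraic form of the equality $x=\widetilde y$---after which the statement collapses to the strict monotonicity of $p$ on $[0,\Lambda]$. The points requiring care are that the denominator $1-y+x$ is positive (so that clearing it preserves the inequality direction), that the order of arguments is reversed under the reflection $z\mapsto\widetilde z$ in the case $x,y>\Lambda$, and that the equivalence genuinely relies on \emph{strict} monotonicity on $[0,\Lambda]$ rather than merely weak monotonicity; the boundary subcases $x=\Lambda$ and $y=\Lambda$ are absorbed without difficulty into the three cases above.
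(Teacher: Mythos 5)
Your proposal is correct and follows essentially the same route as the paper: both proofs hinge on rewriting $\Lambda > \frac{x}{1-y+x}$ as $x < \frac{\Lambda(1-y)}{1-\Lambda}$, invoking the reflection identity $p(y)=p\bigl(\tfrac{\Lambda(1-y)}{1-\Lambda}\bigr)$, and reducing to (strict) monotonicity of $p$ on $[0,\Lambda]$. The only difference is organizational — the paper first confines $\Lambda$ to $(x,y)$ before chaining equivalences, while you do the algebra first and then case on the positions of $x,y$ relative to $\Lambda$ — and your explicit three-case treatment is, if anything, slightly more careful about the boundary situations the paper handles in one sentence.
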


\begin{proof}
    If $\Lambda \leq x$, it would follow that $p(y)< p(x)$ from $p$ being monotonically decreasing in $[\Lambda,1)$. Similarly, for $\Lambda \geq y$, we have that $p(y)>p(x)$ since $p$ is monotonically increasing in $[0,\Lambda]$. Thus, the greatest lower bound for $\Lambda$ such that $p(x)<p(y)$ must lie in the interval $(x,y)$. It follows from \autoref{def:p} that $p(x) < p(y)$ is equivalent to
    \begin{equation*}
        p\left(\frac{\Lambda(1-y)}{1-\Lambda}\right) > p(x).
    \end{equation*}
    As $p$ is monotonically increasing in $[0,\Lambda]$ the above is equivalent to
    \begin{equation*}
        \frac{\Lambda(1-y)}{1-\Lambda} > x.
    \end{equation*}
    By simplification, we obtain the equivalent equation
    \begin{equation*}
        \Lambda-y\Lambda > x - x\Lambda \iff
        \Lambda(1-y+x) > x \iff
        \Lambda > \frac{x}{1-y+x}.
        \qedhere
    \end{equation*}
\end{proof}

Next, we make the observation that the largest fraction in $(0,1)$ with a denominator of at most $n$ is $\frac{n-1}{n}$.
\begin{restatable}[]{lemma}{lemmaprelimn4}
    \label{lma:largest_fraction}
    For $n \in \mathbb{N}_{\geq2}$ it holds for all $x,y \in \mathbb{N}$, with $x<y\leq n$, that $\frac{n-1}{n}\geq\frac{x}{y}$.
\end{restatable}

\begin{proof}
    From the assumption, it follows that
    \begin{equation*}
        y \geq x+1 \iff
        1 \geq \frac{x+1}{y} \Rightarrow
        1 \geq \frac{1}{n} + \frac{x}{y} \iff
        \frac{x}{y} \leq 1-\frac{1}{n} \iff
        \frac{x}{y} \leq \frac{n-1}{n}.
        \qedhere
    \end{equation*}
\end{proof}
It follows, that in the context of the Resource Selection Game, the largest fraction of a color we may see at any resource $q \in \res$ is $\frac{deg_G(q)-1}{deg_G(q)}$. \\
\autoref{lma:lambda_lower_bound_monotonical} and \autoref{lma:largest_fraction} now enable us to analyze how large $\Lambda$ needs to be for any given graph, for utilities to increase if and only if the fraction of agents' of the same color increases and similarly, providing an upper bound for $\Lambda$, such that agents' utilities improve if and only if the fraction of the agent's color at their resource decreases. We note that moves, which increase an agent's fraction to $1$ or decrease it to $0$ are exceptions from this and are never improving, as both fractions result in a utility of $0$.

\begin{restatable}[]{lemma}{}
    For all graphs $G$ and $\Lambda \geq \frac{\Delta_G(\res)(\Delta_G(\res)-2)}{\Delta_G(\res)^2-\Delta_G(\res)-1}$, for all $x,y\in(0,1)$ with $x<y$ it holds that $p(x)<p(y)$.
    \label{lma:lower_bound_mon_incr_proof}
\end{restatable}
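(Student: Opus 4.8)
The plan is to turn the statement into a single extremal problem over the color-fractions that can actually arise in the game. By the observation following \autoref{lma:largest_fraction}, the only fractions $x,y$ we must consider are rationals of the form $\frac{i}{j}$ with $1 \le i < j \le \Delta := \Delta_G(\res)$, since these are exactly the fractions a color can take at a resource. For a pair $x < y$ of such fractions, \autoref{lma:lambda_lower_bound_monotonical} gives the equivalence $p(x) < p(y) \iff \Lambda > g(x,y)$, where $g(x,y) := \frac{x}{1-y+x}$ (and $1-y+x > 0$ since $y - x < 1$). Thus it suffices to prove $g(x,y) \le L_\Delta(\res) = \frac{\Delta(\Delta-2)}{\Delta^2-\Delta-1}$ for every admissible pair, which combined with the hypothesis $\Lambda \ge L_\Delta(\res)$ yields the claim.

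I would then maximize $g$. Computing $\partial_x g = \frac{1-y}{(1-y+x)^2}$ and $\partial_y g = \frac{x}{(1-y+x)^2}$ shows both are positive, so $g$ strictly increases in each variable; the maximum is therefore attained by making $y$, and then $x < y$, as large as possible. By \autoref{lma:largest_fraction} the largest admissible value of $y$ is $\frac{\Delta-1}{\Delta}$.

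The heart of the argument is identifying the largest admissible fraction strictly below $\frac{\Delta-1}{\Delta}$. Rewriting $\frac{a}{b} < \frac{\Delta-1}{\Delta}$ as $a \le b - \frac{b+1}{\Delta}$, one checks that for $b \le \Delta-1$ the best choice is $a = b-1$, giving $\frac{b-1}{b}$, which is increasing in $b$ and hence maximal at $b = \Delta-1$, while $b = \Delta$ only permits $\frac{\Delta-2}{\Delta}$; comparing these leaves $\frac{\Delta-2}{\Delta-1}$ as the unique maximizer. (Equivalently, $\frac{\Delta-2}{\Delta-1}$ and $\frac{\Delta-1}{\Delta}$ are Farey neighbors, since $(\Delta-1)^2 - (\Delta-2)\Delta = 1$, so no fraction of denominator $\le \Delta$ lies strictly between them.) Substituting $x = \frac{\Delta-2}{\Delta-1}$, $y = \frac{\Delta-1}{\Delta}$ and simplifying gives $g = L_\Delta(\res)$; moreover $\frac{\Delta-2}{\Delta-1} < L_\Delta(\res) < \frac{\Delta-1}{\Delta}$, so the only admissible $y$ exceeding $L_\Delta(\res)$ is $\frac{\Delta-1}{\Delta}$ itself, and every non-extremal pair satisfies $g(x,y) < L_\Delta(\res)$.

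I expect two points to need the most care. The first is the extremal-fraction step: a naive bound could overlook an intermediate fraction, and it is precisely the bounded-denominator structure (via the elementary estimate above, or a mediant/Farey argument) that closes this gap. The second is the boundary: at $\Lambda = L_\Delta(\res)$ the extremal pair gives $g = \Lambda$, so \autoref{lma:lambda_lower_bound_monotonical} yields only $p(x) = p(y)$ there rather than the strict inequality as stated. I would therefore either phrase the strict conclusion under $\Lambda > L_\Delta(\res)$, or note that such an equality never constitutes an improving move, which is all the downstream stability results actually require.
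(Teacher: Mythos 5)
Your proof is correct, and it rests on the same two ingredients as the paper's --- \autoref{lma:lambda_lower_bound_monotonical} and \autoref{lma:largest_fraction}, with the same critical pair $z_2=\frac{\Delta_G(\res)-2}{\Delta_G(\res)-1}$, $z_1=\frac{\Delta_G(\res)-1}{\Delta_G(\res)}$ --- but you dispatch the remaining pairs by a genuinely different mechanism. The paper applies \autoref{lma:lambda_lower_bound_monotonical} to the single pair $(z_2,z_1)$ only, obtains $p(z_2)<p(z_1)$, deduces from the shape of $p$ that $\Lambda>z_2$, and concludes that every achievable fraction except $z_1$ lies in the increasing region $(0,\Lambda]$, after which monotonicity of $p$ and transitivity finish the argument. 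You instead solve a global extremal problem: the threshold $g(x,y)=\frac{x}{1-y+x}$ from \autoref{lma:lambda_lower_bound_monotonical} is increasing in both arguments, so its maximum over admissible pairs is attained at $(z_2,z_1)$, where it equals $L_\Delta(\res)$; your Farey-neighbor identity $(\Delta_G(\res)-1)^2-(\Delta_G(\res)-2)\Delta_G(\res)=1$ is a compact substitute for the paper's direct verification that no achievable fraction lies strictly between $z_2$ and $z_1$, a fact both routes need. Your version also makes explicit the restriction to fractions realizable in the game, which the paper leaves implicit even though the lemma as literally stated (for all reals $x<y$ in $(0,1)$) would be false, since $p$ decreases beyond its peak $\Lambda<1$. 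Finally, your boundary remark is a genuine catch rather than pedantry: at $\Lambda=L_\Delta(\res)$ the paper's step ``from \autoref{lma:lambda_lower_bound_monotonical} and our assumption for $\Lambda$, we obtain that $p(z_1)>p(z_2)$'' is not licensed, because that lemma requires the strict inequality $\Lambda>\frac{z_2}{1-z_1+z_2}=L_\Delta(\res)$; at equality one gets exactly $p(z_2)=p(z_1)$, so either the hypothesis should read $\Lambda>L_\Delta(\res)$ or the conclusion should be weakened to $p(x)\leq p(y)$ for this one pair --- and, as you observe, the downstream applications (e.g.\ \autoref{lma:bin_tree_IB_FIP}, which invokes only non-strict inequalities) survive either repair.
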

\begin{proof}

    We first observe that since $p$ is monotonically increasing in $(0,\Lambda]$ the equivalence holds for this interval.\\
    To show the equivalence for $(\Lambda,1)$, we first see that at any resource there can be at most $\Delta_G(\res)$ agents at once. Thus, from \autoref{lma:largest_fraction} it follows, that $z_1=\frac{\Delta_G(\res)-1}{\Delta_G(\res)}$ is the largest possible fraction of agents at a resource less than $1$, that may be present. Similarly, to obtain the second-largest possible fraction, we see that \autoref{lma:largest_fraction} gives $z_2=\frac{\Delta_G(\res)-2}{\Delta_G(\res)-1}$ for $\Delta_G(\res)-1$. This is indeed the second-largest fraction, less than $1$ as $\frac{\Delta_G(\res)-2}{\Delta_G(\res)-1}>\frac{x}{\Delta_G(\res)}$ for all $x\in \mathbb{N}$ with $x\leq \Delta_G(\res)-2$.\\
    From \autoref{lma:lambda_lower_bound_monotonical} and our assumption for $\Lambda$, we obtain that $p(z_1)>p(z_2)$. Thus, it must follow that $\Lambda>z_2$, since $p$ is monotonically decreasing in $[\Lambda,1)$. Consequentially, the only fraction possibly larger than $\Lambda$ is $z_1$. If $\Lambda\geq z_1$ the equivalence follows directly from $p$ being monotonically increasing in $(0,\Lambda]$. If $\Lambda<z_1$, it follows from $p(z_1)>p(z_2)$, the fact that $p(z_2)$ is the largest utility for a fraction of agents possible in the game in $(0,\Lambda]$ and the transitivity of $>$ that the equivalence holds.
\end{proof}

We continue with a statement analogous to \autoref{lma:lower_bound_mon_incr_proof}, which provides an upper bound for $\Lambda$ such that larger fractions lead to lower utilities.
\begin{restatable}[]{lemma}{}
    For all graphs $G$ and $\Lambda\leq \frac{\Delta_G(\res)-1}{\Delta_G(\res)^2-\Delta_G(\res)-1}$, for all $x,y\in(0,1)$ with $x<y$ it holds that $p(x)>p(y)$.
    \label{lma:upper_bound_mon_decr_proof}
\end{restatable}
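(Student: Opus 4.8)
The plan is to mirror the proof of \autoref{lma:lower_bound_mon_incr_proof}, replacing the role of the two \emph{largest} obtainable fractions by the two \emph{smallest}. As there, I would restrict attention to the fractions that can actually occur in the game, i.e.\ rationals $\frac{a}{b}$ with $b \le \Delta_G(\res)$, since these are the only arguments at which $p$ is ever evaluated. The two regimes again split along the peak: whenever two such fractions both lie in $[\Lambda,1)$, monotone decrease of $p$ on that interval already gives $p(x) > p(y)$ for $x<y$, so the entire difficulty is to rule out that two obtainable fractions both sit on the increasing branch $(0,\Lambda]$.

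First I would pin down the two smallest obtainable fractions. The smallest nonzero fraction is clearly $w_1 = \frac{1}{\Delta_G(\res)}$, and I claim the second smallest is $w_2 = \frac{1}{\Delta_G(\res)-1}$: any obtainable $\frac{a}{b}$ with $a \ge 2$ satisfies $\frac{a}{b} \ge \frac{2}{\Delta_G(\res)} > \frac{1}{\Delta_G(\res)-1}$ for $\Delta_G(\res) \ge 3$, and among unit fractions no integer denominator lies strictly between $\Delta_G(\res)-1$ and $\Delta_G(\res)$; the degenerate case $\Delta_G(\res)\le 2$ leaves at most one obtainable fraction and is vacuous. A short computation then shows $\frac{w_1}{1 - w_2 + w_1} = \frac{\Delta_G(\res)-1}{\Delta_G(\res)^2-\Delta_G(\res)-1} = U_\Delta(\res)$, so the threshold in the statement is exactly the critical value for this pair.

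Next, applying \autoref{lma:lambda_lower_bound_monotonical} to $x = w_1 < y = w_2$ together with the hypothesis $\Lambda \le U_\Delta(\res)$ yields $p(w_1) > p(w_2)$. Since $p$ is monotonically increasing on $(0,\Lambda]$, this forces $\Lambda < w_2$: otherwise $w_1 < w_2 \le \Lambda$ would give $p(w_1) < p(w_2)$, a contradiction. Consequently every obtainable fraction except possibly $w_1$ lies in $(\Lambda,1)$, so at most one obtainable fraction sits on the increasing branch --- which is precisely the obstruction that must be excluded.

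It then remains to assemble the order by the same case distinction as in \autoref{lma:lower_bound_mon_incr_proof}. If $\Lambda \le w_1$, all obtainable fractions lie in $[\Lambda,1)$ and monotone decrease finishes the argument. If $\Lambda > w_1$, then $w_1$ is the unique fraction on the increasing branch; for any larger obtainable $y$ we have $y \ge w_2 > \Lambda$, whence $p(w_1) > p(w_2) \ge p(y)$ by the previous step and monotone decrease, while any two obtainable fractions exceeding $w_1$ both lie on the decreasing branch and are ordered correctly. Thus $p(x) > p(y)$ for all obtainable $x<y$. I expect the main obstacle to be the second and third steps together: verifying that $w_2 = \frac{1}{\Delta_G(\res)-1}$ really is the second-smallest obtainable fraction (the Farey-neighbor gap between $\frac{1}{\Delta_G(\res)}$ and $\frac{1}{\Delta_G(\res)-1}$) and deducing $\Lambda < w_2$, since this is exactly what guarantees that the increasing branch can hold at most a single relevant fraction.
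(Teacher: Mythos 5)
Your proposal is correct and follows essentially the same route as the paper's own proof: identify $\frac{1}{\Delta_G(\res)}$ and $\frac{1}{\Delta_G(\res)-1}$ as the two smallest obtainable fractions, apply \autoref{lma:lambda_lower_bound_monotonical} to this pair to conclude $p\left(\frac{1}{\Delta_G(\res)}\right)>p\left(\frac{1}{\Delta_G(\res)-1}\right)$ and hence $\Lambda<\frac{1}{\Delta_G(\res)-1}$, then finish by the case split on whether $\Lambda\leq\frac{1}{\Delta_G(\res)}$ together with monotone decrease and transitivity. Your additions (the explicit restriction to obtainable fractions, the check that the threshold equals $U_\Delta(\res)$, and the degenerate case $\Delta_G(\res)\leq 2$) only make the same argument more careful.
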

\begin{proof}

    We first observe that since $p$ is monotonically decreasing in $[\Lambda,1]$ the equivalence holds for this interval.\\
    To show the equivalence for $(0,\Lambda)$, we first see that at any resource there can be at most $\Delta_G(\res)$ agents at once. Thus, from \autoref{lma:largest_fraction} it follows, that $z_1=\frac{1}{\Delta_G(\res)}$ is the smallest possible fraction of agents at a resource larger than $0$, that may be present. Similarly, to obtain the second-smallest possible fraction, we see that \autoref{lma:largest_fraction} suggests $z_2=\frac{1}{\Delta_G(\res)-1}$ for $\Delta_G(\res)-1$. This is indeed the second-smallest fraction larger than $0$ as $\frac{1}{\Delta_G(\res)-1}<\frac{x}{\Delta_G(\res)}$ for all $x\in \mathbb{N}$ with $x\geq 2$.\\
    From \autoref{lma:lambda_lower_bound_monotonical} and our assumption for $\Lambda$, we obtain that $p(z_1)>p(z_2)$. Thus, it must follow that $\Lambda<z_2$, since $p$ is monotonically increasing in $(0,\Lambda]$. Consequentially, the only fraction possibly less than $\Lambda$ is $z_1$. If $\Lambda\leq z_1$ the equivalence follows directly from $p$ being monotonically decreasing in $[\Lambda,1)$. If $\Lambda>z_1$, it follows from $p(z_1)>p(z_2)$, the fact that $p(z_2)$ is the smallest utility for a fraction of agents possible in the game in $(\Lambda,1)$ and the transitivity of $>$ that the equivalence holds.
\end{proof}

Moving on, we observe a property, which emerges from the combination of single-peaked utility functions, concretely the fact that $p(0)=0$ and the definition of impact-blindness.

\begin{restatable}[]{lemma}{prelimlemma3}
    \label{lma:IB_monochromatic_no_joins}
    In the impact-blind setting, for any single-peaked utility function, agents do not join monochromatic resources.
\end{restatable}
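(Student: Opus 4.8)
The plan is to argue directly from the definition of an impact-blind improving move, using only the boundary values of $p$. Fix, without loss of generality, a red agent $a_i$ in a strategy profile $\strat$, and suppose it contemplates jumping to a monochromatic resource $q'$. First I would set aside the empty case: if $A(q',\strat)=\emptyset$ then $\rho_R(q',\strat)$ is undefined, and as already noted such a jump never occurs in the impact-blind setting, so I may assume $q'$ is non-empty, with $\#(q',\strat)\geq 1$ and either $\#_R(q',\strat)=0$ or $\#_B(q',\strat)=0$.

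Next I would split on the colour of $q'$. If $q'$ is blue-monochromatic then $\#_R(q',\strat)=0$, so $\rho_R(q',\strat)=0$ and the agent perceives $p(\rho_R(q',\strat))=p(0)=0$ by the first property of \autoref{def:p}. If instead $q'$ is red-monochromatic then $\#_B(q',\strat)=0$, so $\rho_R(q',\strat)=1$; here I would use the reflection property of \autoref{def:p} at the right endpoint, which gives $p(1)=p(\frac{\Lambda(1-1)}{1-\Lambda})=p(0)=0$. Hence in both cases the value the agent sees at $q'$ equals $0$.

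Finally, since $p$ takes values in $[0,1]$, the agent's current utility satisfies $u(a_i,\strat)=p(\rho_R(\strat(a_i)))\geq 0$. Therefore $p(\rho_R(q',\strat))=0\leq u(a_i,\strat)$, so the jump to $q'$ fails the strict inequality required of an impact-blind improving move. The case of a blue agent is entirely symmetric, and the claim follows.

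Because the statement is essentially definitional, I do not expect a substantive obstacle. The only points demanding care are recognising that the functional equation defining $p$ forces $p(1)=0$ — so that a fully same-coloured resource is just as unattractive to the impact-blind agent as one containing none of its colour — and remembering to treat the empty resource separately rather than folding it into the monochromatic case, where the fraction $\rho_R$ would be undefined.
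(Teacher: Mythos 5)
Your proposal is correct and follows essentially the same route as the paper's own proof: at a monochromatic resource the perceived utility is $p(0)=0$ or $p(1)=0$ (the latter via the reflection property), so the strict improvement condition can never hold. You simply spell out details the paper leaves implicit, such as deriving $p(1)=0$ explicitly and separating the empty-resource case.
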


\begin{proof}
    Let $q$ be a monochromatic resource in a strategy profile $\bold{s}$. Impact-blind agents of the same color see a utility of $p(1)=0$, similarly agents of different colors see a utility of $p(0)=0$, thus no agent will join $q$.
\end{proof}
This property can now be used to prove the following Lemma.
\begin{restatable}[]{lemma}{prelimlemma3}
    \label{lma:IB_monochromatic}
    In the impact-blind setting, for any single-peaked utility function monochromatic resources will change their state only finitely often for the remainder of the game.
\end{restatable}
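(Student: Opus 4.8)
The plan is to build directly on the preceding \autoref{lma:IB_monochromatic_no_joins}, which states that in the impact-blind setting no agent ever joins a monochromatic resource. The entire argument rests on the fact that the state $(\#_R(q,\strat),\#_B(q,\strat))$ of a resource $q$ can change only through a join or a departure, and that \autoref{lma:IB_monochromatic_no_joins} eliminates the joins. Hence, once $q$ is monochromatic, the only admissible state transitions are departures.

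First I would show that monochromaticity is preserved once attained. Suppose $q$ is monochromatic at some strategy profile $\strat$, without loss of generality with $\#_B(q,\strat)=0$. The only way $q$ could become non-monochromatic would be for a blue agent to join it, which \autoref{lma:IB_monochromatic_no_joins} forbids; and a departure of a red agent leaves $\#_B(q,\strat)=0$ intact, so $q$ stays monochromatic (possibly becoming empty, which is again monochromatic). Therefore, from this point on, $q$ remains monochromatic for the remainder of the game.

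Next I would bound the number of state changes. Since every state change of $q$ is now a departure, each one strictly decreases the number of agents at $q$ by exactly one. As there are at most $n$ agents in total, $q$ holds at most $n$ agents at the moment it becomes monochromatic, so after at most $n$ departures it is empty. An empty resource is itself monochromatic, and as noted in the model no impact-blind agent ever jumps to an empty resource (the perceived utility $p(\rho_R(q,\strat))$ being undefined); once $q$ is empty its state is therefore frozen for good. Combining these steps, $q$ undergoes only finitely many (at most $n$) state changes after becoming monochromatic.

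I do not expect a substantial obstacle here, since the argument is essentially monotonicity of the agent count under a no-join invariant. The only two points that require care are (i) explicitly checking that shedding a same-color agent cannot destroy monochromaticity, so that the no-join invariant keeps applying at every subsequent step, and (ii) treating the empty resource as an absorbing monochromatic state, which prevents a later join from re-igniting the count of state changes.
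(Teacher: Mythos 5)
Your proposal is correct and follows essentially the same route as the paper's proof: invoke \autoref{lma:IB_monochromatic_no_joins} to rule out joins, use the model's no-jumps-to-empty-resources assumption to make the empty state absorbing, and conclude that the state can only change through departures, hence at most finitely often (the paper bounds this by $\#(q,\strat)$ rather than $n$, but that is immaterial). Your explicit check that departures preserve monochromaticity is a detail the paper leaves implicit, and it is a sound addition rather than a deviation.
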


\begin{proof}
    Let $q$ be a monochromatic resource in a strategy profile $\bold{s}$. According to \autoref{lma:IB_monochromatic_no_joins}, no agent joins $q$. Additionally, according to our model, no agent joins an empty resource. It follows that the state of $q$ changes at most $\#(q,\bold{s})$ times because of agents leaving, until it becomes empty.
\end{proof}

This allows for the construction of potential function arguments using the number of monochromatic resources in the impact-blind setting later on. We further make the observation that from the above \autoref{lma:IB_monochromatic}, resources with a degree of at most $2$ are effectively negligible. 

\begin{restatable}[]{lemma}{lemmaprelimn5}
    \label{lma:IB_deg_leq_2}
    In the impact-blind setting for any resource $q \in \res$ with $deg_G(q) \leq 2$ for all $\Lambda \in (0,1)$ no agent ever jumps to $q$.
\end{restatable}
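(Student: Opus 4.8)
The plan is to argue by a simple case distinction on the number of agents present at $q$ at the moment some agent would hypothetically jump to it, using the fact that $deg_G(q) = |A(q)| \leq 2$ bounds the occupancy of $q$ by two. The three ingredients I would invoke are: (i) the model assumption that impact-blind agents never jump to empty resources; (ii) \autoref{lma:IB_monochromatic_no_joins}, which states that impact-blind agents never join monochromatic resources; and (iii) the trivial observation that any resource holding exactly one agent is monochromatic, since a single agent is of a single color.

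Concretely, suppose toward a contradiction that in some strategy profile $\strat$ an agent $a$ jumps to $q$. Since $a$ must have access to $q$ and at most $deg_G(q) \leq 2$ agents can ever occupy $q$ simultaneously, we have $\#(q,\strat) \in \{0,1,2\}$ immediately before the jump, so the following cases are exhaustive. If $\#(q,\strat)=0$, then $q$ is empty and no impact-blind agent jumps to it by the model assumption, a contradiction. If $\#(q,\strat)=1$, then $q$ is monochromatic, so by \autoref{lma:IB_monochromatic_no_joins} no agent joins $q$, again a contradiction. If $\#(q,\strat)=2$, then because $deg_G(q)\leq 2$ both agents with access to $q$ are already located at $q$; hence there is no further agent with access to $q$ available to jump to it, contradicting the existence of $a$.

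I expect no genuine obstacle here, as the statement is an immediate corollary of the two preceding results together with the occupancy bound. The only point deserving care is the third case: one must explicitly note that a jumping agent necessarily has access to $q$, and that with $deg_G(q)\leq 2$ and two agents already present, the set $A(q)$ is exhausted, leaving no candidate outside $q$. With that observation in place the case analysis closes, and since all three cases lead to a contradiction, no agent ever jumps to $q$ for any $\Lambda \in (0,1)$.
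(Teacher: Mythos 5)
Your proof is correct and in substance the same as the paper's: the paper runs the same elementary case analysis, indexed by $deg_G(q)\in\{0,1,2\}$ rather than by the occupancy $\#(q,\strat)\in\{0,1,2\}$, and closes each case with the same two facts (impact-blind agents never jump to empty resources, and they see utility $0$ at a resource holding a single agent). The only cosmetic difference is your third case, where you rule out a jump by noting that $A(q)$ is exhausted; under the paper's degree-based casing this situation is vacuous, since the prospective jumper is itself one of the two adjacent agents and therefore sees at most one other agent at $q$.
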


\begin{proof}
    If $deg_G(q)=0$, no agent can join $q$. If $deg_G(q)=1$, an adjacent agent will not join $q$ as it is empty. If $deg_G(q)=2$ an impact-blind agent will see either an empty resource or $p(1)=0$ and thus never join $q$.
\end{proof}
    \chapter{Main Results}
    We divide our analysis regarding the existence of equilibria in the Resource Selection Game into two categories: impact-blind and impact-aware dynamics. In the impact-blind section, which serves as our starting point, we progressively establish tight bounds on $\Lambda$ regarding the existence of equilibria on arbitrary graphs. In the impact-aware section, our main result focuses on providing tight bounds on $\Lambda$ regarding the existence of equilibria on binary trees.

\section{Impact-Blind Equilibria}
This section provides tight bounds on $\Lambda$ for the existence of equilibria in the Resource Selection Game on arbitrary graphs if $p$ is linear. We progress through several graph classes with increasing complexity, from cycle graphs over binary trees to arbitrary graphs. The analysis of the two former classes allows us to make some observations, which are then encapsulated in the potential argument that we use to obtain our main result in this section about the existence of equilibria on arbitrary graphs. \\
As stated in \autoref{lma:IB_deg_leq_2}, in the impact-blind setting on cycle graphs, no agent makes a move since each agent perceives a utility of 0, given that the maximum degree of any resource is 2. Consequently, the IB-FIP holds for all $\Lambda\in(0,1)$. It is important to highlight that the game’s stability arises from the fact that agents do not jump to empty or monochromatic resources—conditions that, on a cycle graph, apply to all resources. \\
Other than on cycle graphs, on binary trees the game does not admit a stable state for all $\Lambda$ as we will see below.
\begin{restatable}[]{lemma}{lemma_bin_tree_no_IBE}
    There exists a binary tree, such that for all $\Lambda \in \left(0,\frac{3}{5}\right)$ no IBE exists.
    \label{lma:bin_tree_no_IBE}
\end{restatable}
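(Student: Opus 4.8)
The plan is to exhibit one small binary tree together with a very short case analysis over all of its strategy profiles, showing that every profile admits an impact-blind improving move, so that no IBE can exist for any $\Lambda \in (0,\frac{3}{5})$. The entire construction rests on a single inequality valid throughout this range: applying \autoref{lma:lambda_lower_bound_monotonical} with $x=\frac12$ and $y=\frac23$ gives $p(\frac12)<p(\frac23)\iff \Lambda>\frac35$, so for every $\Lambda\in(0,\frac35)$ we have $p(\frac12)>p(\frac23)$. In words, a red agent whose resource has red-fraction $\frac23$ always prefers, in the impact-blind sense, a resource whose current red-fraction is $\frac12$.

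Concretely, I would build the following instance. Take two resources $q_1,q_2$ of degree three, joined by a single mobile red agent $c$ with $Q(c)=\{q_1,q_2\}$. To $q_1$ attach two degree-one agents, one red and one blue, and attach an analogous red/blue pair to $q_2$. Rooted at $c$, this is a valid binary tree (the two children of $c$ are $q_1,q_2$, and each $q_i$ has exactly two leaf-agent children), and both active resources carry the degree three needed to be legitimate jump targets under \autoref{lma:IB_deg_leq_2}. Since only $c$ has degree at least two, the four attached agents are immobile, so the only two strategy profiles are ``$c$ at $q_1$'' and ``$c$ at $q_2$''.

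In either profile the picture is symmetric: the resource occupied by $c$ is in state two red, one blue, giving $c$ utility $p(\frac23)$, while the vacated resource holds only its fixed pair, one red and one blue, so $c$ perceives red-fraction $\frac12$ there. The vacated resource is non-monochromatic, so $c$ may jump to it, and by the inequality above this jump is impact-blind improving, since $p(\frac12)>p(\frac23)=u(c,\strat)$. Hence neither profile is an IBE and no equilibrium exists. The point worth stressing is the impact-blind deception: the move never actually raises $c$'s utility, because on arrival the target itself becomes a two-red-one-blue resource of fraction $\frac23$; it is purely the impact-blindness, with the agent reading the target's fraction before its own arrival, that keeps the move perpetually tempting from both sides.

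The main obstacle, and the reason the threshold differs from the arbitrary-graph case, is the tree constraint: two agents cannot both be adjacent to the same two resources without creating a cycle, so the familiar oscillation of two agents swapping between two resources is unavailable. I therefore route all dynamics through one shared mobile agent and use fixed degree-one ``anchor'' agents to pin the denominators, leaning on impact-blindness rather than any genuine utility gain to sustain the instability. The final check I would make is that these particular degrees, namely resources of degree three so that the pre-jump fraction $\frac12$ turns into the post-jump fraction $\frac23$, are exactly what tie the gadget to the value $\frac35$, which is also why the same construction cannot be pushed below $\frac35$.
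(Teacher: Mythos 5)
Your proposal is correct and takes essentially the same approach as the paper: the paper's proof uses the identical gadget (two degree-three resources joined by a single mobile red agent, each anchored by one fixed red and one fixed blue leaf agent) and the same key inequality $p\left(\tfrac{1}{2}\right) > p\left(\tfrac{2}{3}\right)$ for $\Lambda < \tfrac{3}{5}$ to force the mobile agent to oscillate forever between the two resources. If anything, your derivation of that inequality from \autoref{lma:lambda_lower_bound_monotonical} is the cleaner citation, since the paper points to the one-directional monotonicity lemma instead of the equivalence it actually needs.
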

\begin{proof}
    By definition, for any binary tree $G=(\act\cup\res,\edg)$ it holds that $\Delta_G(\res)\leq3$. Since $\Lambda < \frac{3}{5} = \frac{\frac{1}{2}}{1-\frac{2}{3}+\frac{1}{2}}$, from \autoref{lma:lower_bound_mon_incr_proof} it follows that $p\left(\frac{1}{2}\right)>p\left(\frac{2}{3}\right)$. We now construct the graph in \autoref{fig:IBE_counterexample_bin_tree}. \\
    \begin{figure}
        \centering
        \includegraphics[width=0.3\textwidth]{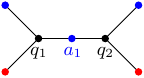}
        \caption{Red dots are red agents, blue dots are blue agents and black dots are resources. An agent $a$ has access to a resource $q$ if and only if $a$ and $q$ are adjacent in the above graph.}
        \label{fig:IBE_counterexample_bin_tree}
    \end{figure}
    We quickly see, that there are only two possible strategy profiles, one where $a_1$ is at $q_1$ and one where $a_1$ is at $q_2$. In both cases, $a_1$ has utility $p\left(\frac{2}{3}\right)$ and sees utility $p\left(\frac{1}{2}\right)$ at the other resource. Thus, from $p\left(\frac{1}{2}\right)>p\left(\frac{2}{3}\right)$ it follows, that $a_1$ alternates between $q_1$ and $q_2$ indefinitely and none of the two possible strategy profiles is in IBE.
\end{proof}
However, for $\Lambda$ values large enough, the IB-FIP still holds.
\begin{restatable}[]{lemma}{lemma_bin_tree_IBE}
    On any binary tree for all $\Lambda \in \left[\frac{6}{10},1\right)$ the IB-FIP holds.
    \label{lma:bin_tree_IB_FIP}
\end{restatable}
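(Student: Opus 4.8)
The plan is to combine the earlier structural lemmas with the degree bound $\Delta_G(\res)\le 3$ of a binary tree to classify all improving moves completely, and then to exhibit a lexicographic potential. Because a binary tree has maximum resource degree $\Delta_G(\res)\le 3$ and $\Lambda\ge\frac{6}{10}=\frac{3}{5}$, \autoref{lma:lower_bound_mon_incr_proof} applies and $p$ is strictly increasing on $(0,1)$, so — apart from the fractions $0$ and $1$, which both give utility $0$ — an agent is happier the larger its own-color fraction is. On a degree-at-most-$3$ resource every non-empty, non-monochromatic state $(\#_R,\#_B)$ has own-color fraction in $\{\frac13,\frac12,\frac23\}$, and an agent that has access to a resource it does not occupy sees at most two other agents there; hence the only join it can ever perceive as nonzero and non-monochromatic is into a $(1,1)$ resource, with perceived utility $p(\frac12)$. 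Together with \autoref{lma:IB_monochromatic_no_joins} and \autoref{lma:IB_deg_leq_2}, this shows the only improving moves are those of an agent whose current utility is strictly below $p(\frac12)$ — that is, an agent on a monochromatic resource (utility $0$) or at own-fraction $\frac13$ (utility $p(\frac13)$) — jumping onto a $(1,1)$ resource and thereby reaching own-fraction $\frac23$. I will call these \emph{mono-origin} and \emph{minority-origin} moves; note that agents at own-fraction $\frac12$ or $\frac23$ never move, since $p(\frac12)$ does not beat their current utility.

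The key step is to define $\Phi(\strat)$ as the number of agents whose own-color fraction equals $\frac23$ and $M(\strat)$ as the number of non-empty monochromatic resources, and to track how the two move types act on them. A mono-origin move turns the target $(1,1)$ into $(2,1)$ (symmetrically $(1,2)$), creating two agents at fraction $\frac23$, while its monochromatic origin contributes no fraction-$\frac23$ agent before or after; hence $\Phi$ increases by $2$. A minority-origin move again gains two fraction-$\frac23$ agents at the target, but its origin $(1,2)$ loses its two majority agents from fraction $\frac23$ (they drop to the monochromatic state $(0,2)$), so $\Phi$ stays constant; crucially, that same origin becomes a \emph{new} monochromatic resource while the target stays non-monochromatic, so $M$ increases by exactly $1$. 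Thus $\Phi$ never decreases, strictly increases on mono-origin moves, and is constant on minority-origin moves; moreover, by \autoref{lma:IB_monochromatic} monochromatic resources are absorbing, so $M$ can only fall when such a resource empties, which is itself a mono-origin move, and in particular $M$ cannot fall on a minority-origin move.

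It then follows that the pair $(\Phi,M)$ strictly increases in the lexicographic order after every improving move: a mono-origin move raises the first coordinate, and a minority-origin move fixes the first coordinate while raising the second. Since $\Phi\le n$ and $M\le k$ are bounded non-negative integers, this lexicographic value increases only finitely often, so every sequence of impact-blind improving moves from any profile is finite; that is, the IB-FIP holds.

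I expect the main obstacle to be the exhaustive case analysis behind the first step — in particular, rigorously arguing that no join is ever perceived as better than $p(\frac12)$ (using that an agent with access to a degree-at-most-$3$ resource it does not occupy sees at most two agents there), and verifying the effect of each move on $\Phi$ and $M$ over every admissible origin state of both colors. A second, more conceptual point is recognising that $\Phi$ alone is not a strict monovariant, since minority-origin moves leave it fixed; this is precisely the gap that the monochromatic count $M$, absorbing by \autoref{lma:IB_monochromatic}, is introduced to close.
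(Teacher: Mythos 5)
Your proof is correct and takes essentially the same route as the paper: both hinge on the same classification — for $\Lambda \geq \frac{3}{5}$ on a binary tree, the only impact-blind improving moves are jumps onto a $(1,1)$ resource by an agent currently at utility $0$ or $p\left(\frac{1}{3}\right)$ — and both then derive finiteness from the fact that monochromatic and empty resources never regain agents. The paper's bookkeeping differs only cosmetically: instead of your lexicographic pair $(\Phi,M)$, it bounds the two move types directly (each minority-origin jump makes a resource newly monochromatic, hence at most $|\res|$ such jumps; each mono-origin jump drains an agent from a monochromatic resource that can never refill, hence at most $3\cdot|\res|$ such jumps).
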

\begin{proof}
    Since $\Lambda \in \left[\frac{3}{5},1\right)$ it follows from \autoref{lma:lower_bound_mon_incr_proof} that $p\left(\frac{2}{3}\right) \geq p\left(\frac{1}{2}\right) \geq p\left(\frac{1}{3}\right)$.
     We note that on binary trees in the impact-blind setting, agents only ever see the utilities $0$ and $p\left(\frac{1}{2}\right)$. Thus, the only jumps, which can occur are ones where an agent's utility changes from $0$ to $p\left(\frac{2}{3}\right)$, as the agent sees $p\left(\frac{1}{2}\right)$ and similarly jumps from $p\left(\frac{1}{3}\right)$ to $p\left(\frac{2}{3}\right)$. \\
     A jump, which alters an agent's utility from $p\left(\frac{1}{3}\right)$ to $p\left(\frac{2}{3}\right)$ makes the resource they previously occupied monochromatic. The resource they join had one blue and one red agent before, meaning it was not monochromatic, from which it follows that the total number of monochromatic resources increases by one. From this, it follows, that this type of jump occurs at most $|\res|$ times.\\
     Similarly, the second possible jump from $0$ to $p\left(\frac{2}{3}\right)$ removes one $m$-monochromatic resource and adds one $m-1$-monochromatic resource for $m\in\{2,3\}$ or removes a $1$-monochromatic resource and adds one empty resource. As no agent joins empty or monochromatic resources according to \autoref{lma:IB_monochromatic_no_joins}, this jump type can happen at most $3\cdot|\res|$ times. Given that all possible types of jumps occur only a finite number of times, the IB-FIP holds.
\end{proof}

In \autoref{lma:bin_tree_IB_FIP}, we observe that it is sufficient to use a (lexicographic) potential argument based on the number of empty and monochromatic resources, with monochromatic resources being counted separately depending on the number of agents they contain. However, as $\Delta_G(\res)$ increases, simply counting monochromatic and empty resources becomes insufficient. To address this, we extend the initial approach by an argument based on jumps increasing social welfare, similar to the method used in Lemma 1 by \textcite{harder_strategic_2023}. A limitation arising from this addition is that the $p$-function must be linear for the proof to proceed similarly, thereby slightly weakening the generality of our statement. As a reminder, a $p$-function is called linear if and only if there exists $m_p\in\mathbb{R}_+$ such that for all $x\in(0,\Lambda]$ it holds that $p(x)=m_p\cdot x$. This leads us to the following definition of the potential function $\Phi(\strat)$.

\begin{restatable}[$\Phi$]{definition}{}
    Let $G=(\act\cup\res,\edg)$ be any graph and $\strat$ be any strategy profile. Then $\Phi(\strat)$ is the vector, which contains the number of empty resources in $G$, the number of $1,2,…,\Delta_G(\res)$-monochromatic resources, the number of resources with globally optimal utility for one color and the social welfare in that order. Formally, we define \\
    \[ \Phi(\strat)=
    \left(
    \begin{array}{l}
    |\{q \in \res \mid \text{$q$ is empty in $\strat$}\}|, \\
    |\{q \in \res \mid \text{$q$ is 1-monochromatic in $\strat$}\}|, \\
    |\{q \in \res \mid \text{$q$ is 2-monochromatic in $\strat$}\}|, \\
    \vdots \\
    |\{q \in \res \mid \text{$q$ is $\Delta_G(\res)$-monochromatic in $\strat$}\}|, \\
    |\{q \in \res \mid \text{$\#_R(q,\strat) = \Delta_G(\res) - 1$ or $\#_B(q,\strat) = \Delta_G(\res) - 1$}\}|, \\
    \sum_{a \in \act} u(a, \strat)
    \end{array}
    \right).
    \]
\end{restatable}

\begin{restatable}[]{theorem}{theorem6}
\label{thm:IB_monotonical}
    {For all graphs $G$, a linear $p$-function and $\Lambda \geq \frac{\Delta_G(\res)(\Delta_G(\res)-2)}{\Delta_G(\res)^2-\Delta_G(\res)-1}$ an impact-blind improving move lexicographically increases $\Phi(\strat)$. The number of steps increasing $\Phi$ is limited.}
\end{restatable}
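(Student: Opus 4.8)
The plan is to classify every impact-blind improving move by its effect on the resource $q$ that the deviating agent leaves and the resource $q'$ it joins, and to show that in each class the lexicographically first coordinate of $\Phi$ that changes strictly increases. Assume without loss of generality that the deviating agent is red. Two structural facts drive everything. First, by \autoref{lma:lower_bound_mon_incr_proof} the bound on $\Lambda$ makes $p$ strictly increasing on all obtainable fractions in $(0,1)$, so an improving move is exactly a move to a resource at which the agent perceives a strictly larger own-color fraction. Second, since impact-blind agents never join empty or monochromatic resources (\autoref{lma:IB_monochromatic_no_joins} together with the model), $q'$ is bichromatic both before and after the jump; in particular the empty count can never decrease and $q'$ contributes to no monochromatic count, so all changes to the first $\Delta_G(\res)+1$ coordinates of $\Phi$ are caused solely by $q$.

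First I would dispatch the moves that touch the early coordinates. If $q$ becomes empty (it was $1$-monochromatic), the empty count rises and $\Phi$ increases in its first coordinate. If $q$ was $\alpha$-monochromatic with $\alpha\ge 2$, it becomes $(\alpha-1)$-monochromatic, so the first changed monochromatic coordinate — the one for $(\alpha-1)$ — increases, while the $\alpha$-coordinate, which decreases, appears later. If $q$ was bichromatic but the deviating agent was its last red agent, $q$ turns $\beta$-monochromatic and that count increases. In every remaining move both $q$ and $q'$ stay bichromatic, and these are the only moves leaving the first $\Delta_G(\res)+1$ coordinates untouched.

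For the bichromatic-to-bichromatic moves I would turn to the last two coordinates. The role of the \emph{globally optimal} coordinate is to absorb exactly the moves that create a resource at the extremal fraction $z_1=\frac{\Delta_G(\res)-1}{\Delta_G(\res)}$, which by \autoref{lma:largest_fraction} is the unique obtainable fraction that may lie on the decreasing side of the peak; all other obtainable fractions are at most $z_2=\frac{\Delta_G(\res)-2}{\Delta_G(\res)-1}\le\Lambda$. I would argue that a bichromatic $q'$ can never already be globally optimal, since it still has room for the incoming agent and so neither color reaches $\Delta_G(\res)-1$; that an improving move can never destroy $q$'s globally-optimal status, because losing it via the agent's own color would put that agent at the global maximum $p(z_1)$ with no improving move available; and hence this coordinate can only increase. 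Whenever it does, $\Phi$ increases there.

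In the final class — both resources bichromatic and the globally-optimal coordinate unchanged — every fraction involved is at most $z_2\le\Lambda$, so linearity $p(x)=m_p x$ applies to all terms. Here I would use that each such bichromatic resource contributes $m_p\bigl(\#(q,\strat)-2\,g(\#_R(q,\strat),\#_B(q,\strat))\bigr)$ to $W$, where $g(x,y)=\frac{xy}{x+y}$; since the jump preserves the total agent count, $W$ increases exactly when $g$ decreases in sum over $q$ and $q'$. Computing the marginal $g(x+1,y)-g(x,y)=\frac{y^2}{(x+y)(x+1+y)}$, the goal becomes $g(\alpha,\beta)-g(\alpha-1,\beta) > g(\gamma+1,\delta)-g(\gamma,\delta)$, i.e. $b_q^2\,\frac{n_q}{n_q-1} > b_{q'}^2\,\frac{n_{q'}}{n_{q'}+1}$ where $b_q,b_{q'}$ are the blue fractions at $q,q'$ and $n_q,n_{q'}$ the totals. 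The improving condition $\frac{\gamma}{\gamma+\delta}>\frac{\alpha}{\alpha+\beta}$ is equivalent to $b_q>b_{q'}$, and since $\frac{n_q}{n_q-1}>1>\frac{n_{q'}}{n_{q'}+1}$, the inequality is immediate, so $W$ strictly increases and $\Phi$ increases in its last coordinate. The main obstacle is precisely this interaction between $z_1$ and linearity: the welfare computation is valid only off the decreasing branch, so the argument hinges on the globally-optimal coordinate cleanly isolating the moves that touch $z_1$ and being monotone along improving moves. Finiteness then follows at once: there are finitely many strategy profiles, so a vector with finite range that strictly increases lexicographically at every step can change only finitely often.
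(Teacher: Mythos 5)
Your proposal is correct and follows essentially the same route as the paper's proof: the same case split on whether the vacated resource becomes empty, stays or turns monochromatic, or both resources remain bichromatic (with the globally-optimal coordinate absorbing exactly the jumps that create the fraction $\frac{\Delta_G(\res)-1}{\Delta_G(\res)}$), followed by the same key welfare inequality $\frac{b_1^2}{(r_1+b_1)(r_1+b_1-1)}>\frac{b_2^2}{(r_2+b_2)(r_2+b_2+1)}$, which your marginal computation for $g(x,y)=\frac{xy}{x+y}$ reproduces exactly. The only cosmetic deviations are that the paper derives this inequality by comparing both sides to the squared blue fractions rather than via marginals of $g$, and it concludes finiteness from integer-valued counts plus a lower bound on the welfare increase rather than from the finiteness of the set of strategy profiles; both variants are sound.
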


\begin{proof}
    Let w.l.o.g., a red agent $a$ make an impact-blind improving move from resource $q$ to $q'$, changing the strategy profile from $\bold{s}$ to $\bold{s'}$. Let $r_1 = \# _R(q,\bold{s})$, $b_1 = \#_B(q,\bold{s})$, $r_2 = \#_R(q',\bold{s})$, $b_2 = \#_B(q',\bold{s})$. \\
    \textbf{Case 1:} ($r_1 = 1$ and $b_1 = 0$): The jump causes $q$ to become empty in $\strat'$, thus increasing the number of empty resources by one. Note that $q'$ cannot be empty in $\strat$ as the jump would not be impact-blind improving in that case. Hence, as the first element increases from $\Phi(\strat)$ to $\Phi(\strat')$ the potential function $\Phi$ increases lexicographically from $\strat$ to $\strat'$.\\
    \textbf{Case 2:} ($r_1 = 1$ and $b_1 > 0$): The move results in $q$ becoming $b_1$-monochromatic in $\strat'$. According to \autoref{lma:IB_monochromatic_no_joins} $q'$ is not monochromatic in $\strat$ and since $q'$ cannot be empty either for the move to be improving, the number of $b_1$-monochromatic resources increases by one, while the number of empty and $i$-monochromatic resources with $i\in[1,b_1)$ remains unchanged. Consequently, $\Phi$ increases lexicographically from $\strat$ to $\strat'$.\\
    \textbf{Case 3:} ($r_1>1$ and $b_1=0$): In this case, $q$ changes from $r_1$- to $r_1-1$-monochromatic as the strategy shifts from $\strat$ to $\strat'$. Additionally, similarly to Case 2, $q'$ must not have been empty or monochromatic in $\strat$. Therefore, the number of empty resources remains unchanged, while the number of $r_1$-monochromatic resources decreases by one, and the number of $r_1-1$-monochromatic resources increases by one. Since the latter appears earlier in the lexicographic order of $\Phi$, we conclude that $\Phi$ increases lexicographically from $\strat$ to $\strat'$. \\
    \textbf{Case 4:} ($r_1 > 1$, $b_1>0$ and $\frac{r_2+1}{r_2+1+b_2}=\frac{\Delta_G(\res)-1}{\Delta_G(\res)}$): Analogous to the previous cases, $q'$ cannot be empty or monochromatic in $\strat$ for the move to be improving. Since $q'$ gains an agent in the transition to $\strat'$ it holds that $q'$ cannot be empty or monochromatic in $\strat'$ either. Additionally, given $r_1 > 1$ and $b_1>0$, we have that $q$ is not monochromatic or empty in both $s$ and $s'$. Thus, the number of monochromatic and empty resources remains unchanged from $\strat$ to $\strat'$. Since $\frac{r_2+1}{r_2+1+b_2}=\frac{\Delta_G(\res)-1}{\Delta_G(\res)}$, we see that, according to \autoref{lma:largest_fraction}, $q'$ becomes a resource with maximum utility for red agents in $\strat'$. Regarding $q$, note that $q$ cannot provide optimal utility for any color in $\strat$. From $r_1 > 1$, it follows that, without loss of generality, blue agents do not have optimal utility according to \autoref{lma:largest_fraction}. Additionally, if red agents had optimal utility, this would imply that $a$ jumped away from a resource with optimal utility, which leads to a contradiction. Furthermore, as $q$ loses an agent from $\strat$ to $\strat'$, and given that optimal utility requires all adjacent agents of a resource to select that resource, which follows from \autoref{lma:largest_fraction}, it follows that $q$ does not provide optimal utility in $\strat'$ either. Thus, the number of resources providing maximum utility to one color increases by $1$ from $\strat$ to $\strat'$. Consequently, $\Phi$ increases lexicographically from $\strat$ to $\strat'$. \\
    \textbf{Case 5:} ($r_1 > 1$, $b_1>0$ and $\frac{r_2+1}{r_2+1+b_2}<\frac{\Delta_G(\res)-1}{\Delta_G(\res)}$): We begin by noting that, analogous to Case 4, neither $q$ nor $q'$ can be empty or monochromatic in either $\strat$ or $\strat'$. Additionally, since $\frac{r_2+1}{r_2+1+b_2}<\frac{\Delta_G(\res)-1}{\Delta_G(\res)}$ it follows that $q'$ does not yield maximum utility for any color in both $\strat$ and $\strat'$. Thus, the only difference between $\Phi(\strat)$ and $\Phi(\strat')$ lies in the social welfare. It remains to show, that the social welfare always increases with this type of jump. \\
    We observe that since neither $q$ nor $q'$ provides maximum utility in either $\strat$ or $\strat'$, it holds that 
    \begin{equation*}
    \rho_R(q,\strat),\rho_R(q,\strat'),\rho_B(q,\strat),\rho_B(q,\strat')\in(0,\Lambda)
    \end{equation*}
    and
    \begin{equation*}
        \rho_R(q',\strat),\rho_R(q',\strat'),\rho_B(q',\strat),\rho_B(q',\strat')\in(0,\Lambda).
    \end{equation*}
    This follows from utilities increasing with increasing fractions as stated in \autoref{lma:lower_bound_mon_incr_proof}, which implies that if there were any fraction less than $\frac{\Delta_G(\res)-1}{\Delta_G(\res)}$ larger than $\Lambda$, the fraction $\frac{\Delta_G(\res)-1}{\Delta_G(\res)}$ would not be maximal due to $p$ being monotonically decreasing in $[\Lambda,1)$, which contradicts \autoref{lma:largest_fraction}. From this point onwards, our argumentation is analogous to that of \textcite{harder_strategic_2023} in Lemma 1.

    The social welfare at $q$ in $\strat$ is given by 
    \begin{equation*}
        W_q(\strat)
        =m_p\left(\frac{r_1^2+b_1^2}{r_1+b_1}\right)
        =m_p\left(\frac{2b_1^2+r_1^2+r_1b_1-b_1^2-r_1b_1}{r_1+b_1}\right)
        =m_p\left(r_1-b_1+2\left(\frac{b_1^2}{r_1+b_1}\right)\right),
    \end{equation*}
    where $m_p$ is the slope of the linear function $p$ in $[0,\Lambda]$.
    Similarly, the social welfare at $q$ in $\strat'$ is given by
    \begin{align*}
        W_q(\strat')
        &=m_p\left(\frac{(r_1-1)^2+b_1^2}{r_1+b_1-1}\right) \\
        &=m_p\left(\frac{r_1^2+r_1b_1-r_1-r_1b_1-b_1^2+b_1-r_1-b_1+1+2b_1^2}{r_1+b_1-1}\right) \\
        &=m_p\left(r_1-b_1-1+2\left(\frac{b_1^2}{r_1+b_1-1}\right)\right).
    \end{align*}
    Thus, the difference of social welfare at $q$ is
    \begin{align*}
        W_q(\strat')-W_q(\strat)
        =m_p\left(2\left(\frac{b_1^2}{r_1+b_1-1}-\frac{b_1^2}{r_1+b_1}\right)-1\right)
        =m_p\left(\frac{2b_1^2}{(r_1+b_1)(r_1+b_1-1)}-1\right)
    \end{align*}
    Analogously, for $q'$ in $\strat$ we obtain
    \begin{align*}
        W_{q'}(\strat)
        =m_p\left(\frac{r_2^2+b_2^2}{r_2+b_2}\right)
        =m_p\left(\frac{r_2^2+r_2b_2-r_2b_2-b_2^2+2b_2^2}{r_2+b_2}\right)
        =m_p\left(r_2-b_2+2\left(\frac{b_2^2}{r_2+b_2}\right)\right)
    \end{align*}
    and similarly for $\strat'$ we get
    \begin{align*}
        W_{q'}(\strat')
        &=m_p\left(\frac{(r_2+1)^2+b_2^2}{r_2+b_2+1}\right) \\
        &=m_p\left(\frac{r_2^2+r_2b_2+r_2-b_2^2-r_2b_2-b_2+r_2+b_2+1+2b_2}{r_2+b_2+1}\right) \\
        &=m_p\left(r_2-b_2+1+2\left(\frac{b_2^2}{r_2+b_2+1}\right)\right).
    \end{align*}
    For the difference in social welfare at $q'$ we now have
    \begin{align*}
        W_{q'}(\strat')-W_{q'}(\strat)
        =m_p\left(2\left(\frac{b_2^2}{r_2+b_2+1}-\frac{b_2^2}{r_2+b_2}\right)+1\right)
        =m_p\left(1-\frac{2b_2^2}{(r_2+b_2)(r_2+b_2+1)}\right).
    \end{align*}
    Thus, for the difference in social welfare we get
    \begin{align*}
        W(\strat')-W(\strat)
        &=W_q(\strat')-W_q(\strat)+W_{q'}(\strat')-W_{q'}(\strat) \\
        &=2m_p\left(\frac{b_1^2}{(r_1+b_1)(r_1+b_1-1)}-\frac{b_2^2}{(r_2+b_2)(r_2+b_2+1)}\right) \\
        &>2m_p\left(\frac{b_1^2}{(r_1+b_1)^2}-\frac{b_2^2}{(r_2+b_2)^2}\right).
    \end{align*}
    As the move is impact-blind improving we have $\frac{r_1}{r_1+b_1}<\frac{r_2}{r_2+b_2}$, which is equivalent to $\frac{b_1}{r_1+b_1}>\frac{b_2}{r_2+b_2}$. It follows since $m_p>0$ that the social welfare increases from $\strat$ to $\strat'$. \\
    Since nothing aside from the social welfare in $\Phi$ changes from $\strat$ to $\strat'$ the potential function $\Phi$ increases lexicographically. \\
    To see that the number of steps increasing $\Phi$ in a sequence of improving moves is limited, we first observe that the number of empty resources, monochromatic resources and resources, which provide one color with maximum utility can only take integer values in $[0,n]$. Since we additionally obtained a lower bound for the increase in social welfare in Case 5, $\Phi$ can only increase finitely often.
\end{proof}
We now obtain our main result in this section, which follows directly from the previous theorem.
\begin{restatable}[]{theorem}{}
\label{thm:IB-FIP}
    For all graphs $G$, a linear $p$-function and $\Lambda \geq \frac{\Delta_G(\res)(\Delta_G(\res)-2)}{\Delta_G(\res)^2-\Delta_G(\res)-1}$ the IB-FIP holds.
\end{restatable}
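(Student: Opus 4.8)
The plan is to treat this statement as an immediate corollary of \autoref{thm:IB_monotonical}. Recall that the IB-FIP holds precisely when, from every starting strategy profile, every sequence of impact-blind improving moves reaches an IBE after a finite number of steps. Hence it suffices to show that under the stated hypotheses no infinite sequence of impact-blind improving moves can exist. First I would invoke \autoref{thm:IB_monotonical}: since $p$ is linear and $\Lambda \geq \frac{\Delta_G(\res)(\Delta_G(\res)-2)}{\Delta_G(\res)^2-\Delta_G(\res)-1}$, every impact-blind improving move strictly increases $\Phi(\strat)$ in the lexicographic order. Consequently, along any sequence of improving moves the associated sequence of $\Phi$-values is strictly increasing in the lexicographic order, so in particular no strategy profile can ever recur.

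The crucial point -- which is already contained in \autoref{thm:IB_monotonical} -- is that such a strictly lexicographically increasing sequence of $\Phi$-values must be finite. This deserves a moment of care because the last coordinate of $\Phi$, the social welfare $\sum_{a \in \act} u(a,\strat)$, is real-valued rather than integer-valued, and a lexicographic potential whose final coordinate could vary continuously might a priori increase infinitely often while approaching a supremum. The earlier theorem rules this out: the first $\Delta_G(\res)+2$ coordinates of $\Phi$ are integers confined to $[0,k]$ with $k=|\res|$, so together they can change only finitely often; and whenever the social-welfare coordinate is the one that changes (Case~5 of \autoref{thm:IB_monotonical}), it increases by at least the fixed positive amount bounded below there. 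Thus between any two changes of the integer coordinates the welfare can rise only finitely many times, so the entire sequence of moves is finite.

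Finally, a maximal finite sequence of impact-blind improving moves terminates at a strategy profile in which no agent has an impact-blind improving move, i.e., at an IBE. Since this argument applies starting from an arbitrary profile, the IB-FIP follows. I do not expect a genuine obstacle here, as the statement is a direct consequence of the preceding theorem; the only subtlety worth flagging is the one just noted, namely that finiteness hinges on the explicit lower bound on the welfare increase from Case~5 rather than on mere strict monotonicity of $\Phi$. That ingredient is supplied by \autoref{thm:IB_monotonical}, so the present proof reduces to assembling these observations.
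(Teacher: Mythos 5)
Your proposal is correct and takes essentially the same route as the paper: the paper's proof likewise consists of invoking \autoref{thm:IB_monotonical} (every impact-blind improving move lexicographically increases $\Phi$, and the number of $\Phi$-increasing steps is bounded) and concluding that any sequence of improving moves from any starting profile terminates in an IBE. The subtlety you flag about the real-valued social-welfare coordinate is already resolved inside \autoref{thm:IB_monotonical} via the lower bound on the welfare increase in its Case~5, exactly as you note, so your argument is just a slightly more explicit assembly of the same ingredients.
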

\begin{proof}
    According to \autoref{thm:IB_monotonical}, it holds that $\Phi$ increases with every impact-blind improving move and that the number of moves increasing $\Phi$ is limited. Thus, starting from any strategy profile, the number of moves until an equilibrium is reached is limited and the IB-FIP holds.
\end{proof}
To demonstrate that this result is tight, we need to establish that for all $\Lambda$ values outside the interval specified in \autoref{thm:IB-FIP}, the IB-FIP does not hold. We achieve this by showing the existence of instances where no IBE can be found, thereby providing an even stronger result.
\begin{restatable}[]{theorem}{theorem_no_IBE}
\label{thm:no_IBE}
    For all $d\in \mathbb{N}_{>2}$ and $\Lambda < \frac{d(d-2)}{d^2-d-1}$ there exists a graph $G$ with $\Delta_G(\res)=d$ such that no IBE exists.
\end{restatable}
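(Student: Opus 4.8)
The plan is to generalize the binary-tree counterexample of \autoref{lma:bin_tree_no_IBE} to an arbitrary maximum resource degree $d>2$. The guiding observation is that the threshold $\frac{d(d-2)}{d^2-d-1}$ is exactly the value of $\Lambda$ at which the two largest attainable same-color fractions below $1$ swap their $p$-values. Writing $z_1=\frac{d-1}{d}$ and $z_2=\frac{d-2}{d-1}$ (the largest and second-largest fractions at a resource of degree at most $d$, as isolated in the proofs of \autoref{lma:largest_fraction} and \autoref{lma:lower_bound_mon_incr_proof}), a short computation gives
\begin{equation*}
    \frac{z_2}{1-z_1+z_2}=\frac{\frac{d-2}{d-1}}{\frac1d+\frac{d-2}{d-1}}=\frac{d(d-2)}{d^2-d-1}.
\end{equation*}
Hence by \autoref{lma:lambda_lower_bound_monotonical} the hypothesis $\Lambda<\frac{d(d-2)}{d^2-d-1}$ yields $p(z_2)>p(z_1)$: an agent whose color occupies the fraction $\frac{d-1}{d}$ at its resource strictly prefers a resource on which its color occupies only $\frac{d-2}{d-1}$.

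I would then realize this preference as an unavoidable oscillation. Take two resources $q_1,q_2$ and a single \emph{oscillating} red agent $a$ with $Q(a)=\{q_1,q_2\}$. To each $q_i$ attach $d-2$ additional red agents and one blue agent, every one of them of degree $1$ (accessible only to $q_i$). Since $d>2$ this is well defined, and $deg_G(q_1)=deg_G(q_2)=d$, so $\Delta_G(\res)=d$ exactly. Each attached agent has a single accessible resource and thus can never move (mirroring \autoref{lma:IB_deg_leq_2}), so the only degree of freedom in any strategy profile is the position of $a$, leaving precisely two profiles.

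In either profile $a$ sits on a resource carrying $d-1$ red agents (including $a$) and one blue agent, giving red fraction $z_1$ and utility $p(z_1)$, while the other resource carries $d-2$ red and one blue agent, so $a$ perceives red fraction $\rho_R(\cdot,\strat)=z_2$ there. By the definition of an impact-blind improving move and $p(z_2)>p(z_1)$, jumping to the other resource is improving; by symmetry this holds in both profiles, so $a$ always has an improving move and neither profile---hence no profile---is an IBE. I expect the only delicate points to be (i) pinning down the arithmetic identity above so that \autoref{lma:lambda_lower_bound_monotonical} applies, together with the strictness $p(z_2)>p(z_1)$: the equivalence in that lemma is exact, and equality $p(z_1)=p(z_2)$ occurs only when $z_2=\frac{\Lambda(1-z_1)}{1-\Lambda}$, i.e.\ only at $\Lambda=\frac{d(d-2)}{d^2-d-1}$, so strict $\Lambda<$ gives strict $p(z_2)>p(z_1)$ exactly as in \autoref{lma:bin_tree_no_IBE}; and (ii) confirming immobility of the attached agents, which is immediate from their degree. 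The construction is otherwise a clean lift of the $d=3$ case.
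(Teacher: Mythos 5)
Your proposal is correct and takes essentially the same route as the paper's own proof: the identical two-resource construction with one shared red agent and, pinned to each resource $q_i$, $d-2$ red agents and one blue agent, so that the shared agent oscillates forever because $p\left(\frac{d-2}{d-1}\right)>p\left(\frac{d-1}{d}\right)$ via \autoref{lma:lambda_lower_bound_monotonical}. Your explicit check that $\frac{z_2}{1-z_1+z_2}=\frac{d(d-2)}{d^2-d-1}$ and that strict $\Lambda<\frac{d(d-2)}{d^2-d-1}$ forces the \emph{strict} inequality $p(z_2)>p(z_1)$ (equality occurring only at the threshold) is a point the paper leaves implicit, but it is a refinement of the same argument rather than a different one.
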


\begin{proof}
    The proof is analogous to that in \autoref{lma:bin_tree_no_IBE}. From the assumption for $\Lambda$ and \autoref{lma:lambda_lower_bound_monotonical} it follows that $p\left(\frac{d-1}{d}\right)<p\left(\frac{d-2}{d-1}\right)$. We now construct a graph similar to the one in \autoref{fig:IBE_counterexample_bin_tree}. We have $\res=\{q_1,q_2\}$, a red agent $a_1$ who has access to both $q_1$ and $q_2$ as well $2$ groups of one blue and $d-2$ red agents who have access only to $q_1$ or $q_2$ respectively.
    We note that both $q_1$ and $q_2$ have degree $d$. We see that $a_1$ has utility $p\left(\frac{d-1}{d}\right)$ independent of their strategy being $q_1$ or $q_2$. Furthermore, they always see utility $p\left(\frac{d-2}{d-1}\right)$ at the resource, at which they are not currently at. Thus, $a_1$ changes their strategy every round and none of the two possible strategy profiles is in IBE. \\
\end{proof}
Finally, this provides us with tight bounds for the existence of equilibria in the game in the impact-blind setting on arbitrary graphs.

\section{Impact-Aware Equilibria}

In this section, we begin by establishing tight bounds on $\Lambda$ such that the IA-FIP holds on cycle graphs and binary trees, with the latter being our main result. For $\Lambda$-values outside the bounds, we strengthen our results by proving the existence of instances which admit no equilibrium. For arbitrary graphs, we narrow down the range of $\Lambda$-values where equilibria may exist to values such that a jump increases utility if and only if it decreases the fraction of same-type agents (except for same-type fraction $0$). Moreover, we illustrate how several conventional methods fail to prove the IA-FIP within the remaining range.

\begin{restatable}[]{lemma}{Lemma2.7}
    On any cycle graph for any $\Lambda \in (0,1)$ the IA-FIP holds.
\end{restatable}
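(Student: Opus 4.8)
The plan is to use the number of bichromatic resources as a bounded, integer-valued potential. Since $\G$ is a cycle and bipartite, resources and agents alternate along it, so every resource has degree $2$ and every agent has exactly two accessible resources; in particular $\#(q,\strat)\le 2$ for every resource $q$ and every strategy profile $\strat$. I would first enumerate the possible states of a resource together with the utility they confer: an empty resource is occupied by nobody; a resource holding a single agent, or two agents of the same color, gives every agent on it the same-color fraction $1$ and hence utility $p(1)=0$; and a resource holding exactly one red and one blue agent gives both of them utility $p(\tfrac12)$. Because the definition of $p$ forces $p$ to be strictly increasing on $[0,\Lambda]$ (and its reflection on $(\Lambda,1]$), we have $p(\tfrac12)>0=p(0)=p(1)$ for every $\Lambda\in(0,1)$. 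Thus on a cycle the only way to obtain positive utility is to sit on a bichromatic resource, and $p(\tfrac12)$ is the largest utility attainable anywhere in the instance.

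Next I would characterize the impact-aware improving moves. An agent on a bichromatic resource already attains $p(\tfrac12)$, the maximal attainable utility, so it has no improving move. Hence every improving move is made by an agent $a$ with current utility $0$. Let $a$ consider its other accessible resource $q'$, and write $a''$ for the unique further agent with access to $q'$. If $a''$ is absent then $a$ would land on an empty resource (utility $0$, not improving; this is exactly the statement that impact-aware agents never jump to empty resources), and if $a''$ is present but of the same color the move makes $q'$ monochromatic (utility $0$, not improving). Therefore an improving move exists precisely when $a''$ is present at $q'$ and has the opposite color, in which case $a$'s utility rises from $0$ to $p(\tfrac12)$ and $q'$ becomes bichromatic.

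I would then define the potential $\Psi(\strat)$ to be the number of bichromatic resources in $\strat$ and show that every improving move increases $\Psi$ by exactly one. In the move described above, $q'$ turns from a (non-bichromatic) singleton into a bichromatic resource. The resource $q$ that $a$ leaves was not bichromatic, since $a$ had utility $0$ on it; after $a$ leaves, $q$ either becomes empty or retains a single same-color agent, so it is still not bichromatic. No other resource is touched, so $\Psi$ increases by exactly $1$. Since $\Psi$ is a nonnegative integer bounded above by $|\res|$, only finitely many improving moves can occur from any starting profile, which is precisely the IA-FIP; as the argument never used a specific value of $\Lambda$ beyond $p(\tfrac12)>0$, it holds for all $\Lambda\in(0,1)$.

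The main obstacle is the case analysis establishing that each improving move creates exactly one fresh bichromatic resource without simultaneously destroying another: one must verify, across the sub-cases of whether $a$ was alone or paired with a same-color agent on $q$, that the source resource never becomes bichromatic, and one must correctly rule out moves by agents already on bichromatic resources by observing that $p(\tfrac12)$—rather than the global peak $p(\Lambda)$—is the ceiling on attainable utility in a cycle.
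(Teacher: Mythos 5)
Your proof is correct and rests on exactly the same observations as the paper's: on a cycle only the utilities $0$ and $p\left(\frac{1}{2}\right)$ occur, $p\left(\frac{1}{2}\right)$ is maximal so agents on bichromatic resources never move, and hence every improving move permanently converts a non-bichromatic resource into a bichromatic one. The only (cosmetic) difference is the bookkeeping: you bound the number of moves by $|\res|$ via the count of bichromatic resources, while the paper bounds it by $n$ by noting each agent can improve at most once.
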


\begin{proof}
    The only two possible utilities for an agent are $0$ and $p\left(\frac{1}{2}\right)$. Since $p\left(\frac{1}{2}\right)$ is optimal, no resource shared by two agents of different colors will ever be left. Therefore, utilities never decrease. Additionally, as there is only one possible increase in utility per agent (from $0$ to $p\left(\frac{1}{2}\right)$), each agent makes at most one move and an IAE is reached in at most $n$ steps.
\end{proof}
Whilst the complexity of the game on cycle graphs is fairly limited, the same cannot be said for binary trees. We divide our proof for the IA-FIP on binary trees for $\Lambda\in\left(0,\frac{1}{2}\right]$ into two cases and use a potential argument for the first interval and an inductive argument for the second one.
\begin{restatable}[]{theorem}{} 
    \label{thm:bin_tree_IA_FIP_Lambda_less_0,4}
    On any binary tree for $\Lambda \in \left(0,\frac{2}{5}\right)$ the IA-FIP holds.
\end{restatable}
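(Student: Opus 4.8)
The plan is to exploit the fact that on a binary tree every resource has degree at most three (as used for \autoref{lma:bin_tree_no_IBE}), so an agent that is present at a resource can only ever experience the same-colour fractions $\tfrac13,\tfrac12,\tfrac23$ and $1$. Since $\Lambda<\tfrac25$, I would first pin down the utility ordering by applying \autoref{lma:lambda_lower_bound_monotonical} three times: $p(\tfrac13)<p(\tfrac12)\iff\Lambda>\tfrac25$, $p(\tfrac12)<p(\tfrac23)\iff\Lambda>\tfrac35$, and $p(\tfrac13)<p(\tfrac23)\iff\Lambda>\tfrac12$, each of which fails on $(0,\tfrac25)$. Hence throughout this range we have the strict chain $p(\tfrac13)>p(\tfrac12)>p(\tfrac23)>0=p(1)$: an agent strictly prefers a $1{:}2$ minority, then a balanced pair, then a $2{:}1$ majority, then being alone or among same-coloured agents.

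Next I would enumerate every impact-aware improving move. By colour symmetry it suffices to track a red mover, whose current utility is fixed by its resource's state and whose gain is fixed by the destination's pre-jump state: joining a lone opposite agent $(0,1)$ yields $p(\tfrac12)$, joining two opposite agents $(0,2)$ yields $p(\tfrac13)$, and joining a balanced pair $(1,1)$ yields $p(\tfrac23)$ (joins to empty or same-coloured resources give utility $0$ and are excluded, the former also by the impact-aware no-empty lemma). Matching these gains against the ordering above leaves exactly six improving move types: from any $0$-utility (monochromatic) resource into $(0,1)$, $(0,2)$, or $(1,1)$; from a $(2,1)$-majority into $(0,1)$ or $(0,2)$; and from a $(1,1)$-pair into $(0,2)$. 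Crucially, a minority agent at utility $p(\tfrac13)$ has no improving move, so no move ever originates from a minority slot.

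The core of the argument is a single linear potential. Writing $n_{\mathrm{bal}}$ for the number of balanced $(1,1)$ resources and $n_{\mathrm{spl}}$ for the number of $2{:}1$ resources (states $(2,1)$ or $(1,2)$), I would set $\Phi(\strat)=\alpha\, n_{\mathrm{spl}}+\beta\, n_{\mathrm{bal}}$. For each move type I compute $(\Delta n_{\mathrm{spl}},\Delta n_{\mathrm{bal}})$: a monochromatic source is unchanged, a red departure from $(2,1)$ turns it into $(1,1)$ (giving $-1,+1$), and a departure from $(1,1)$ makes it monochromatic (giving $0,-1$); on the destination side the three join types contribute $(0,+1)$, $(+1,0)$, and $(+1,-1)$. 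Summing source and destination gives the six vectors $(0,+1),(+1,0),(+1,-1),(-1,+2),(0,+1),(+1,-1)$, so the required increases are $\Delta\Phi\in\{\beta,\alpha,\alpha-\beta,2\beta-\alpha\}$. These are all positive precisely when $0<\beta<\alpha<2\beta$, which is feasible, e.g. $(\alpha,\beta)=(3,2)$. Since $\Phi$ is then integer-valued, bounded above by $3\lvert\res\rvert$, and strictly increases with every impact-aware improving move, no infinite improving sequence can exist and the IA-FIP follows.

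The main obstacle will be the interplay between the moves $(1,1)\to(2,1)$ (and its sibling $(1,1)\to(0,2)$), which convert a balanced pair into a split, and the move $(2,1)\to(0,1)$, which splits a $2{:}1$ resource into \emph{two} balanced pairs: these deviations can lower incumbents' utilities, so social welfare alone is not monotone, and the two moves pull the coefficients in opposite directions, forcing the simultaneous constraints $\beta<\alpha$ and $\alpha<2\beta$. The genuinely delicate part is therefore twofold: confirming via the colour-symmetric case check that I have captured \emph{all} improving moves (in particular that none starts from a minority position), and verifying that the resulting coefficient window $\beta<\alpha<2\beta$ is nonempty; the welfare/count bookkeeping for each individual move is otherwise routine.
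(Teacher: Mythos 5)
Your proof is correct, but it takes a genuinely different route from the paper's. Both arguments rest on the same enumeration of improving moves (your six grouped move types are exactly the twelve rows of \autoref{tab:IA_bin_tree_leq_0,4_states}, with the three monochromatic source sizes merged into one case), but the potentials differ. The paper tracks the numbers of 1-, 2- and 3-monochromatic resources: it observes that their sum never increases and then bounds the sum-preserving moves by a cascade (the 3-monochromatic count never increases, so conversions of 3- into 2-monochromatic resources are finite; these are the only moves increasing the 2-monochromatic count, so conversions of 2- into 1-monochromatic resources are finite as well). You instead track the \emph{mixed} resources and exhibit a single integer-valued potential $\Phi = 3\,n_{\mathrm{spl}} + 2\,n_{\mathrm{bal}}$ that strictly increases by at least $1$ with every improving move and is bounded above by $3|\res|$; this avoids the cascading bookkeeping and additionally yields an explicit linear bound on the length of every improving-response sequence, which the paper's argument does not provide. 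One small slip: negating the three equivalences from \autoref{lma:lambda_lower_bound_monotonical} on $\left(0,\frac{2}{5}\right)$ only gives the weak chain $p\left(\frac{1}{3}\right)\geq p\left(\frac{1}{2}\right)\geq p\left(\frac{2}{3}\right)$, not the strict chain you claim; for strictness you should instead invoke \autoref{lma:upper_bound_mon_decr_proof} with $\Delta_G(\res)=3$, which asserts strict inequalities for all $\Lambda\leq\frac{2}{5}$. This slip is harmless for your argument in any case: under the weak chain the set of actual improving moves is merely a subset of your six types, and $\Phi$ strictly increases on each of them, so the conclusion stands.
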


\begin{proof}
    From $\Lambda < \frac{2}{5}$ it follows from \autoref{lma:upper_bound_mon_decr_proof}, that $p\left(\frac{1}{3}\right) \geq p\left(\frac{1}{2}\right) > p\left(\frac{2}{3}\right)$. We start by considering the possible jumps in this setting. Whilst doing so, we observe the difference in the number of 1-,2- and 3-monochromatic resources. Note that in \autoref{tab:IA_bin_tree_leq_0,4_states} it is w.l.o.g. always a red agent, who jumps.
    \begin{table}[]
        \centering
        \begin{tabular}{||c c c c c||} 
        \hline
        From & To & 1-monochrom. & 2-monochrom. & 3-monochrom. \\ [0.5ex] 
        \hline\hline
        $\color{red} 1 \color{blue} 0$&$ \color{red} 0 \color{blue} 2$ & -1 & -1 & /\\
        $\color{red} 2 \color{blue} 0$&$ \color{red} 0 \color{blue} 2$ & +1 & -2 & /\\
        $\color{red} 3 \color{blue} 0$&$ \color{red} 0 \color{blue} 2$ & / & -1 & -1\\
        \hline
        $\color{red} 1 \color{blue} 0$&$ \color{red} 0 \color{blue} 1$ & -2 & / & /\\
        $\color{red} 2 \color{blue} 0$&$ \color{red} 0 \color{blue} 1$ & / & -1 & /\\
        $\color{red} 3 \color{blue} 0$&$ \color{red} 0 \color{blue} 1$ & -1 & +1 & -1\\
        \hline
        $\color{red} 1 \color{blue} 0$&$ \color{red} 1 \color{blue} 1$ & -1 & / & /\\
        $\color{red} 2 \color{blue} 0$&$ \color{red} 1 \color{blue} 1 $& +1 & -1 & /\\
        $\color{red} 3 \color{blue} 0$&$ \color{red} 1 \color{blue} 1 $& / & +1 & -1\\
        \hline
        $\color{red} 1 \color{blue} 1$&$ \color{red} 0 \color{blue} 2 $& +1 & -1 & /\\
        \hline
        $\color{red} 2 \color{blue} 1$&$ \color{red} 0 \color{blue} 1 $& -1 & / & /\\
        $\color{red} 2 \color{blue} 1$&$ \color{red} 0 \color{blue} 2 $& / & -1 & /\\
        \hline
        \end{tabular}
        \caption{State transitions of resources in a binary tree with $\Lambda<\frac{2}{5}$.}
        \label{tab:IA_bin_tree_leq_0,4_states}
    \end{table}

    First, we note that there are no transitions, which increase the sum of 1-,2- and 3-monochromatic resources. Furthermore, the only transitions which do not decrease this sum either decrease the amount of 3-monochromatic resources by one while increasing the number of 2-monochromatic resources by one, or decrease the amount of 2-monochromatic resources by one for an increase in 1-monochromatic resources by one. Since the initial amount of 3-monochromatic resources is finite and cannot increase, the first transition-type mentioned can only occur finitely many times. Moreover, this transition-type is the only one that increases 2-monochromatic resources, meaning that the amount of 2-monochromatic resources increases only a finite number of times too. Consequently, transitions, which decrease the number of 2-monochromatic resources by one for an increase in 1-monochromatic resources by one, occur only finitely many times. As the jumps which do not decrease the sum of 1-, 2- or 3-monochromatic resources are limited and the number of jumps decreasing the sum is limited as the number of 1-, 2- and 3-monochromatic resources can only take integer values in $[0,k]$ the IA-FIP holds.
\end{proof}

Before proving the IA-FIP on binary trees for $\Lambda \in \left[\frac{2}{5},\frac{1}{2}\right]$ as well, we first visualize the possible state transitions of resources in such a game and make several key observations, as the potential argument used previously is not suitable for this case. From the value of $\Lambda$, it follows from \autoref{lma:lambda_lower_bound_monotonical} that $p\left(\frac{1}{2}\right)>p\left(\frac{1}{3}\right)\geq p\left(\frac{2}{3}\right)$. Consequently, the possible state transitions are depicted in \autoref{fig:IA_states}.

\begin{figure}
    \centering
    \makebox[0.5\textwidth][c]{\includegraphics{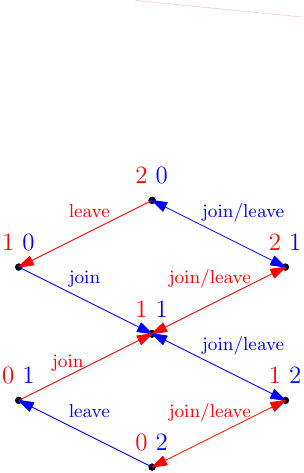}}
    \caption{Overview of transitions between resource states with \textcolor{red}{$3$} \textcolor{blue}{$0$}, \textcolor{red}{$0$} \textcolor{blue}{$3$} and \textcolor{red}{$0$} \textcolor{blue}{$0$} omitted, as the first two states can never be entered and the last one cannot be left.}
    \label{fig:IA_states}
\end{figure}

Our argument is now based on the fact, that agents jump finitely often and resources change their state only a finite number of times when certain conditions are met. As we will see later for agents this is rather simple. For resources, the following three lemmata provide us with the necessary observations.
\begin{restatable}[]{lemma}{Lemma2.10}
\label{lma:outer_loops}
The cycles $\color{red} 2 \color{blue} 1 \leftrightarrow \color{red} 2 \color{blue} 0$ \color{black} and $\color{red} 1 \color{blue} 2 \color{red} \leftrightarrow 0 \color{blue} 2$ \color{black} in the graph in \autoref{fig:IA_states} are followed finitely often by all resources.
\end{restatable}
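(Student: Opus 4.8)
The plan is to treat the two cycles symmetrically: I would fix attention on $(2,1)\leftrightarrow(2,0)$, writing each resource state as the pair $(\#_R,\#_B)$, and obtain the claim for $(1,2)\leftrightarrow(0,2)$ by swapping the two colors. First I would argue that any resource $q$ that ever follows this cycle must satisfy $deg_G(q)=3$: the state $(2,1)$ already contains three agents, so on a binary tree all three agents adjacent to $q$ — two red, say $r_1,r_2$, and one blue, say $b_1$ — are present, and in $(2,0)$ exactly $b_1$ has left. Hence, as long as $q$ stays inside the cycle, $r_1$ and $r_2$ remain on $q$ and the only agent entering and leaving is the single blue agent $b_1$; one traversal of the cycle is exactly one departure and one return of $b_1$.

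Next I would pin down $b_1$'s forced moves using the ordering $p(\tfrac12)\ge p(\tfrac13)\ge p(\tfrac23)>0$, which holds for $\Lambda\in[\tfrac25,\tfrac12]$ by \autoref{lma:lambda_lower_bound_monotonical} and underlies \autoref{fig:IA_states}. When $b_1$ performs $(2,1)\to(2,0)$ it leaves a resource on which it has utility $p(\tfrac13)$, so an improving target must give strictly more, i.e.\ the globally optimal value $p(\tfrac12)$; on a binary tree the only way to obtain it is to join a lone red agent, turning a $(1,0)$ resource $q'$ into a $(1,1)$ resource. Conversely, $(2,0)\to(2,1)$ yields $b_1$ only $p(\tfrac13)$, so its utility immediately beforehand must be strictly smaller, namely $0$ or $p(\tfrac23)$. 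Thus between a departure and the following return $b_1$ must be displaced from the optimal state $(1,1)$ on $q'$ down to utility at most $p(\tfrac23)$; since the red partner on $q'$ also enjoys the maximal utility $p(\tfrac12)$ and will never vacate it voluntarily, the only possible disruption is a blue agent joining $q'$, i.e.\ a $(1,1)\to(1,2)$ transition. This forced-move analysis already shows the cycle cannot be traversed twice without such an external disruption.

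The remaining and hardest part is to bound how often that disruption can recur, since a single resource may be pushed through the cycle again and again provided its blue agent is repeatedly displaced from a \emph{fresh} optimal home. My plan is a lexicographic potential argument, mirroring the counting of $m$-monochromatic and globally-optimal resources used earlier (cf.\ \autoref{thm:IB_monotonical}): I would take as leading coordinates the numbers of $1,2,\dots$-monochromatic resources and, below them, the number of resources currently granting some color the optimal utility $p(\tfrac12)$, and show that each completed traversal converts a lone red and a utility-$0$ blue into a maximal-utility pair while every disrupting $(1,1)\to(1,2)$ join must be paid for strictly in a higher-order coordinate. The genuine obstacle is precisely this interleaving of creations and disruptions of optimal pairs: excluding one repetition is routine, but excluding an infinite alternation is the crux. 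I expect the decisive step to be identifying the exact monovariant that each disruption strictly decreases, so that after finitely many traversals no improving move of $b_1$ remains — which is exactly the asserted finiteness.
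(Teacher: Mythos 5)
Your local forced-move analysis is correct and in fact finer than what the paper writes down: degree exactly $3$, the uniqueness of the adjacent blue agent, the necessity of a jump to a lone red agent (since $p\left(\frac{1}{2}\right)$ is the only value above $p\left(\frac{1}{3}\right)$), and the necessity of a disrupting blue join before any return. But your proof stops exactly where the lemma's content begins. You yourself name the crux — bounding how often the disruption can recur — and then defer it to ``identifying the exact monovariant,'' which you never do. What remains is a correct reduction plus a conjecture that some lexicographic potential exists; that is a genuine gap, since the finiteness of these recurrences \emph{is} the statement to be proven. Moreover, the direction you propose for closing it (counting $m$-monochromatic and optimal-utility resources as in \autoref{thm:IB_monotonical}) is the one the paper explicitly abandons for $\Lambda\in\left[\frac{2}{5},\frac{1}{2}\right]$: writing states as (red count, blue count), the transition $(2,1)\to(2,0)$ \emph{creates} a $2$-monochromatic resource while $(1,0)\to(1,1)$ \emph{destroys} a $1$-monochromatic one, so these coordinates move in both directions under improving moves and admit no obvious monotone lexicographic ordering; the paper states just before these lemmata that ``the potential argument used previously is not suitable for this case.''

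The paper's own proof uses no potential function over resources at all; it counts agents and exploits a one-way flow. Every traversal of $(2,1)\leftrightarrow(2,0)$ contains a join $(2,0)\to(2,1)$ by a blue agent whose utility immediately before the jump is strictly below $p\left(\frac{1}{3}\right)$, hence $0$ or $p\left(\frac{2}{3}\right)$, so that agent departs from a blue-majority resource of the lower cycle in \autoref{fig:IA_states} (a leave of the form $(1,2)\to(1,1)$ or $(0,2)\to(0,1)$). Since $p\left(\frac{1}{3}\right)\geq p\left(\frac{2}{3}\right)$, blue agents never \emph{jump} from an upper-cycle resource to a lower-cycle one, so these departures are one-directional migrations, and the paper concludes that across all resources the cycle is followed at most $|\mathbf{B}|-1$ times; the statement for $(1,2)\leftrightarrow(0,2)$ follows symmetrically with $|\mathbf{R}|-1$. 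That global agent-migration count is the missing idea in your proposal. It is worth noting that your disruption analysis actually probes the one subtlety the paper's terse argument glosses over — a blue agent can re-enter the lower cycle \emph{passively}, when another blue joins its $(1,1)$ resource, rather than by jumping — so a fully careful write-up would combine your local bookkeeping with the paper's migration count (each disrupting join $(1,1)\to(1,2)$ is itself made by a utility-$0$ blue agent, and the resources it locks at $(1,1)$ cannot be reused). But as submitted, your argument proves only that one traversal forces an external disruption, not that the traversals are finitely many.
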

\begin{proof}
    We consider the cycle $\color{red} 2 \color{blue} 1 \leftrightarrow \color{red} 2 \color{blue} 0$ \color{black} w.l.o.g. as the graph is symmetric for the two colors. \\
    We see that for the join $\color{red} 2 \color{blue} 0 \rightarrow \color{red} 2 \color{blue} 1$ \color{black} a leave from another resource in the form of either $\color{red} 1 \color{blue} 2 \rightarrow \color{red} 1 \color{blue} 1$ \color{black} or $\color{red} 0 \color{blue} 2 \rightarrow \color{red} 0 \color{blue} 1$ \color{black} is necessary. In both cases, a blue agent leaves a resource, which is in the lower cycle. We additionally note, that blue agents will never jump from a resource in the upper cycle to one in the lower cycle, as $p\left(\frac{1}{3}\right)>p\left(\frac{2}{3}\right)$.\\
    Thus, there can be only finitely many joins of blue agents into resources in the state $\color{red} 2 \color{blue} 0$ \color{black} and the cycle is followed at most $|\bold{B}|-1$ times and thus finitely often. Similarly, the loop between $\color{red} 0 \color{blue} 2$ \color{black} and $\color{red} 1 \color{blue} 2$ \color{black} is repeated at most  $|\bold{R}|-1$ times.
\end{proof}

\begin{restatable}[]{lemma}{Lemma2.10}
\label{lma:inner_loops}
The cycles $\color{red} 1 \color{blue} 1 \color{red} \leftrightarrow 2 \color{blue} 1$ \color{black} and $\color{red} 1 \color{blue} 1 \leftrightarrow \color{red} 1 \color{blue} 2$ \color{black} in the graph in \autoref{fig:IA_states} are followed finitely often by all resources.
\end{restatable}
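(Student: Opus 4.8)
The plan is to reduce to a single cycle by the red--blue symmetry of the transition graph in \autoref{fig:IA_states} and then bound how often the corresponding \emph{forward} transition can be performed. I treat the cycle $\color{red}1\color{blue}1\color{red}\leftrightarrow2\color{blue}1$\color{black}; the argument for $\color{red}1\color{blue}1\leftrightarrow\color{red}1\color{blue}2$\color{black} is identical with the two colours exchanged. Recall that for the present range of $\Lambda$ we have $p\left(\frac{1}{2}\right)>p\left(\frac{1}{3}\right)\geq p\left(\frac{2}{3}\right)>p(1)=0$, so a red agent's utility is $0$ on a monochromatic red resource, $p\left(\frac{1}{3}\right)$ on a $\color{red}1\color{blue}2$\color{black} resource, $p\left(\frac{1}{2}\right)$ on a $\color{red}1\color{blue}1$\color{black} resource, and $p\left(\frac{2}{3}\right)$ on a $\color{red}2\color{blue}1$\color{black} resource. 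The cycle consists of the \emph{forward} move $\color{red}1\color{blue}1\color{red}\rightarrow2\color{blue}1$\color{black} (a red agent joins) and the \emph{backward} move $\color{red}2\color{blue}1\color{red}\rightarrow1\color{blue}1$\color{black} (a red agent leaves).

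First I would pin down what can trigger the forward move. A red agent joining a $\color{red}1\color{blue}1$\color{black} resource obtains utility $p\left(\frac{2}{3}\right)$, so the move is improving only if that agent's previous utility was strictly below $p\left(\frac{2}{3}\right)$. Since $p\left(\frac{1}{3}\right)\geq p\left(\frac{2}{3}\right)$ and $p\left(\frac{1}{2}\right)>p\left(\frac{2}{3}\right)$, the only red utility below $p\left(\frac{2}{3}\right)$ is $0$. Hence every forward move is performed by a red agent that sits on a monochromatic red resource immediately before its jump. Consequently, if I can show that red agents leave monochromatic red resources only finitely often in total, then only finitely many forward moves occur; the state $\color{red}2\color{blue}1$\color{black} is entered only finitely often (its only in-neighbours are $\color{red}1\color{blue}1$\color{black} via the forward move and $\color{red}2\color{blue}0$\color{black} via the outer transition $\color{red}2\color{blue}0\color{blue}\rightarrow\color{red}2\color{blue}1$\color{black}, the latter being finite by \autoref{lma:outer_loops}); and therefore the backward move, which can only follow an entry into $\color{red}2\color{blue}1$\color{black}, occurs finitely often as well, giving the claim.

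The key step is thus to bound how often a red agent can find itself at utility $0$. A red agent never \emph{moves} to utility $0$, as such a move is not improving, so it can only \emph{become} a utility-$0$ agent when its current resource turns monochromatic red while it stays put. The only transition producing a monochromatic red resource from a non-monochromatic one is $\color{red}2\color{blue}1\color{blue}\rightarrow\color{red}2\color{blue}0$\color{black} (the single blue agent leaves); the alternative $\color{red}1\color{blue}1\color{blue}\rightarrow\color{red}1\color{blue}0$\color{black} cannot happen, since the blue agent on a $\color{red}1\color{blue}1$\color{black} resource already enjoys the globally maximal utility $p\left(\frac{1}{2}\right)$ and never leaves voluntarily. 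By \autoref{lma:outer_loops} the cycle $\color{red}2\color{blue}1\leftrightarrow\color{red}2\color{blue}0$\color{black} is followed only finitely often, so the transition $\color{red}2\color{blue}1\color{blue}\rightarrow\color{red}2\color{blue}0$\color{black} occurs only finitely many times over the whole game. Each occurrence turns exactly two red agents into utility-$0$ agents, and together with the (finitely many) agents starting at utility $0$ this bounds the total number of departures of red agents from monochromatic red resources. Each forward move is such a departure, so the number of forward moves is finite.

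The main obstacle I anticipate is precisely the interdependence between the inner and outer cycles: a forward move itself \emph{creates} a $\color{red}2\color{blue}1$\color{black} resource, which may later decay via $\color{red}2\color{blue}1\color{blue}\rightarrow\color{red}2\color{blue}0$\color{black} and thereby \emph{recharge} fresh utility-$0$ red agents, so at first sight the supply feeding the forward move is never exhausted. The argument above breaks this apparent circularity by charging each forward move to a downward transition $\color{red}2\color{blue}1\color{blue}\rightarrow\color{red}2\color{blue}0$\color{black} and importing the finiteness of those transitions from \autoref{lma:outer_loops}. The delicate point that needs the most care is therefore ensuring that \autoref{lma:outer_loops} really does bound the \emph{downward} direction of its edge --- equivalently, that blue agents abandon $\color{red}2\color{blue}1$\color{black} resources only finitely often --- rather than only the upward joins; if one wishes to avoid relying on that reading, the two cycles should be bounded simultaneously. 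Once this count is established, the backward direction follows immediately from the finiteness of entries into $\color{red}2\color{blue}1$\color{black}, and the symmetric treatment of $\color{red}1\color{blue}1\leftrightarrow\color{red}1\color{blue}2$\color{black} completes the proof.
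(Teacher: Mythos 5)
Your plan attacks the \emph{forward} transition $\color{red} 1 \color{blue} 1 \color{red} \rightarrow 2 \color{blue} 1$\color{black}\ by bounding the supply of utility-$0$ red agents, whereas the paper bounds the \emph{backward} transition $\color{red} 2 \color{blue} 1 \color{red} \rightarrow 1 \color{blue} 1$\color{black}\ directly: every red agent performing that leave lands on a resource in state $\color{red} 1 \color{blue} 1$\color{black}\ or $\color{red} 1 \color{blue} 2$\color{black}, and because $p\left(\frac{1}{2}\right)$ is maximal and $p\left(\frac{1}{3}\right)\geq p\left(\frac{2}{3}\right)$, a red agent never jumps from such a resource back to one in state $\color{red} 2 \color{blue} 1$\color{black}; each repetition of the inner cycle therefore permanently consumes a red agent, yielding the bound $|\red|-1$. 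This difference matters, because the step your version hinges on is not available.

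Concretely, you deduce from \autoref{lma:outer_loops} that the transition $\color{red} 2 \color{blue} 1 \rightarrow \color{red} 2 \color{blue} 0$\color{black}\ occurs only finitely often. \autoref{lma:outer_loops} does not give this: its statement counts round trips within the two-state set, and its proof bounds only the joins $\color{red} 2 \color{blue} 0 \rightarrow \color{red} 2 \color{blue} 1$\color{black}\ --- exactly the ``upward only'' reading whose insufficiency you flag yourself. A resource can perform the downward transition over and over without ever completing that narrow cycle and without ever receiving a blue join, namely by running the lap $\color{red} 2 \color{blue} 1 \rightarrow \color{red} 2 \color{blue} 0 \color{red} \rightarrow 1 \color{blue} 0 \rightarrow \color{red} 1 \color{blue} 1 \color{red} \rightarrow 2 \color{blue} 1$\color{black}, in which the re-entry into $\color{red} 2 \color{blue} 1$\color{black}\ is precisely the forward move of the inner cycle --- the very transition you are trying to bound. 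This makes your two counts mutually dependent: per resource, the number of downward transitions is at most the number of forward moves plus a constant, while your charging scheme gives that the number of forward moves is at most twice the number of downward transitions plus a constant; both inequalities are satisfied with both quantities infinite, so no finiteness follows. Acknowledging the issue and suggesting that ``the two cycles should be bounded simultaneously'' does not close it --- breaking exactly this circularity is the substance of the lemma, and the paper escapes it by tracking a monotone quantity over \emph{agents} (one-way migration of red agents into the lower cycle) rather than by counting resource-state transitions.
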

\begin{proof}
    We consider the cycle $\color{red} 1 \color{blue} 1 \color{red} \leftrightarrow 2 \color{blue} 1$ \color{black} w.l.o.g. as the graph is symmetric for the two colors. \\
    We see that the join $\color{red} 1 \color{blue} 1 \color{red} \rightarrow 2 \color{blue} 1$ \color{black} can only happen, if, at an adjacent resource, the leave $\color{red} 2 \color{blue} 0 \color{red} \rightarrow 1 \color{blue} 0$ \color{black} takes place. Meanwhile, the leave $\color{red} 2 \color{blue} 1 \color{red} \rightarrow 1 \color{blue} 1$ \color{black} only happens, when the agent joins an adjacent resource, such that either the transition $\color{red} 0 \color{blue} 1 \color{red} \rightarrow 1 \color{blue} 1$ \color{black} or $\color{red} 0 \color{blue} 2 \color{red} \rightarrow 1 \color{blue} 2$ \color{black} happens, which are both in the lower cycle. We observe, that, once at a resource in the lower cycle, a red agent will never jump back to a resource in the upper cycle, as $p\left(\frac{1}{3}\right)>p\left(\frac{2}{3}\right)$. Thus, with every red agent leaving from the state $\color{red} 2 \color{blue} 1$\color{black}, the number of red agents, which are in the lower cycle permanently decreases by one, and it can never increase again. From this it follows, that the cycle can be repeated at most $|\bold{R}|-1$ times and thus finitely often. Similarly, the cycle $\color{red} 1 \color{blue} 1 \leftrightarrow \color{red} 1 \color{blue} 2$ \color{black} repeated at most $|\bold{B}|-1$ times.
\end{proof}

\autoref{lma:outer_loops} and \autoref{lma:inner_loops} now allow us to formulate the following Lemma, which will serve us in the induction step later.
\begin{restatable}[]{lemma}{Lemma2.12}
    \label{lma:resource_induction_step}
    For $\Lambda \in \left[\frac{2}{5},\frac{1}{2}\right]$, if in a best response sequence either all adjacent red or all adjacent blue agents of a resource $q$ change their strategy only finitely often, then $q$ changes its state only finitely often.
\end{restatable}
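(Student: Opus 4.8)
The plan is to exploit the color symmetry of the game together with the two loop lemmas just established. By symmetry I may assume w.l.o.g.\ that all adjacent \emph{red} agents of $q$ change their strategy only finitely often. Since the red agents occupying $q$ at any time form a subset of the finitely many agents adjacent to $q$, there is a finite time $T_0$ after which no adjacent red agent moves; consequently $\#_R(q,\strat)$ is constant, say equal to $r^\ast$, from $T_0$ onwards. Hence every state change of $q$ after $T_0$ is caused by a blue agent joining or leaving, i.e.\ it is one of the transitions in \autoref{fig:IA_states} that leaves the red coordinate fixed. It therefore suffices to bound how often these red-preserving transitions can occur.

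First I would classify the red-preserving transitions and discard the impossible ones. A blue join into a blue-monochromatic resource (the transition $(0,1)\to(0,2)$) produces fraction $1$ and hence utility $p(1)=0$, so it never happens; likewise no blue agent ever leaves a resource in state $(1,1)$, since there it enjoys utility $p\!\left(\tfrac12\right)$, the global maximum by the ordering $p\!\left(\tfrac12\right)>p\!\left(\tfrac13\right)\ge p\!\left(\tfrac23\right)$ which holds for $\Lambda\in\left[\tfrac25,\tfrac12\right]$ by \autoref{lma:lambda_lower_bound_monotonical}. After removing these, the red-preserving transitions reduce to the outer loop $(2,0)\leftrightarrow(2,1)$, the inner loop $(1,1)\leftrightarrow(1,2)$, the single \emph{irreversible} edge $(1,0)\to(1,1)$ (its reverse being the just-excluded leave from $(1,1)$), and the purely decreasing leaves $(0,2)\to(0,1)\to(0,0)$ (their reverses being the excluded join into a monochromatic resource and the join into an empty one, see \autoref{IA_no_jumps_to_empty}).

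With this in hand, I would argue by the eventual value $r^\ast$ of $\#_R(q)$. If $r^\ast=0$, only the monotonically decreasing chain $(0,2)\to(0,1)\to(0,0)$ is available, so $q$ changes state finitely often. If $r^\ast=1$, the edge $(1,0)\to(1,1)$ fires at most once since it cannot be undone, after which only the inner loop $(1,1)\leftrightarrow(1,2)$ remains, traversed finitely often by \autoref{lma:inner_loops}. If $r^\ast=2$, the only available transitions form the outer loop $(2,0)\leftrightarrow(2,1)$, finite by \autoref{lma:outer_loops}. If $r^\ast=3$, then $\#_B(q,\strat)=0$ permanently, so $q$ never changes state at all. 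In every case $q$ changes its state only finitely often, and the symmetric argument covers the case where all adjacent blue agents move finitely often.

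The main obstacle I anticipate is the bookkeeping in the middle step: one must verify that the enumerated transitions really are \emph{all} the red-preserving edges of \autoref{fig:IA_states}, and—crucially—that each edge declared irreversible genuinely cannot be reversed through any alternative route once $r^\ast$ is frozen. The cleanest way to secure this is to observe that with $\#_R(q,\strat)$ fixed the reachable states are exactly $\{(r^\ast,b): r^\ast+b\le 3\}$ and that a single blue move changes $\#_B(q,\strat)$ by exactly $\pm 1$, so the only way to re-increase the blue count is a blue join whose admissibility is decided solely by the resulting fraction; checking those fractions against the utility ordering then closes every case.
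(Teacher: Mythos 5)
Your proposal is correct and takes essentially the same route as the paper's proof: both rest on the observation that once all adjacent agents of one color stop moving, the only remaining single-color transitions in the state graph of \autoref{fig:IA_states} are the loops covered by \autoref{lma:outer_loops} and \autoref{lma:inner_loops}, together with irreversible or monotonically decreasing edges. The paper packages this as a contradiction argument (infinitely many state changes would force one of the four single-color cycles to repeat infinitely often), while your direct case analysis on the frozen count $r^\ast$ is simply a more explicit rendering of the same idea.
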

\begin{proof}
    Let $q \in \res$ be a resource with either no red or no blue agents adjacent to $q$, which move infinitely often. Assume, that $q$'s state changed infinitely often. This implies, that $q$ changes its state infinitely often caused by the strategy changes of adjacent agents of only one color. Looking at \autoref{fig:IA_states}, we observe that the only cycles regarding $q$'s state, which include only transitions caused by the jump of an agent of one color, are $\color{red} 1 \color{blue} 1 \color{red} \leftrightarrow 2 \color{blue} 1$\color{black}, $\color{red} 1 \color{blue} 1 \rightarrow \color{red} 1 \color{blue} 2$\color{black}, $\color{red} 2 \color{blue} 1 \leftrightarrow \color{red} 2 \color{blue} 0$ \color{black} and $\color{red} 1 \color{blue} 2 \color{red} \leftrightarrow  0 \color{blue} 2$\color{black}. From \autoref{lma:inner_loops} it follows, that the first two cycles only repeat a finite number of times and the same follows for the latter cycles from \autoref{lma:outer_loops}. This contradicts the assumption, that $q$ changes its state only finitely often and thus the lemma holds.
\end{proof}
Finally, we can now prove the IA-FIP for the remaining $\Lambda$-interval.
\begin{restatable}[]{theorem}{theorem_2.9}
    \label{thm:bin_tree_IA_FIP_Lambda_geq_0,4}
    On any binary tree for all $\Lambda \in \left[\frac{2}{5},\frac{1}{2}\right]$ the IA-FIP holds.
\end{restatable}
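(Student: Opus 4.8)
The plan is to assume the IA-FIP fails and derive a contradiction from an extremal argument on the accessibility tree, powered on the resource side by \autoref{lma:resource_induction_step} and on the agent side by a symmetric observation. So suppose some best response sequence is infinite. Let $S \subseteq \act$ collect the agents that move infinitely often and let $T \subseteq \res$ collect the resources whose state changes infinitely often. Since every move is performed by some agent, an infinite sequence of moves forces $S \neq \emptyset$. I would then consider the subgraph $F$ of the bipartite tree $G$ induced on $S \cup T$: as a subgraph of a tree it is a forest, and since $G$ is bipartite every edge of $F$ joins an agent of $S$ to a resource of $T$, so $\deg_F(a) = |Q(a) \cap T|$ for $a \in S$, while $\deg_F(q)$ counts the $S$-agents adjacent to $q$ for $q \in T$.

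First I would lower-bound the resource degrees. By \autoref{lma:resource_induction_step}, if all red or all blue neighbours of a resource $q$ move only finitely often, then $q$ changes its state only finitely often; contrapositively, each $q \in T$ has a red neighbour and a blue neighbour that both lie in $S$, and as these are distinct this gives $\deg_F(q) \ge 2$.

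The crux is the matching agent-side bound: I would prove that an agent with at most one accessible resource in $T$ moves only finitely often, so that, contrapositively, $\deg_F(a) \ge 2$ for every $a \in S$. The argument is dynamical rather than about $\Lambda$: let $T_0$ be a time after which every resource in $Q(a) \setminus T$ has a frozen state. After $T_0$ the agent $a$ can never jump into such a frozen resource, since joining it would strictly change its state and contradict that it is frozen; hence the only admissible target after $T_0$ is the unique resource $q \in Q(a) \cap T$ (if one exists), and once $a$ occupies $q$ it cannot leave, because leaving means jumping into a frozen resource. Thus $a$ makes at most one move after $T_0$ and so moves only finitely often. This step uses only that each move changes the states of both the vacated and the entered resource, so it holds for all $\Lambda$; the hypothesis $\Lambda \in \left[\frac{2}{5}, \frac{1}{2}\right]$ enters the whole proof solely through \autoref{lma:resource_induction_step} and the cycle statements \autoref{lma:outer_loops} and \autoref{lma:inner_loops} behind it, which rely on the transition structure of \autoref{fig:IA_states}.

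Combining the two bounds, every vertex of $F$ has degree at least $2$. Since a nonempty forest always contains a vertex of degree at most $1$, the forest $F$ must be empty, contradicting $S \neq \emptyset$; therefore every best response sequence is finite and the IA-FIP holds. I expect the agent-side freezing argument to be the main obstacle: the delicate points are that distinct accessible resources may freeze at different times (so one must take a common $T_0$), that $a$ genuinely cannot re-enter a frozen resource (here the convention that agents never jump to empty resources keeps the bookkeeping clean), and that no oscillation into the single unstable resource $q$ can occur.
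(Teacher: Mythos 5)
Your proof is correct, and it takes a genuinely different route from the paper's. The paper roots the binary tree and performs a structural induction from the leaves upward: every node has at most one neighbor (its parent) outside its subtree, so the resource case follows from \autoref{lma:resource_induction_step} and the agent case from the observation that each jump changes the state of two adjacent resources. You replace this induction with a global extremal argument: the infinitely active agents $S$ and resources $T$ induce a finite forest in which, by the contrapositive of \autoref{lma:resource_induction_step}, every resource of $T$ has degree at least $2$, and, by your freezing-time argument, every agent of $S$ has degree at least $2$ --- impossible for a nonempty finite forest. The two proofs share the same resource-side engine (\autoref{lma:resource_induction_step}, hence \autoref{lma:outer_loops} and \autoref{lma:inner_loops}); they differ in the agent-side bound (the paper's pigeonhole over jumps versus your freezing argument, which prove the same statement) and in the combinatorial shell. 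Your shell buys some clarity: it needs no rooting and no leaf/internal-node case distinction, it isolates exactly where each hypothesis enters ($\Lambda$ and the degree bound of binary trees only through \autoref{lma:resource_induction_step}, acyclicity only through ``a finite nonempty forest has a vertex of degree at most $1$''), and it applies verbatim to any forest-shaped accessibility graph; the paper's induction is equally general in principle but is phrased around subtrees. One cosmetic remark: your ``at most one move after $T_0$'' can be strengthened to ``no move after $T_0$'', since the resource an agent vacates also changes its state, so an agent sitting on a frozen resource can never leave it; this also renders the aside about empty resources unnecessary.
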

\begin{proof}
    We prove that every agent and resource changes only finitely often by induction over the binary tree $G=(\act \cup \res, \edg)$ the game takes place on starting from some strategy profile $\strat$. We start by looking at the tree's leaves. If a leaf is an agent, it cannot change its strategy as it has only one adjacent resource. If it is a resource, it has at most one adjacent agent. That means the resource is adjacent to agents of at most one color. Thus, from \autoref{lma:resource_induction_step} it follows, that the resource can change its state only finitely often. \\
    Let $v \in V$ be any node in $G$ and suppose all agents and resources in $v$'s subtree move finitely often. \\
    \textbf{Case 1:} ($v\in \act$):
    Suppose, that $v$ would change its strategy infinitely often. Since every jump by an agent causes the state of two resources to change, this implies that at least two of $v$'s adjacent resources must change their state infinitely often, too. As there is at most one resource adjacent to $v$, that is not in its subtree, this contradicts the assumption, as at least one resource in $v$'s subtree would need to change its state infinitely often. \\
    \textbf{Case 2:} ($v \in \res$):
    As $v$ has at most one adjacent agent, not in its subtree, there is at most one agent adjacent to $v$, which may move infinitely often. Thus, it follows from \autoref{lma:resource_induction_step}, that $v$'s state changes a finite number of times. \\
    
    By induction, it follows that every agent and resource in the tree change their respective strategy or state only finitely often. Therefore, an IAE is reached from $\strat$ after a finite number of jumps and the IA-FIP holds.
\end{proof}

 We now obtain a coherent theorem, about the range for $\Lambda$, over which the IA-FIP holds on binary trees.
\begin{restatable}[]{theorem}{}
    \label{thm:bin_tree_IA_FIP}
    On any binary tree for all $\Lambda \in \left(0,\frac{1}{2}\right]$ the IA-FIP holds.
\end{restatable}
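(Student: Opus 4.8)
The plan is to observe that this theorem is a direct consolidation of the two results immediately preceding it, so almost no new work is required. The key observation is that the target interval decomposes cleanly as
\[
    \left(0,\tfrac{1}{2}\right] = \left(0,\tfrac{2}{5}\right) \cup \left[\tfrac{2}{5},\tfrac{1}{2}\right],
\]
with the two pieces covering the whole interval and overlapping only harmlessly at the boundary value $\tfrac{2}{5}$. Since every $\Lambda \in \left(0,\tfrac{1}{2}\right]$ falls into at least one of these two subintervals, it suffices to invoke the appropriate earlier theorem in each case.

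Concretely, I would split on the value of $\Lambda$. For $\Lambda \in \left(0,\tfrac{2}{5}\right)$, the IA-FIP holds by \autoref{thm:bin_tree_IA_FIP_Lambda_less_0,4}, whose proof rests on the monochromatic potential argument (the table of state transitions showing that the count of $1$-, $2$-, and $3$-monochromatic resources can only change finitely often). For $\Lambda \in \left[\tfrac{2}{5},\tfrac{1}{2}\right]$, the IA-FIP holds by \autoref{thm:bin_tree_IA_FIP_Lambda_geq_0,4}, whose proof uses the inductive argument over the binary tree together with the loop-finiteness \autoref{lma:outer_loops}, \autoref{lma:inner_loops}, and \autoref{lma:resource_induction_step}. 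Combining the two gives the IA-FIP for every $\Lambda$ in the union, which is exactly $\left(0,\tfrac{1}{2}\right]$.

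There is essentially no obstacle remaining here, since all the substantive difficulty has already been discharged in the two prior proofs; the only thing to verify is that the case split is genuinely exhaustive and leaves no gap. The value $\Lambda = \tfrac{2}{5}$ lies in the closed interval handled by \autoref{thm:bin_tree_IA_FIP_Lambda_geq_0,4}, so the boundary is covered, and the open left endpoint of $\left(0,\tfrac{2}{5}\right)$ matches the domain restriction $\Lambda \in (0,1)$ of the model. Hence the statement follows immediately as a corollary of the two theorems.
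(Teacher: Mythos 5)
Your proof is correct and takes exactly the same route as the paper: the paper's proof of \autoref{thm:bin_tree_IA_FIP} simply combines \autoref{thm:bin_tree_IA_FIP_Lambda_less_0,4} and \autoref{thm:bin_tree_IA_FIP_Lambda_geq_0,4}, which is precisely your case split on $\Lambda \in \left(0,\frac{2}{5}\right)$ versus $\Lambda \in \left[\frac{2}{5},\frac{1}{2}\right]$. Your added verification that the two subintervals cover $\left(0,\frac{1}{2}\right]$ without a gap is a nice touch but reflects no substantive difference.
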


\begin{proof}
    The theorem follows directly from \autoref{thm:bin_tree_IA_FIP_Lambda_less_0,4} and \autoref{thm:bin_tree_IA_FIP_Lambda_geq_0,4}.
\end{proof}

We continue by showing that the bounds in \autoref{thm:bin_tree_IA_FIP} are tight. \\
\begin{restatable}[]{lemma}{theorem_4}
    \label{counterex_IAE_bin_2}
    For $\Lambda \in \left[\frac{1}{2},1\right)$ there exists a binary tree, such that no IAE exists.
\end{restatable}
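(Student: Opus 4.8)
The plan is to exhibit, for each $\Lambda \in (1/2,1)$, an explicit binary tree on which every strategy profile admits an impact-aware improving move, so that no IAE can exist. The structural fact I would build everything on is the reversal of the inequality that drove stability in the regime $\Lambda \le 1/2$: applying \autoref{lma:lambda_lower_bound_monotonical} with $x=\tfrac13$ and $y=\tfrac23$ gives the threshold $\tfrac{x}{1-y+x}=\tfrac{1/3}{2/3}=\tfrac12$, so $p(\tfrac13)<p(\tfrac23)$ precisely when $\Lambda>\tfrac12$. Thus for $\Lambda>\tfrac12$ an agent that is a one-third minority at a degree-$3$ resource strictly prefers being a two-thirds majority, which is the engine of instability. (At $\Lambda=\tfrac12$ this becomes an equality, consistent with the IA-FIP holding there; the construction genuinely needs the strict inequality, so I would argue it for $\Lambda>\tfrac12$.)

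A first observation is that the impact-blind counterexample of \autoref{lma:bin_tree_no_IBE}, a single agent oscillating between two resources, cannot be recycled here: an impact-aware agent anticipates that moving to the better-looking resource changes that resource's state, and for a lone mover the utilities before and after a round trip coincide, so no strict improving move survives. Consequently I would use a gadget with several mobile agents, engineered so that the tree contains an unavoidable \emph{majority deficit}: in any profile some agent currently sitting at a balanced or minority position can strictly relocate to form a two-thirds majority, while the leaves of the tree pin down enough colors that such a relocating agent always exists and stabilisation is impossible. Because the accessibility graph is itself a tree (hence acyclic), the cycle must live in profile space rather than in the graph, which is why the gadget needs internal resources of degree $3$ joined through shared degree-$2$ agents, with colour-fixing leaves hanging off them.

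The verification step would be an exhaustive case analysis over all strategy profiles of the small, rigid gadget, up to the colour and left--right symmetries, exhibiting in each case a concrete deviation. I would rely only on the comparisons that are robust throughout $(1/2,1)$, namely $p(\tfrac23)>p(\tfrac13)>0$ and $p(\tfrac12)>p(\tfrac13)>0=p(0)=p(1)$, so that the transitions $\tfrac13\to\tfrac23$ and $0\to\{\tfrac12,\tfrac23\}$ are always improving regardless of where the peak sits.

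The main obstacle is that the one comparison which is \emph{not} robust, $p(\tfrac12)$ versus $p(\tfrac23)$, flips at $\Lambda=\tfrac35$ (again by \autoref{lma:lambda_lower_bound_monotonical}, via $\tfrac{1/2}{1-2/3+1/2}=\tfrac35$): for $\Lambda\in(\tfrac12,\tfrac35)$ the best achievable value is $p(\tfrac12)$, whereas for $\Lambda\in(\tfrac35,1)$ it is $p(\tfrac23)$. A naive gadget therefore risks having a profile whose only candidate deviation depends on this flipping comparison and is thus improving on only part of the range. I expect the real work to lie in designing the tree so that the perpetually available improving move always uses the robust inequality $p(\tfrac23)>p(\tfrac13)$ alone, keeping every occupied resource at degree $3$ so that only fractions in $\{0,\tfrac13,\tfrac23,1\}$ occur; failing a single uniform gadget, I would split into the sub-intervals $(\tfrac12,\tfrac35]$ and $(\tfrac35,1)$ and give a tailored construction for each. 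In either case one must still argue that \emph{no} profile -- including the asymmetric and partially-monochromatic ones -- is an equilibrium, rather than merely exhibiting one non-terminating best-response cycle.
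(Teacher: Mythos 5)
Your preliminary analysis is sound: the threshold computations via \autoref{lma:lambda_lower_bound_monotonical} (that $p(\tfrac{1}{3})<p(\tfrac{2}{3})$ exactly when $\Lambda>\tfrac{1}{2}$, and that the comparison of $p(\tfrac{1}{2})$ with $p(\tfrac{2}{3})$ flips at $\Lambda=\tfrac{3}{5}$) are correct, the observation that the lone-oscillating-agent construction of \autoref{lma:bin_tree_no_IBE} cannot be reused in the impact-aware setting is correct, and your remark about the endpoint is astute: at $\Lambda=\tfrac{1}{2}$ one has $p(\tfrac{1}{3})=p(\tfrac{2}{3})$, so the statement is in tension with \autoref{thm:bin_tree_IA_FIP}, and the paper's own Case~2 (which asserts the strict inequality $p(\tfrac{2}{3})>p(\tfrac{1}{3})$ on all of $[\tfrac{1}{2},\tfrac{3}{5})$) silently fails there. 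Nevertheless, there is a decisive gap: you never exhibit a binary tree, and you never verify that every strategy profile of any concrete instance admits an improving move. Everything after the threshold computation is a description of what a gadget \emph{should} look like, with both the construction and the exhaustive profile check explicitly deferred (``I expect the real work to lie in designing the tree\ldots''). Since the lemma is a pure existence statement, that deferred work \emph{is} the proof; what you have written is a plan for a proof, not a proof.

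Moreover, the one concrete design principle you commit to is unachievable, so the plan as stated could not be executed. You want every occupied resource to stay at full occupancy so that ``only fractions in $\{0,\tfrac{1}{3},\tfrac{2}{3},1\}$ occur,'' while the perpetually available improving move is $\tfrac{1}{3}\to\tfrac{2}{3}$. But any move by which an agent becomes a $\tfrac{2}{3}$-majority necessarily targets a resource currently holding exactly one agent of each color, i.e., a resource at fraction $\tfrac{1}{2}$; your own engine of instability thus forces the excluded fraction to appear. The paper's gadget (\autoref{fig:IA_bin_tree_counterexample}) shows the correct fix, and shows that a single construction covers the whole range: fraction $\tfrac{1}{2}$ is allowed but \emph{quarantined}. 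The unique mobile red agent $a_1$ is the only agent who ever attains fraction $\tfrac{1}{2}$, and since no other red agent neighbors its two resources it can never attain $\tfrac{2}{3}$; the two mobile blue agents attain only $\tfrac{1}{3}$, $\tfrac{2}{3}$ and $1$. Hence no agent ever needs the non-robust comparison of $p(\tfrac{1}{2})$ with $p(\tfrac{2}{3})$: the six-profile best-response cycle of \autoref{tab:IA_bin_tree_counterexample} uses only the moves $\tfrac{1}{3}\to\tfrac{2}{3}$ (blue), $\tfrac{1}{3}\to\tfrac{1}{2}$ (red), and leaving a monochromatic resource, each improving throughout $(\tfrac{1}{2},1)$, and the two off-cycle profiles feed into the cycle, so no profile is an IAE. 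Your fallback of two tailored gadgets for $(\tfrac{1}{2},\tfrac{3}{5}]$ and $(\tfrac{3}{5},1)$ could likely also be made to work, but neither gadget was produced, so the statement remains unproven.
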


\begin{proof}
    We first construct the graph depicted in \autoref{fig:IA_bin_tree_counterexample}.
    \begin{figure}
        \centering
        \makebox[\textwidth][c]{\includegraphics{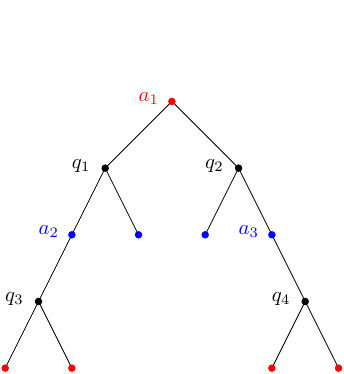}}
        \caption{Red dots are red agents, blue dots are blue agents and black dots are resources. An agent $a$ has access to a resource $q$ if and only if $a$ and $q$ are adjacent in the above graph.}
        \label{fig:IA_bin_tree_counterexample}
    \end{figure} \\
    \textbf{Case 1}: $\Lambda \in \left[\frac{3}{5},1\right)$: It follows from \autoref{lma:lower_bound_mon_incr_proof}, that $p\left(\frac{2}{3}\right)\geq p\left(\frac{1}{2}\right)>p\left(\frac{1}{3}\right)$. For this graph's states, there exists a cycle, which can be seen in \autoref{tab:IA_bin_tree_counterexample}.
    \begin{table}
        \centering
        \begin{tabular}{c c c} 
            \hline
            $a_1$ & $a_2$ & $a_3$ \\ 
            \hline\hline
            $\bold{q_1}, p\left(\frac{1}{2}\right)$ & $\bold{q_3}, p\left(\frac{1}{3}\right)$ & $\bold{q_4}, p\left(\frac{1}{3}\right)$ \\
            \hline
            $\bold{q_1}, p\left(\frac{1}{3}\right)$ & $\bold{q_1}, p\left(\frac{2}{3}\right)$ & $\bold{q_4}, p\left(\frac{1}{3}\right)$ \\
            \hline
            $\bold{q_2}, p\left(\frac{1}{2}\right)$ & $\bold{q_1}, 0$ & $\bold{q_4}, p\left(\frac{1}{3}\right)$ \\
            \hline
            $\bold{q_2}, p\left(\frac{1}{2}\right)$ & $\bold{q_3}, p\left(\frac{1}{3}\right)$ & $\bold{q_4}, p\left(\frac{1}{3}\right)$ \\
            \hline
            $\bold{q_2}, p\left(\frac{1}{3}\right)$ & $\bold{q_3}, p\left(\frac{1}{3}\right)$ & $\bold{q_2}, p\left(\frac{2}{3}\right)$ \\
            \hline
            $\bold{q_1}, p\left(\frac{1}{2}\right)$ & $\bold{q_3}, p\left(\frac{1}{3}\right)$ & $\bold{q_2}, 0$ \\
            \hline\hline
        \end{tabular}
        \caption{An overview over the possible strategy profiles in the game played on the graph in \autoref{fig:IA_bin_tree_counterexample}. Each row is a strategy profile described by the current resource and current utility (in that order) of the agent in the respective column. The strategy profile in every row is reached from the strategy profile above it, except for the top row's strategy profile, which is reached only from the bottom strategy profile.}
        \label{tab:IA_bin_tree_counterexample}
    \end{table}

    The cycle in \autoref{tab:IA_bin_tree_counterexample} contains 6 of the 8 possible strategy profiles. The remaining two strategy profiles are reached, when in the transition from the 3rd to the 4th row of the table, starting from the top, $a_4$ moves before $a_3$ and when in the transition from the 6th to the 1st row of the table $a_3$ moves prior to $a_4$. Note that in the resulting strategy profile $a_3$ and $a_4$ respectively, are the only agents with an improving move, which leads to the states in row five and two of the cycle being reached just the same, regardless of the order. Thus, from every strategy profile, the game goes on indefinitely and no IAE exists. \\

    \textbf{Case 2}: $\Lambda \in [\frac{1}{2},\frac{3}{5})$: It follows from \autoref{lma:lambda_lower_bound_monotonical}, that $p\left(\frac{1}{2}\right)\geq p\left(\frac{2}{3}\right)>p\left(\frac{1}{3}\right)$. We note that compared to case 1, only the relative order of $p\left(\frac{1}{2}\right)$ and $p\left(\frac{2}{3}\right)$ has changed. As in the graph in \autoref{fig:IA_bin_tree_counterexample} $a_1$ can never have utility $p\left(\frac{2}{3}\right)$ as $\mid A(q_1)\cap\red\mid=1=\mid A(q_2)\cap\red\mid$ and both $a_2$ and $a_3$ cannot have utility $p\left(\frac{1}{2}\right)$ as there is always one more blue agent at $q_1$ and $q_2$ and two red agents at $q_3$ and $q_4$ the identical jumps as in case 1 are improving and thus no IAE exists.
\end{proof}

Furthermore, we can show that \autoref{thm:bin_tree_IA_FIP_Lambda_geq_0,4} is tight with regard to $\Delta_G(\res)$ in that the IA-FIP may not hold for $\Lambda \in \left(\frac{\Delta_G(\res)-1}{\Delta_G(\res)^2-\Delta_G(\res)-1},\frac{1}{2}\right]$ for $\Delta_G(\res)>3$.

\begin{restatable}[]{lemma}{Lemma2.3}
     There exists a graph $G=(\act \cup \res,\edg)$ with $\Delta_G(\res)=4$ such that for all $\Lambda \in \left(\frac{3}{11},\frac{1}{2}\right]$ no IAE exists.
     \label{lma:no_IAE_bin_tree_d4}
\end{restatable}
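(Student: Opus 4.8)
The plan is to exhibit a single explicit gadget containing a degree-$4$ resource and to argue, by tracking one mobile agent that can never settle, that no strategy profile is an IAE for any $\Lambda$ in the interval. The whole construction hinges on a comparison that is invariant on the target interval: $p(\frac{1}{4}) < p(\frac{1}{3})$. By \autoref{lma:lambda_lower_bound_monotonical} applied to $x = \frac{1}{4} < y = \frac{1}{3}$, this is equivalent to $\Lambda > \frac{x}{1 - y + x}$ with $x = \frac14$ and $y = \frac13$, i.e.\ $\Lambda > \frac{3}{11}$, so it holds for every $\Lambda \in (\frac{3}{11}, \frac{1}{2}]$ and fails at the endpoint $\frac{3}{11}$. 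The fraction $\frac{1}{4}$ is attainable only at a resource of degree at least $4$, which both forces $\Delta_G(\res) = 4$ and explains why the instability is invisible on binary trees (\autoref{thm:bin_tree_IA_FIP}). Intuitively, above $\frac{3}{11}$ a minority agent that makes up only $\frac14$ of its resource strictly prefers to raise its share to $\frac13$, the opposite of the monotonically-decreasing regime $\Lambda \le U_\Delta(\res)$ supplied by \autoref{lma:upper_bound_mon_decr_proof}; this pull towards a larger minority fraction is the engine of the counterexample.

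Mirroring \autoref{counterex_IAE_bin_2}, I would introduce one mobile red agent $a$ with access to two resources and surround it with auxiliary, partly immobile agents so that the two resources of $a$ always present it with a choice between a $\frac14$-configuration (one red among three blue) and a $\frac13$-configuration (one red among two blue). By the inequality above, $a$ always has an impact-aware improving move from the $\frac14$-side to the $\frac13$-side. The auxiliary agents are wired so that $a$'s departure from a resource frees exactly the slot that lets a second agent make its own improving move, which re-establishes the $\frac14$-configuration that $a$ just escaped, so that $a$ is pulled across again. Repeating this produces an infinite best-response sequence.

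To conclude that no equilibrium exists at all, and not merely that some sequence cycles, I would enumerate the finitely many strategy profiles of the gadget and exhibit a deviator in each, as in the state table of \autoref{counterex_IAE_bin_2}. The only complication is that the ordering of the remaining utilities is not constant on $(\frac{3}{11}, \frac{1}{2}]$: by \autoref{lma:lambda_lower_bound_monotonical}, $p(\frac14)$ and $p(\frac12)$ swap at $\Lambda = \frac13$, $p(\frac13)$ and $p(\frac12)$ swap at $\Lambda = \frac25$, and $p(\frac13)$ and $p(\frac23)$ coincide only at $\Lambda = \frac12$. I would therefore split the interval at these thresholds and, in each sub-interval, verify the required deviations using the ordering valid there, always anchored by the invariant inequality $p(\frac14) < p(\frac13)$.

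The crux is exactly this loss of a uniform ordering across the interval: a gadget that visibly cycles for $\Lambda$ near $\frac12$ (where $p(\frac12)$ dominates and $p(\frac13) = p(\frac23)$ at the endpoint) need not obviously cycle near $\frac{3}{11}$, where the utility is almost monotonically decreasing and $p(\frac14)$ can even exceed $p(\frac12)$. The design must rest only on the single comparison stable across the interval, and the genuine labour lies in the exhaustive bookkeeping: checking, in each sub-case, that every reachable profile has a deviator and that no ordering of simultaneous moves lets the system stall. A final routine check is that the gadget has $\Delta_G(\res)$ equal to exactly $4$, so that no spurious fraction such as $\frac15$ becomes available and perturbs the incentives.
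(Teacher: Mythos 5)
Your proposal is correct and is essentially the paper's own proof: the same anchor inequality $p\left(\frac{1}{4}\right)<p\left(\frac{1}{3}\right)$ obtained from \autoref{lma:lambda_lower_bound_monotonical} for $\Lambda>\frac{3}{11}$, the same gadget (two resources, each with two immobile blue agents, plus one mobile red agent and one mobile blue chaser having access to both, so each resource has degree exactly $4$), and the same conclusion by enumerating all four strategy profiles and exhibiting a deviator in each, exactly as in \autoref{tab:IA_graph_counterexample_leq_0,5}. The only divergence is your anticipated sub-interval case analysis at the thresholds $\frac{1}{3}$ and $\frac{2}{5}$, which turns out to be unnecessary: in this gadget the two mobile agents only ever realize the fractions $\frac{1}{4}$, $\frac{1}{3}$, $\frac{3}{4}$ and $1$, so besides your anchor inequality the only comparison needed is $p\left(\frac{3}{4}\right)>p(1)=0$, and both hold uniformly on the whole interval $\left(\frac{3}{11},\frac{1}{2}\right]$.
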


\begin{proof}
    
    As $\Lambda \in \left(\frac{3}{11},\frac{1}{2}\right]$ it follows from \autoref{lma:lambda_lower_bound_monotonical} that $p\left(\frac{1}{4}\right)<p\left(\frac{1}{3}\right)$. Using this property, we now construct the graph in \autoref{fig:counterexample_IAE_below_0,5}. 
    \begin{figure}
        \centering
        \includegraphics[width=0.3\textwidth]{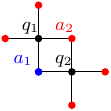}
        \caption{Red dots are red agents, blue dots are blue agents and black dots are resources. An agent $a$ has access to a resource $q$ if and only if $a$ and $q$ are adjacent in the above graph.}
        \label{fig:counterexample_IAE_below_0,5}
    \end{figure}
    For this graph, we observe the cycle of strategy profiles shown in \autoref{tab:IA_graph_counterexample_leq_0,5}.
    
     \begin{table}
        \centering
        \begin{tabular}{c c} 
            \hline
            $a_1$ & $a_2$\\ 
            \hline\hline
            $\bold{q_1}, p\left(\frac{1}{4}\right)$ & $\bold{q_1}, p\left(\frac{3}{4}\right)$\\
            \hline
            $\bold{q_2}, p\left(\frac{1}{3}\right)$ & $\bold{q_1}, 0$\\
            \hline
            $\bold{q_2}, p\left(\frac{1}{4}\right)$ & $\bold{q_1}, p\left(\frac{3}{4}\right)$\\
            \hline
            $\bold{q_1}, p\left(\frac{1}{3}\right)$ & $\bold{q_1}, 0$\\
            \hline\hline
        \end{tabular}
        \caption{An overview over the possible strategy profiles in the game played on the graph in \autoref{fig:counterexample_IAE_below_0,5}. Each row is a strategy profile described by the current resource and current utility (in that order) of the agent in the respective column. note that the strategy profile in every row is reached only from the strategy profile above it, except for the top row's strategy profile, which is reached only from the strategy profile in the bottom row.}
        \label{tab:IA_graph_counterexample_leq_0,5}
    \end{table}
    
    Note that each strategy profile in the table is reached only from the one above it, except for the top strategy profile, which is reached only from the strategy profile in the bottom row. Furthermore, in every strategy profile either $a_1$ or $a_2$, but never both, has an impact-aware improving move, as $a_1$ prefers not sharing a resource with $a_2$ since $p\left(\frac{1}{3}\right)>p\left(\frac{1}{4}\right)$, while $a_2$ seeks to share a resource with $a_1$ as $p(1)<p\left(\frac{3}{4}\right)$. Therefore, for all strategy profiles in the table, there exists only a transition to the following strategy profile in the table. Since there are a total of $2^2$ strategy profiles, all of which are part of the cycle illustrated in \autoref{tab:IA_graph_counterexample_leq_0,5}, no IAE exists.
\end{proof}
After seeing a visual counterexample in \autoref{fig:IA_bin_tree_counterexample} as a part of \autoref{lma:no_IAE_bin_tree_d4} for $\Delta_G(\res)=4$, we can use an analogous argument to see that for all $\Delta_G(\res)>3$ an instance exists such that for all $\Lambda \in \left(\frac{\Delta_G(\res)-1}{\Delta_G(\res)^2-\Delta_G(\res)-1},\frac{1}{2}\right]$ no IAE exists.
\begin{restatable}[]{theorem}{t}
\label{thm:dif_free_arbitrary}
    For all $d>3$ there exists a graph $G=(\act \cup \res,\edg)$ with $\Delta_G(\res)=d$ such that for all $\Lambda \in \left(\frac{d-1}{d^2-d-1},\frac{1}{2}\right]$ no IAE exists.
\end{restatable}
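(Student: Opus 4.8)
The plan is to generalize the construction from \autoref{lma:no_IAE_bin_tree_d4} (the case $d=4$) to arbitrary $d>3$. I would take a graph $G=(\act\cup\res,\edg)$ with exactly two resources $\res=\{q_1,q_2\}$, a single mobile red agent $a_1\in\red$ and a single mobile blue agent $a_2\in\blue$, both with access to $q_1$ and $q_2$, and in addition $d-2$ ``dummy'' blue agents attached to each $q_i$ whose only accessible resource is that $q_i$. Since $d>3$ gives $d-2\ge 2$, these sets are nonempty, and each resource then has degree exactly $d$, so $\Delta_G(\res)=d$ as required. Every dummy agent has degree $1$ and can never deviate, so the state of the game is completely determined by $(\strat(a_1),\strat(a_2))\in\{q_1,q_2\}^2$, leaving exactly $2^2=4$ strategy profiles to inspect.

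Next I would record the two $p$-comparisons that drive the cycle. First, using $x=\tfrac1d<y=\tfrac1{d-1}$ one checks that the threshold in \autoref{lma:lambda_lower_bound_monotonical} is $\tfrac{x}{1-y+x}=\tfrac{1/d}{\,1-\frac1{d-1}+\frac1d\,}=\tfrac{d-1}{d^2-d-1}$, so the hypothesis $\Lambda>\tfrac{d-1}{d^2-d-1}$ gives $p\!\left(\tfrac1d\right)<p\!\left(\tfrac1{d-1}\right)$. Second, since $\tfrac{d-1}{d}\in(0,1)$ while $p(1)=0$ by \autoref{def:p}, we have $p\!\left(\tfrac{d-1}{d}\right)>0=p(1)$.

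Then I would verify that none of the four profiles is an IAE by exhibiting a cycle $A\to B\to C\to D\to A$, where, writing pairs as $(\strat(a_1),\strat(a_2))$, $A=(q_1,q_1)$, $B=(q_2,q_1)$, $C=(q_2,q_2)$, $D=(q_1,q_2)$. Whenever $a_1$ shares a resource with $a_2$ (profiles $A$, $C$), that resource carries $1$ red and $d-1$ blue agents, so $a_1$ sees red-fraction $\tfrac1d$ and strictly improves to $p(\tfrac1{d-1})$ by moving to the other resource, which then holds $1$ red and $d-2$ blue; meanwhile $a_2$'s only alternative there is a fully blue resource (fraction $1$, utility $0$), so $a_2$ is content. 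Whenever $a_1$ and $a_2$ are split (profiles $B$, $D$), $a_2$'s resource is fully blue with utility $0$, so $a_2$ strictly improves to $p(\tfrac{d-1}{d})>0$ by joining $a_1$, while $a_1$'s only alternative would drop it from $p(\tfrac1{d-1})$ to $p(\tfrac1d)$, so $a_1$ is content. Hence in each of the four profiles at least one agent has an impact-aware improving move, and thus no profile is an IAE, which proves the claim.

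The remaining work is purely bookkeeping: confirming that each $q_i$ has degree exactly $d$ and that the induced fractions come out as precisely $\tfrac1d$, $\tfrac1{d-1}$, $\tfrac{d-1}{d}$, and $1$. I do not expect a genuine obstacle here; the only points worth stating explicitly are that the conclusion requires merely the \emph{existence} of an improving move in each profile (uniqueness, which the ``content'' checks also give, is a convenience, not a necessity), and that the degree-$1$ dummy agents are genuinely immobile so that the four-profile enumeration is exhaustive.
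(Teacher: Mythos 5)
Your proposal is correct and is essentially the paper's own proof: the paper likewise generalizes the $d=4$ gadget to two degree-$d$ resources by padding each with $d-2$ immobile dummy agents of $a_2$'s color (colors swapped relative to yours, which is immaterial), and exhibits the same four-profile cycle driven by $p\left(\frac{1}{d}\right)<p\left(\frac{1}{d-1}\right)$ and $p\left(\frac{d-1}{d}\right)>0=p(1)$. Your write-up is merely more explicit than the paper's about the threshold algebra and the exhaustiveness of the four profiles.
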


\begin{proof}
    The graph in \autoref{fig:counterexample_IAE_below_0,5} can be altered for any $d>4$ by adding $d-4$ red agents (only) adjacent to $q_1$ and doing the same for $q_2$ to pose a counterexample for $d$. In that case $a_1$ still optimizes for $p\left(\frac{1}{d-1}\right)$, whilst $a_2$ seeks $p\left(\frac{d-1}{d}\right)$ and thus we obtain a generalized cycle, which can be seen in \autoref{tab:IA_graph_counterexample_leq_0,5_generalized}, from which it follows that no IAE exists. \\
    
    \begin{table}
        \centering
        \begin{tabular}{c c} 
            \hline
            $a_1$ & $a_2$\\ 
            \hline\hline
            $\bold{q_1}, p\left(\frac{1}{d}\right)$ & $\bold{q_1}, p\left(\frac{d-1}{d}\right)$\\
            \hline
            $\bold{q_2}, p\left(\frac{1}{d-1}\right)$ & $\bold{q_1}, 0$\\
            \hline
            $\bold{q_2}, p\left(\frac{1}{d}\right)$ & $\bold{q_1}, p\left(\frac{d-1}{d}\right)$\\
            \hline
            $\bold{q_1}, p\left(\frac{1}{d-1}\right)$ & $\bold{q_1}, 0$\\
            \hline\hline
        \end{tabular}
        \caption{An overview over the possible strategy profiles in the game played on a generalization of the graph in \autoref{fig:counterexample_IAE_below_0,5} for arbitrary maximum degrees of a resource in the graph. Each row is a strategy profile described by the current resource and current utility (in that order) of the agent in the respective column. The strategy profile in every row is reached only from the strategy profile above it, except for the top row's strategy profile, which is reached only from the bottom strategy profile.}
        \label{tab:IA_graph_counterexample_leq_0,5_generalized}
    \end{table}
\end{proof}

We conclude this section by considering the case where $\Lambda$ is sufficiently small, such that a decrease in the fraction of same-type agents at a given resource leads to an increase in an agent’s utility (except for reaching a same-type fraction of $0$). Note that, in the following, we consider a fraction in $(0,\frac{1}{2})$ a minority and a fraction in $[\frac{1}{2},1)$ a majority. The following Lemma is similar to Lemma $4$ in the work of \textcite{harder_strategic_2023}.
\begin{restatable}[]{lemma}{} 
    \label{lma:IA_mon_decr_maj_to_min_pot}
    On any graph $G$, if $\Lambda\leq \frac{\Delta_G(\res)-1}{\Delta_G(\res)^2-\Delta_G(\res)-1}$ the potential function $\Phi(\strat)=\sum_{q\in\res} max(\#_R(q,\strat),\#_B(q,\strat))$ never increases. The number of steps decreasing $\Phi(\strat)$ in a sequence of improving moves is limited. 
\end{restatable}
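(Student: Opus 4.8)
The plan is to show that every impact-aware improving move changes $\Phi$ by a nonpositive amount, and then to bound the number of strictly decreasing steps by a crude integrality argument. Consider, without loss of generality, a red agent $a$ performing an improving move from $q$ to $q'$, and abbreviate $r_1=\#_R(q,\strat)$, $b_1=\#_B(q,\strat)$, $r_2=\#_R(q',\strat)$, $b_2=\#_B(q',\strat)$. Only the maxima at $q$ and $q'$ can change, so $\Phi(\strat')-\Phi(\strat)$ splits into a source term $\max(r_1-1,b_1)-\max(r_1,b_1)$ and a target term $\max(r_2+1,b_2)-\max(r_2,b_2)$. A short case distinction shows the source term equals $-1$ exactly when $r_1>b_1$ and is $0$ otherwise, while the target term equals $+1$ exactly when $r_2\ge b_2$ and is $0$ otherwise. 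Thus $\Phi(\strat')-\Phi(\strat)\le 0$ is immediate unless the target term is $+1$, and the whole argument reduces to showing that an improving move whose target term is $+1$ must also have source term $-1$.

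For this key step I would exploit the hypothesis on $\Lambda$ through \autoref{lma:upper_bound_mon_decr_proof}: since $\Lambda\le\frac{\Delta_G(\res)-1}{\Delta_G(\res)^2-\Delta_G(\res)-1}$, for all $x<y$ in $(0,1)$ we have $p(x)>p(y)$, so strictly smaller same-type fractions are strictly preferred, while $p(0)=p(1)=0$ handle the boundary. First note the move cannot make $q'$ monochromatic red, since that would give $a$ utility $p(1)=0$, which cannot strictly exceed its former utility; hence the post-join red fraction $\frac{r_2+1}{r_2+b_2+1}$ lies in $(0,1)$. If the target term is $+1$, that is $r_2\ge b_2$, then $r_2+1>b_2$ and this post-join fraction exceeds $\tfrac12$. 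The improving inequality forces the former same-type fraction to be strictly larger than the post-join fraction: when $q$ is not monochromatic this is exactly \autoref{lma:upper_bound_mon_decr_proof} read backwards on the two fractions in $(0,1)$, and when $a$ sat on a monochromatic resource ($b_1=0$, former utility $0$) it holds because the former fraction equals $1$. In both cases the former fraction exceeds $\tfrac12$, so $r_1>b_1$ and the source term is $-1$, giving $\Phi(\strat')-\Phi(\strat)=0$.

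Combining the cases yields $\Phi(\strat')-\Phi(\strat)\le 0$ for every improving move, so $\Phi$ never increases. For the finiteness claim I would observe that $\Phi(\strat)=\sum_{q\in\res}\max(\#_R(q,\strat),\#_B(q,\strat))$ is a nonnegative integer bounded above by $\sum_{q\in\res}(\#_R(q,\strat)+\#_B(q,\strat))=n$; being nonincreasing along any improving sequence, it can therefore strictly decrease at most $n$ times.

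I expect the main obstacle to be the key step of the second paragraph, namely converting the single improving inequality into the structural statement that $q$ is red-majority. This conversion hinges on \autoref{lma:upper_bound_mon_decr_proof} supplying the preference for smaller fractions throughout $(0,1)$, and on isolating the fraction-$0$ and fraction-$1$ boundary values as separate edge cases; everything else is routine bookkeeping of the two maxima.
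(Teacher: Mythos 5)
Your proof is correct and rests on the same pillars as the paper's own proof: a majority/minority case analysis at the two resources involved in a jump, and the integrality bound $0\leq\Phi\leq n$ for the finiteness claim. The genuine difference is that you prove a step the paper only asserts. The paper's proof announces ``four possible kinds of jumps'' and checks the configurations $(r_1>b_1,\,r_2<b_2)$, $(r_1=b_1,\,r_2<b_2)$, $(r_1>b_1,\,r_2\geq b_2)$ and $(r_1<b_1,\,r_2<b_2)$ (in your notation); the omitted configurations, $r_1\leq b_1$ together with $r_2\geq b_2$, are exactly those in which your target term is $+1$ while your source term is $0$, i.e.\ the only ones in which $\Phi$ would increase. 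The paper never argues why these cannot arise from an improving move, even though this exclusion is the only place where the hypothesis $\Lambda\leq\frac{\Delta_G(\res)-1}{\Delta_G(\res)^2-\Delta_G(\res)-1}$ actually enters --- the four listed cases go through for arbitrary $\Lambda$. Your second paragraph supplies precisely this missing argument, via \autoref{lma:upper_bound_mon_decr_proof} applied to the former and post-join fractions, with the empty/monochromatic boundary cases ($p(0)=p(1)=0$) separated out so that the lemma is only invoked on fractions in $(0,1)$. So your write-up follows the paper's route but closes a real gap in it; both proofs then conclude identically from the fact that a nonincreasing integer quantity in $[0,n]$ can strictly decrease at most $n$ times.
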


\begin{proof}
    The potential $\Phi$ can only take integer values in $[0,n]$, which limits the number of potential-decreasing moves. Let w.l.o.g. a red agent $a$ make an impact-aware improving move from $q$ to $q'$ changing the strategy profile from $\strat$ to $\strat'$. There are four possible kinds of jumps, between which we will differentiate. \\
    \textbf{Case 1:} ($\#_R(q,\strat)>\#_B(q,\strat)$ and $\#_R(q',\strat)<\#_B(q',\strat)$): We have 
    \begin{equation*}
        \Phi_q(\strat')=max(\#_R(q,\strat'),\#_B(q,\strat'))=max(\#_R(q,\strat),\#_B(q,\strat))-1=\Phi_q(\strat)-1
    \end{equation*} and 
    \begin{equation*}
        \Phi_{q'}(\strat')=max(\#_R(q',\strat'),\#_B(q',\strat'))=max(\#_R(q',\strat),\#_B(q',\strat))=\Phi_{q'}(\strat).
    \end{equation*} Thus, the potential decreases by one. \\
    \textbf{Case 2:} ($\#_R(q,\strat)=\#_B(q,\strat)$ and $\#_R(q',\strat)<\#_B(q',\strat)$): As blue agents stay in the majority both at $q$ and $q'$, the potential stays the same. \\
    \textbf{Case 3:} ($\#_R(q,\strat)>\#_B(q,\strat)$ and $\#_R(q',\strat)\geq\#_B(q',\strat)$): We have 
    \begin{equation*}
        \Phi_q(\strat')=max(\#_R(q,\strat'),\#_B(q,\strat'))=max(\#_R(q,\strat),\#_B(q,\strat))-1=\Phi_q(\strat)-1 
    \end{equation*} and 
    \begin{equation*}
        \Phi_{q'}(\strat')=max(\#_R(q',\strat'),\#_B(q',\strat'))=max(\#_R(q',\strat),\#_B(q',\strat))+1=\Phi_{q'}(\strat)+1
    \end{equation*}. Thus, we have $\Phi(\strat')=\Phi(\strat)-1+1$ and the potential remains unchanged. \\
    \textbf{Case 4:} ($\#_R(q,\strat)<\#_B(q,\strat)$ and $\#_R(q',\strat)<\#_B(q',\strat)$): As blue agents stay in the majority both at $q$ and $q'$, the potential stays the same.
\end{proof}
We can now extract the following result from \autoref{lma:IA_mon_decr_maj_to_min_pot}. Note that 
\begin{restatable}[]{corollary}{}
    On any graph $G$, if $\Lambda\leq \frac{\Delta_G(\res)-1}{\Delta_G(\res)^2-\Delta_G(\res)-1}$ the potential function $\Phi(\strat)=\sum_{q\in\res} max(\#_R(q,\strat),\#_B(q,\strat))$ decreases for a jump from a majority to a minority. The number of steps decreasing $\Phi(\strat)$ in a sequence of improving moves is limited.
\end{restatable}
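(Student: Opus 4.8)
The plan is to read this corollary directly off the case analysis already performed in \autoref{lma:IA_mon_decr_maj_to_min_pot}, so that essentially no new computation is required. First I would invoke the two facts established there under the same hypothesis $\Lambda\leq \frac{\Delta_G(\res)-1}{\Delta_G(\res)^2-\Delta_G(\res)-1}$: the potential $\Phi$ never increases along an impact-aware improving move, and $\Phi$ only ever takes integer values in $[0,n]$. The second sentence of the corollary is then immediate, since a non-increasing, integer-valued quantity confined to $[0,n]$ can strictly decrease at most $n$ times.

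For the first sentence, I would match a jump ``from a majority to a minority'' with Case~1 of the lemma. For a w.l.o.g.\ red agent $a$ moving from $q$ to $q'$ and sending $\strat$ to $\strat'$, being in a strict majority at the origin is the condition $\#_R(q,\strat)>\#_B(q,\strat)$, while ending up a minority at the destination is $\#_R(q',\strat')<\#_B(q',\strat')$; the latter in particular implies that $\#_R(q',\strat)<\#_B(q',\strat)$ already held, which is exactly Case~1's destination condition. I would then recompute the two affected per-resource contributions as in the lemma. At $q$, since $\#_R(q,\strat)>\#_B(q,\strat)$ we have $\#_R(q,\strat)-1\geq\#_B(q,\strat)$, so removing the red agent lowers the maximum at $q$ by exactly one. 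At $q'$ the maximum stays pinned to the unchanged blue count, leaving that contribution untouched. Summing over all resources yields $\Phi(\strat')=\Phi(\strat)-1$, the claimed strict decrease.

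The only real subtlety — and the point I would be careful to flag — is the boundary case of a tie. Under the paper's convention a same-type fraction of exactly $\tfrac{1}{2}$ counts as a majority, yet a move from such a balanced resource ($\#_R(q,\strat)=\#_B(q,\strat)$) to a minority is precisely Case~2 of the lemma, where the origin's maximum is \emph{unchanged} and hence $\Phi$ stays put. Thus the strict decrease genuinely requires the origin to carry a \emph{strict} majority, which is exactly what Case~1 encodes, and I would state this explicitly so the corollary is not misread as covering the tie boundary captured by Case~2. No step here is a genuine obstacle: the arithmetic content is entirely inherited from the lemma, and the only work is aligning the fraction-based majority/minority terminology with the count-based case conditions.
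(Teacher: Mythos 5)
Your proof is correct and follows essentially the same route as the paper: the paper likewise establishes the first sentence by identifying a majority-to-minority jump with Case 1 of \autoref{lma:IA_mon_decr_maj_to_min_pot}, and inherits the bound on the number of decreasing steps from that lemma's integrality-plus-monotonicity argument. Your remark about the tie boundary is a legitimate sharpening the paper glosses over: since the paper's convention places a fraction of exactly $\tfrac{1}{2}$ among the majorities, a Case-2 jump (tie at the origin) is literally ``from a majority to a minority'' yet leaves $\Phi$ unchanged, so the corollary does require ``majority'' to be read as \emph{strict} majority --- precisely what the paper's proof silently assumes by citing only Case 1.
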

\begin{proof}
    A jump from a majority to a minority in this setting is covered in case 1 of \autoref{lma:IA_mon_decr_maj_to_min_pot}, in which we see that $\Phi$ decreases by one. Additionally, in  \autoref{lma:IA_mon_decr_maj_to_min_pot} we have seen that the number of potential-decreasing moves is limited. Thus, the statement holds.
\end{proof}

However, while jumps from a majority to a minority are limited, whether other types of jumps can occur infinitely often remains an open question for future research. We finish this section by showing how two conventional lexicographic arguments fail to prove the IA-FIP for these other jump types in this setting.
Concretely we demonstrate, how a lexicographic argument over the minimum or maximum minorities or majorities fail to prove the IA-FIP, when taking both of the remaining jump types into account. \\
We note that for minority to minority jumps, a lexicographic argument over the maximum fractions of minorities would work.
\begin{restatable}[]{lemma}{}
    For $\Lambda\leq \frac{\Delta_G(\res)-1}{\Delta_G(\res)^2-\Delta_G(\res)-1}$ a jump from a minority to a minority decreases the maximum fraction of a minority over the two resources involved in the jump.
\end{restatable}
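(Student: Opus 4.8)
The plan is to fix, without loss of generality, a red agent $a$ performing the minority-to-minority jump from a resource $q$ to a resource $q'$, turning $\strat$ into $\strat'$, and to abbreviate $r_1 = \#_R(q,\strat)$, $b_1 = \#_B(q,\strat)$, $r_2 = \#_R(q',\strat)$, $b_2 = \#_B(q',\strat)$. With this notation the red fraction $a$ sees at $q$ before the move is $\frac{r_1}{r_1+b_1}$ and the red fraction it attains at $q'$ after the move is $\frac{r_2+1}{r_2+b_2+1}$; both lie in $\left(0,\frac12\right)$ precisely because the move goes from a minority to a minority, and in particular $r_1 \geq 1$ and $b_1 \geq 2$.

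First I would translate the improving-move hypothesis into an inequality on fractions. Since $\Lambda\leq\frac{\Delta_G(\res)-1}{\Delta_G(\res)^2-\Delta_G(\res)-1}$, \autoref{lma:upper_bound_mon_decr_proof} guarantees that $p$ is strictly decreasing across all attainable fractions in $(0,1)$. Both relevant fractions lie in $(0,1)$, so the move strictly increases $a$'s utility if and only if its own-colour fraction strictly drops, i.e.\ $\frac{r_2+1}{r_2+b_2+1} < \frac{r_1}{r_1+b_1}$.

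Next I would pin down the minority fractions at the two resources. Before the move red is the minority at $q$, so the maximum minority fraction over $\{q,q'\}$ before the move is at least $\frac{r_1}{r_1+b_1}$; this is all I need from the pre-jump side, which conveniently avoids having to discuss the state of $q'$ beforehand. After the move, red is the minority at $q'$ (since $\frac{r_2+1}{r_2+b_2+1}<\frac12$) with fraction $\frac{r_2+1}{r_2+b_2+1} < \frac{r_1}{r_1+b_1}$ by the improving-move inequality, and at $q$ the minority fraction becomes $\frac{r_1-1}{r_1+b_1-1}$, which a one-line cross-multiplication (the numerator difference is $-b_1<0$) shows is strictly below $\frac{r_1}{r_1+b_1}$, reflecting that removing a minority member lowers its share. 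Hence both surviving post-jump minority fractions lie strictly below $\frac{r_1}{r_1+b_1}$, which is at most the pre-jump maximum, so the maximum minority fraction over the two resources strictly decreases.

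The main obstacle I anticipate is not the core inequality but the degenerate states. When $r_1=1$ the resource $q$ turns monochromatic, so no minority survives at $q$ at all, and when $r_2=0$ the resource $q'$ was monochromatic before the move. I would dispatch both by observing that a vanished minority contributes fraction $0$, which only helps the bound, and that the improving-move inequality $\frac{r_2+1}{r_2+b_2+1}<\frac{r_1}{r_1+b_1}$ was derived without ever assuming $r_2\geq1$; consequently the conclusion that both post-jump minority fractions stay strictly below $\frac{r_1}{r_1+b_1}$ holds verbatim in every case, and the statement follows.
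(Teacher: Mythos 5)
Your proposal is correct and follows essentially the same route as the paper's proof: the same w.l.o.g.\ setup with $r_1,b_1,r_2,b_2$, the same two key inequalities $\frac{r_1-1}{r_1+b_1-1}<\frac{r_1}{r_1+b_1}$ and $\frac{r_2+1}{r_2+b_2+1}<\frac{r_1}{r_1+b_1}$ (the latter from the improving move under the monotonically decreasing regime of \autoref{lma:upper_bound_mon_decr_proof}), and the same conclusion that both post-jump minority fractions fall below the pre-jump minority at $q$. Your additions—explicitly invoking the monotonicity lemma, verifying that red stays a strict minority at $q'$, and dispatching the degenerate cases $r_1=1$ and $r_2=0$—only make the argument more careful than the paper's version, which implicitly assumes these away.
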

\begin{proof}
     Let w.l.o.g. a red agent make an impact-aware improving move from $q$ to $q'$, changing the strategy profile from $\strat$ to $\strat'$. Let $r_1 = \# _R(q,\bold{s})$, $b_1 = \#_B(q,\bold{s})$, $r_2 = \#_R(q',\bold{s})$, $b_2 = \#_B(q',\bold{s})$, with $r_1<b_1$ and $r_2<b_2$. Then as $\frac{r_1-1}{r_1+b_1-1}<\frac{r_1}{r_1+b_1}$ and $\frac{r_2+1}{r_2+b_2+1}<\frac{r_1}{r_1+b_1}$, where the latter holds as the move is impact-aware improving, it holds that the minority at $q$ in $\strat$ is larger than the minorities at both resources in $\strat'$.
\end{proof}
However, a jump from a majority to a majority does not necessarily follow this rule.
\begin{restatable}[]{lemma}{}
    For $\Lambda\leq \frac{\Delta_G(\res)-1}{\Delta_G(\res)^2-\Delta_G(\res)-1}$ a jump from a majority to a majority may increase the maximum fraction of a minority over the two resources involved in the jump. 
\end{restatable}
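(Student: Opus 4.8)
The statement is an existence (``may increase'') claim, so the plan is to exhibit one explicit instance: an accessibility graph $G$, a strategy profile $\strat$, and a single red agent whose unilateral jump from a red-majority resource $q$ to a second red-majority resource $q'$ is impact-aware improving, for which the maximum minority fraction over $\{q,q'\}$ strictly increases. No general argument is needed; it suffices to display the two relevant resource states and compute four fractions.

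Before fixing numbers I would pin down the qualitative mechanism. Since $\Lambda\leq\frac{\Delta_G(\res)-1}{\Delta_G(\res)^2-\Delta_G(\res)-1}$, \autoref{lma:upper_bound_mon_decr_proof} guarantees that $p$ is strictly decreasing on all of $(0,1)$, so an improving move is exactly one that strictly lowers the jumping agent's own-color fraction (without reaching $1$). When the red agent leaves $q$, the blue (minority) fraction at $q$ rises, whereas when it joins $q'$ the blue fraction at $q'$ falls. Hence the only minority fraction that can grow is the one at the source $q$, and the construction must make $q$ become (nearly) balanced after the departure, driving its minority fraction up toward $\tfrac12$, while keeping the pre-jump minority fractions of both resources strictly below that value.

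Concretely, I would take $q$ in state $(2,1)$ (two red, one blue) and $q'$ in state $(4,3)$, with one red agent $a$ having access to both and every remaining agent pinned to a single resource, so that the displayed jump is the relevant deviation. Letting $a$ move from $q$ to $q'$ turns $q$ into state $(1,1)$ and $q'$ into state $(5,3)$. The move is improving because $a$'s red fraction drops from $\tfrac23$ to $\tfrac58<\tfrac23$, and it is genuinely majority-to-majority since red has fractions $\tfrac23$ at $q$ and $\tfrac47$ at $q'$ (both exceeding $\tfrac12$) before the jump. The pre-jump minority fractions are $\tfrac13$ at $q$ and $\tfrac37$ at $q'$, so their maximum is $\tfrac37$; afterwards they are $\tfrac12$ at the now-balanced $q$ and $\tfrac38$ at $q'$, so the maximum is $\tfrac12>\tfrac37$, which is the claimed increase. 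Here $\Delta_G(\res)=8$ (attained at $q'$ after the jump), so the instance witnesses the statement for all $\Lambda\leq\tfrac{7}{55}$.

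The main obstacle is satisfying three competing constraints simultaneously: the jump must strictly decrease the mover's own fraction (improving), both endpoints must be red-majority before the jump (so it is truly ``majority to majority''), and the two pre-jump minority fractions must both stay strictly below the post-jump value $\tfrac12$. Forcing $q$ to be balanced after the departure pins the source to state $(2,1)$, which in turn leaves only a narrow window for $q'$; checking that $(4,3)$ lands in this window (and that no smaller red-majority $q'$ does) is the one calculation that needs care. A secondary point is to confirm that the instance is a legitimate accessibility graph realizing $\Delta_G(\res)=8$ together with some $\Lambda$ satisfying the hypothesis, and to note the boundary convention that the balanced resource in state $(1,1)$ indeed contributes minority fraction $\tfrac12$.
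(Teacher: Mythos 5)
Your overall strategy---exhibiting one explicit majority-to-majority jump and observing that only the source resource's minority fraction can rise---is exactly the paper's approach (the paper uses $q$ in state $(3,1)$ and $q'$ in state $(70,30)$). However, your specific numbers fail on precisely the boundary point you flag at the end and leave unresolved: the paper explicitly fixes the convention that a fraction in $\left(0,\frac{1}{2}\right)$ is a minority and a fraction in $\left[\frac{1}{2},1\right)$ is a majority. Under that convention, the balanced state $(1,1)$ that your source resource reaches after the jump has no minority at all---both colors sit at fraction $\frac{1}{2}$, which is classified as a majority. So the maximum minority fraction over $\{q,q'\}$ does not rise to $\frac{1}{2}$; after the jump it is $\frac{3}{8}$ (the blue fraction at $q'$), which is strictly less than the pre-jump maximum of $\frac{3}{7}$. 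Your instance therefore witnesses a decrease, not the claimed increase, so the proof as written does not establish the lemma.

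The repair is small and keeps your mechanism intact: choose the source state so that its post-jump minority fraction stays strictly inside $\left(0,\frac{1}{2}\right)$ while still exceeding both pre-jump minorities. This is what the paper does with $q=(3,1)$ and $q'=(70,30)$: the jump turns $q$ into $(2,1)$ with blue minority $\frac{1}{3}<\frac{1}{2}$, the pre-jump minorities are $\frac{1}{4}$ and $\frac{3}{10}$, both below $\frac{1}{3}$, and the move is improving (hence consistent with \autoref{lma:upper_bound_mon_decr_proof}) since the jumping agent's red fraction drops from $\frac{3}{4}$ to $\frac{71}{101}$. Forcing the source to become exactly balanced---which you did deliberately to drive its minority fraction ``up toward $\frac{1}{2}$''---is the one design choice the paper's convention rules out.
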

\begin{proof}
    Let w.l.o.g. a red agent make an impact-aware improving move from $q$ to $q'$, changing the strategy profile from $\strat$ to $\strat'$. Let $\# _R(q,\bold{s})=3$, $\#_B(q,\bold{s})=1$, $\#_R(q',\bold{s})=70$, $\#_B(q',\bold{s})=30$. Then blue agents have the fraction $\frac{1}{3}$ at $q$ in $\strat'$, which is greater than both fractions of minorities in $\strat$ at $\frac{1}{4}$ and $\frac{3}{10}$.
\end{proof}
Analogously, for majority to majority jumps, a lexicographic argument over the maximum fractions of majorities would be valid.
\begin{restatable}[]{lemma}{}
    For $\Lambda\leq \frac{\Delta_G(\res)-1}{\Delta_G(\res)^2-\Delta_G(\res)-1}$ a jump from a majority to a majority decreases the maximum fraction of a majority over the two resources involved in the jump.
\end{restatable}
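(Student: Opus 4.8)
The plan is to reduce the statement to two elementary fraction inequalities, mirroring the minority-to-minority lemma, while carefully tracking the one subtlety that the majority could in principle switch colour at the resource the agent leaves. First I would fix notation as in the previous lemmas: w.l.o.g.\ a red agent makes an impact-aware improving move from $q$ to $q'$, turning $\strat$ into $\strat'$, and I write $r_1=\#_R(q,\strat)$, $b_1=\#_B(q,\strat)$, $r_2=\#_R(q',\strat)$, $b_2=\#_B(q',\strat)$. That the jump goes from a majority to a majority means red is (weakly) in the majority at $q$ in $\strat$, i.e.\ $r_1\ge b_1$, and red is (weakly) in the majority at $q'$ in $\strat'$, i.e.\ $r_2+1\ge b_2$. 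Since $\Lambda\le\frac{\Delta_G(\res)-1}{\Delta_G(\res)^2-\Delta_G(\res)-1}$, \autoref{lma:upper_bound_mon_decr_proof} tells us the agent's utility rises exactly when its own fraction falls, so the improving move gives $\frac{r_2+1}{r_2+b_2+1}<\frac{r_1}{r_1+b_1}$.

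Next I would extract the structural observation that excludes the only dangerous configuration. Combining $\frac12\le\frac{r_2+1}{r_2+b_2+1}<\frac{r_1}{r_1+b_1}$ forces $\frac{r_1}{r_1+b_1}>\frac12$, hence $r_1>b_1$ and therefore $r_1-1\ge b_1$. In other words, red stays a (weak) majority at $q$ even after departing, so the majority at $q$ in $\strat'$ is still the red fraction $\frac{r_1-1}{r_1+b_1-1}$ and not a freshly created blue majority. (Note $b_1\ge1$ and $b_2\ge1$: otherwise a monochromatic resource of fraction $1$ and utility $p(1)=0$ would be the source or target, which cannot support an improving move; hence the three majority fractions used below lie in $[\tfrac12,1)$ and are well defined.)

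Then I would check that both majority fractions occurring over $\{q,q'\}$ in $\strat'$ lie strictly below the reference value $\frac{r_1}{r_1+b_1}$. At $q'$ the majority is red with fraction $\frac{r_2+1}{r_2+b_2+1}$, which is below $\frac{r_1}{r_1+b_1}$ directly by the improving-move inequality. At $q$ the majority is red with fraction $\frac{r_1-1}{r_1+b_1-1}$, and a one-line cross-multiplication (the numerator difference equals $b_1>0$) yields $\frac{r_1-1}{r_1+b_1-1}<\frac{r_1}{r_1+b_1}$. Since $\frac{r_1}{r_1+b_1}$ is itself the majority fraction at $q$ in $\strat$, it is at most the maximum majority fraction over $\{q,q'\}$ in $\strat$; consequently the maximum majority fraction over the two resources strictly decreases from $\strat$ to $\strat'$.

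The main obstacle is not the algebra but the case distinction hidden in the word ``majority'': a priori the agent leaving $q$ could flip $q$'s majority from red to blue, and the resulting blue fraction $\frac{b_1}{2b_1-1}$ would exceed $\frac12$ and might beat the reference value. The crucial point I would stress is that this flip can only occur when $r_1=b_1$, and that this tie is incompatible with a majority-to-majority jump: at $r_1=b_1$ the source fraction equals $\frac12$, so any utility-improving (hence fraction-decreasing) move pushes red strictly below $\frac12$ at $q'$, contradicting that red is a majority there in $\strat'$. Ruling out this tie is precisely what lets the single reference value $\frac{r_1}{r_1+b_1}$ dominate both post-jump majorities and yields the claim.
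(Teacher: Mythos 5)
Your proof is correct and takes essentially the same route as the paper's: both arguments compare the two post-jump majority fractions $\frac{r_1-1}{r_1+b_1-1}$ and $\frac{r_2+1}{r_2+b_2+1}$ against the single reference value $\frac{r_1}{r_1+b_1}$, with the second inequality coming from the improving move under \autoref{lma:upper_bound_mon_decr_proof}. The paper's version simply assumes strict majorities $r_1>b_1$ and $r_2>b_2$ and omits the tie/majority-flip discussion, so your additional handling of weak majorities (deriving $r_1>b_1$ from the target being a majority) is a refinement of the same argument rather than a different one.
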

\begin{proof}
     Let w.l.o.g. a red agent make an impact-aware improving move from $q$ to $q'$, changing the strategy profile from $\strat$ to $\strat'$. Let $r_1 = \# _R(q,\bold{s})$, $b_1 = \#_B(q,\bold{s})$, $r_2 = \#_R(q',\bold{s})$, $b_2 = \#_B(q',\bold{s})$, with $r_1>b_1$ and $r_2>b_2$. Then as $\frac{r_1-1}{r_1+b_1-1}<\frac{r_1}{r_1+b_1}$ and $\frac{r_2+1}{r_2+b_2+1}<\frac{r_1}{r_1+b_1}$, where the latter holds as the move is impact-aware improving, it holds that the majority at $q$ in $\strat$ is larger than the majorities at both resources in $\strat'$.
\end{proof}
However, the same does not hold for jumps from a minority to a minority.
\begin{restatable}[]{lemma}{}
    For $\Lambda\leq \frac{\Delta_G(\res)-1}{\Delta_G(\res)^2-\Delta_G(\res)-1}$ a jump from a minority to a minority may increase the maximum fraction of a majority over the two resources involved in the jump.  
\end{restatable}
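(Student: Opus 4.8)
The plan is to prove this existence statement by exhibiting a single concrete instance, in exactly the style of the preceding counterexamples: I would fix explicit occupancies $(\#_R,\#_B)$ for the two resources involved, let a red agent perform the jump, and then read off the relevant majority fractions before and after. Since the claim only asserts that such a jump \emph{may} increase the maximum majority fraction, one well-chosen witness suffices and no argument ranging over all configurations is needed.

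Concretely, I would take the source resource $q$ in state $(\#_R(q,\strat),\#_B(q,\strat)) = (2,3)$ and the destination resource $q'$ in state $(\#_R(q',\strat),\#_B(q',\strat)) = (2,5)$, and let a red agent jump from $q$ to $q'$, producing $\strat'$ with $q$ in state $(1,3)$ and $q'$ in state $(3,5)$. The verification then splits into three short checks. First, that this is genuinely a minority-to-minority jump: red has fraction $2/5 < 1/2$ at $q$ in $\strat$ and fraction $3/8 < 1/2$ at $q'$ in $\strat'$. Second, that the move is impact-aware improving: because $\Lambda\leq \frac{\Delta_G(\res)-1}{\Delta_G(\res)^2-\Delta_G(\res)-1}$, \autoref{lma:upper_bound_mon_decr_proof} makes $p$ strictly decreasing on $(0,1)$, and since $3/8 < 2/5$ we get $p(3/8) > p(2/5)$, so the red agent strictly gains (a graph with $\Delta_G(\res)=8$ accommodates these occupancies, the largest being the $8$ agents at $q'$ in $\strat'$). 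Third, the comparison of the maximum majority fraction: in $\strat$ it equals $\max(3/5,5/7) = 5/7$, whereas in $\strat'$ it equals $\max(3/4,5/8) = 3/4$, and $3/4 > 5/7$, which establishes the increase.

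The conceptual obstacle, and the reason the numbers cannot be chosen entirely freely, is a built-in tension in this low-$\Lambda$ regime. For the jump to be improving the agent must move to a strictly smaller same-type fraction, which forces $q'$ to already carry a high opposite-color majority in $\strat$ that is close to the majority the source $q$ will display once it loses the agent. The increase at $q$, whose majority fraction rises from $\frac{b_1}{r_1+b_1}$ to $\frac{b_1}{r_1+b_1-1}$, must therefore be made to overtake the majority already sitting at $q'$ in $\strat$. The key is to exploit that the $\pm 1$ shift of the denominator has a larger relative effect on a smaller resource: keeping the source total small ($5$) and the destination total slightly larger ($7$) lets the source majority jump to $3/4$ while the destination majority in $\strat$ remains at $5/7 < 3/4$. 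Making this inequality strict while simultaneously respecting the minority constraints at both endpoints is the only delicate point; once the witness is fixed, every check is immediate.
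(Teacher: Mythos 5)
Your proposal is correct and takes essentially the same approach as the paper: both exhibit a single concrete witness with a source resource in state $(2,3)$ and verify by direct computation that the move is improving (since $p$ is decreasing in this $\Lambda$-regime), is minority-to-minority, and raises the maximum majority fraction to $\frac{3}{4}$ at the vacated resource. The only difference is the choice of destination occupancies ($(2,5)$ rather than the paper's $(3,7)$), which is immaterial.
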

\begin{proof}
    Let w.l.o.g. a red agent make an impact-aware improving move from $q$ to $q'$, changing the strategy profile from $\strat$ to $\strat'$. Let $\# _R(q,\bold{s})=2$, $\#_B(q,\bold{s})=3$, $\#_R(q',\bold{s})=3$, $\#_B(q',\bold{s})=7$. Then blue agents have the fraction $\frac{3}{4}$ at $q$ in $\strat'$, which is greater than both fractions of majorities in $\strat$ at $\frac{3}{5}$ and $\frac{7}{10}$.
\end{proof}


    \makeatletter
        \def\toclevel@chapter{-1}
        \def\toclevel@section{0}
    \makeatother

    \chapter{Conclusions \& Outlook}
    In this thesis, we initiated the research of the Resource Selection Game with heterogeneous agents in combination with single-peaked utility functions. We differentiated between agents being aware of the impact of them changing their strategy (impact-aware) and agents lacking that knowledge (impact-blind). The latter may be more suited for some scenarios, like families choosing a school for their child, as there is usually a lack of information involved. \\
In the impact-blind setting, our results show that for linear $p$-functions, the game has the IB-FIP if and only if $\Lambda$ is sufficiently large such that an increase in the fraction of an agent's color at their resource causes their utility to increase as well (except for the same-type fraction $1$). How large $\Lambda$ needs to be depends on the maximum degree of a resource in $G$. For all other $\Lambda$, we show that instances exist such that no equilibrium can be reached. This suggests that, in this setting, the IB-FIP holds and the existence of an equilibrium is guaranteed if and only if the agents are quite homophilic. Additionally, for the game played on binary trees, we obtained an analogous result, but without the requirement for the $p$-function to be linear.\\
In the impact-aware setting with the game played on arbitrary graphs we demonstrated that instances for which no IAE can be reached exist for all $\Lambda$ sufficiently large such that an increase in the fraction of an agents color at their resource does not necessarily cause a decrease in utility (excluding jumps away from the fraction $0$, as they always increase utility). For the remaining (smaller) values of $\Lambda$, the question if the game is stable remains open for future research. We provided some observations and ways conventional approaches fail to aid this process. Furthermore, our main result proves tight bounds for the existence of equilibria on binary trees. In \autoref{thm:bin_tree_IA_FIP_Lambda_geq_0,4} we see that, deviating from the impact-blind setting, the properties of the binary tree make it not fall into the generalized statement for arbitrary graphs in \autoref{thm:dif_free_arbitrary}.

    \pagestyle{plain}

    \renewcommand*{\bibfont}{\small}
    \printbibheading
    \addcontentsline{toc}{chapter}{Bibliography}
    \printbibliography[heading = none]

    \addchap{Declaration of Authorship}
    I hereby declare that this thesis is my own unaided work. All direct or indirect sources used are acknowledged as references.\\[6 ex]

\begin{flushleft}
    Potsdam, \today
    \hspace*{2 em}
    \raisebox{-0.9\baselineskip}
    {
        \begin{tabular}{p{5 cm}}
            \hline
            \centering\footnotesize\printAuthor
        \end{tabular}
    }
\end{flushleft}

\end{document}